\tikzset{  
  >=stealth',
  plainnode/.style 	= {draw, thick,circle, minimum size=8mm, inner sep=0mm}
}
\newcommand{\nats}{\mathbb{N}}
\renewcommand{\epsilon}{\varepsilon}
\renewcommand{\phi}{\varphi}
\newcommand{\size}[1]{|#1|}
\newcommand{\pow}[1]{2^{#1}}
\newcommand{\cceq}{\mathop{::=}}
\newcommand{\set}[1]{\{#1\}}
\newcommand{\F}{\mathop{\mathbf{F}\vphantom{a}}\nolimits}
\newcommand{\G}{\mathop{\mathbf{G}\vphantom{a}}\nolimits}
\DeclareMathOperator{\U}{\mathbf{U}}
\newcommand{\X}{\mathop{\mathbf{X}\vphantom{a}}\nolimits}
\newcommand{\ltl}{\mathrm{LTL}\xspace}
\newcommand{\hyltl}{\mathrm{Hyper\-LTL}\xspace}
\newcommand{\hyqptl}{\mathrm{Hyper\-QPTL}\xspace}
\newcommand{\hyqptlplus}{\mathrm{Hyper\-QPTL^+}\xspace}
\newcommand{\sohyltl}{\mathrm{Hyper^2LTL}\xspace}
\newcommand{\sohyltlfp}{\mathrm{Hyper^2LTL_{\mathrm{mm}}}\xspace}
\newcommand{\sohyltlfpmax}{\mathrm{Hyper^2LTL_{\mathrm{mm}}^{\largest}}\xspace}
\newcommand{\sohyltlfpmin}{\mathrm{Hyper^2LTL_{\mathrm{mm}}^{\smallest}}\xspace}
\newcommand{\sohyltlfpsmalar}{\mathrm{Hyper^2LTL_{\mathrm{mm}}^{\smalar}}\xspace}
\newcommand{\lfpsohyltlfp}{\mathrm{lfp\text{-}{Hyper^2LTL_{\mathrm{mm}}}}\xspace}
\newcommand{\hyctlstar}{\mathrm{HyperCTL^*}\xspace}
\newcommand{\sohyltlfpold}{\mathrm{Hyper^2LTL_{\mathrm{fp}}}\xspace}
\newcommand{\suffix}[2]{#1[#2,\infty)}
\newcommand{\fovar}{\mathcal{V}_1}
\newcommand{\sovar}{\mathcal{V}_2}
\newcommand{\ap}[0]{\mathrm{AP}}
\newcommand{\univar}{X_a}
\newcommand{\unidisvar}{X_d}
\newcommand{\smallest}{\curlyvee}
\newcommand{\largest}{\curlywedge}
\newcommand{\smalar}{{\mathrlap{\curlyvee}\curlywedge}}
\newcommand{\solutions}{\mathrm{sol}}
\newcommand{\pspace}{\textsc{PSpace}\xspace}
\newcommand{\tower}{\textsc{Tower}\xspace}
\newcommand{\myquot}[1]{``#1''}
\newcommand{\tsys}{\mathfrak{T}}
\newcommand{\traces}{\mathrm{Tr}}
\newcommand{\initmark}{I}
\newcommand{\posprop}{\texttt{+}}
\newcommand{\negprop}{\texttt{-}}
\newcommand{\setprop}{\texttt{s}}
\newcommand{\inprop}{\texttt{x}}
\newcommand{\prop}{\texttt{p}}
\newcommand{\argone}{\texttt{arg1}}
\newcommand{\argtwo}{\texttt{arg2}}
\newcommand{\res}{\texttt{res}}
\newcommand{\add}{\texttt{add}}
\newcommand{\mult}{\texttt{mult}}
\newcommand{\plustimes}{{(+,\cdot)}}
\newcommand{\hyperize}{{\mathit{hyp}}}
\newcommand{\arithmetize}{{\mathit{ar}}}
\newcommand{\allsets}{{\mathit{allSets}}}
\newcommand{\pair}{\mathit{pair}}
\newcommand{\istrace}{\mathit{isTrace}}
\newcommand{\arith}{\mathit{arith}}
\newcommand{\alltraces}{\mathit{allTraces}}
\newcommand{\parti}{\mathit{part}}
\newcommand{\update}{\mathit{update}}
\newcommand{\ispath}{\mathit{isPath}}
\newcommand{\prefs}{\mathit{Prefs}}
\newcommand{\natsstruct}{(\nats, +, \cdot, <, \in)}
\newcommand{\con}{\mathrm{con}}
\newcommand{\step}{\mathrm{step}}
\newcommand{\tracein}{\triangleright}
\newcommand{\quant}{Q}
\newcommand{\overdot}[1]{\dot{#1}}
\newcommand{\overdotdot}[1]{\ddot{#1}}
\newcommand{\cw}{\mathrm{cw}}
\newcommand{\lfp}{\mathrm{lfp}}
\newcommand{\equals}[3]{#1 =_{#3} #2}
\newcommand{\marker}{\texttt{m}}
\newcommand{\complete}{\mathit{complete}}
\newcommand{\merge}{^\smallfrown}
\newcommand{\guard}{\mathit{guard}}
\newcommand{\enc}{\mathit{enc}}
\newcommand{\extend}{\mathit{ext}}
\newcommand{\hastree}{\mathit{hasTree}}
\newcommand{\encode}[1]{\langle #1\rangle}
\newcommand{\vertices}{\mathit{V}}
\newcommand{\edges}{\mathit{E}}
\newcommand{\edge}{\mathit{edge}}
\newcommand{\prefix}{\mathit{prefix}}
\newcommand{\prefixes}{\mathit{prefixes}}
\newcommand{\rel}{\mathit{rel}}
\newcommand{\init}{\mathit{init}}
\newcommand{\consistent}{\mathit{cons}}
\newtheorem{remark}{Remark}
\newtheorem{lemma}{Lemma}
\newtheorem{theorem}{Theorem}
\newtheorem{proposition}{Proposition}
\newtheorem{corollary}{Corollary}
\newcommand{\texorpdfstring}[2]{#1}
\title{The Complexity of Fragments of Second-Order {HyperLTL}\thanks{Supported by DIREC – Digital Research Centre Denmark and the European Union. 

\includegraphics[scale=.15]{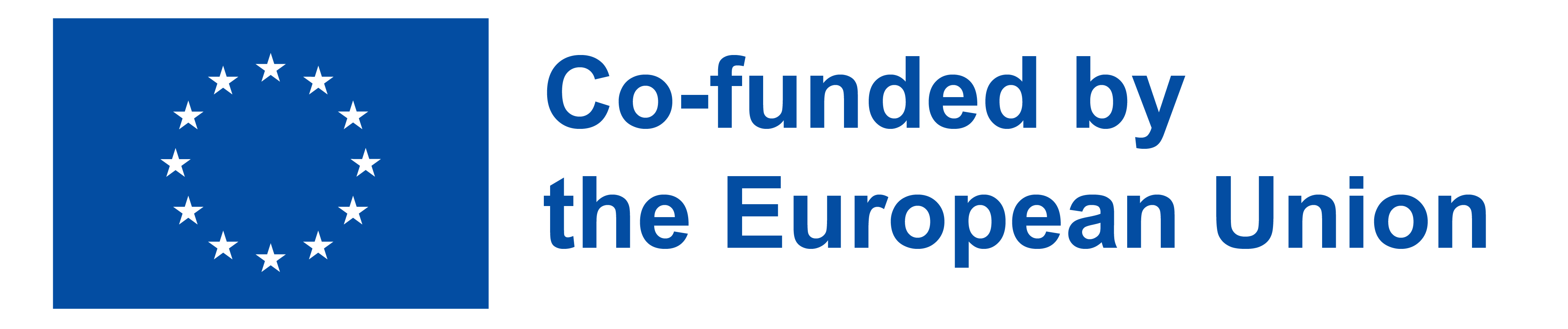}}}
\author{
Gaëtan Regaud (ENS Rennes, Rennes, France)\\
Martin Zimmermann (Aalborg University, Aalborg, Denmark)}
\date{}
\begin{document}

\maketitle

\begin{abstract}
We settle the complexity of satisfiability, finite-state satisfiability, and model-checking for several fragments of second-order HyperLTL, which extends HyperLTL with quantification over sets of traces: they are all in the analytical hierarchy and beyond.
\end{abstract}

%################################################
%################################################
%################################################
\section{Introduction}
\label{sec:intro}

The temporal logic~$\hyltl$~\cite{ClarksonFKMRS14} has been introduced to express hyperproperties~\cite{ClarksonS10}, i.e., properties that involve multiple execution traces of a system. 
Many important information-flow specifications are hyperproperties, and expressible in $\hyltl$~\cite{FinkbeinerRS15}.
As $\hyltl$ model-checking is decidable~\cite{ClarksonFKMRS14}, this enables the automated verification of a wide range of expressive specifications. 

On a technical level, $\hyltl$ extends the classical linear-time logic $\ltl$~\cite{Pnueli77} (which is evaluated over a single execution trace) by quantification over traces, and is therefore evaluated over a set of traces. 
However, there are also important hyperproperties that are \emph{not} expressible in $\hyltl$ (nor its branching-time cousin $\hyctlstar$~\cite{ClarksonFKMRS14}), e.g., asynchronous hyperproperties and common knowledge in multi-agent systems. 
Trace quantification alone is not sufficient to express these properties.

This motivated Beutner et al.~\cite{BeutnerFFM24} to study $\hyltl$ extended by second-order quantification, i.e., quantification over sets of traces. In particular, they showed that model-checking the resulting logic, termed $\sohyltl$, is $\Sigma_1^1$-hard, i.e., highly-undecidable.
This high complexity led them to study fragments with restricted second-order quantification that are still able to express asynchronous hyperproperties and common knowledge.
In particular, they introduced
\begin{itemize}
    \item $\sohyltlfp$, which restricts second-order quantification to maximal/minimal sets satisfying a guard formula\footnote{In~\cite{DBLP:conf/cav/BeutnerFFM23}, this fragment is termed $\sohyltlfpold$.}, and 
    \item $\lfpsohyltlfp$, which restricts second-order quantification to least fixed points of $\hyltl$-definable operators. Note that in this fragment, quantification degenerates to the \myquot{computation} of unique fixed points. 
\end{itemize}
The main result of Beutner et al.\ is a partial model-checking algorithm for $\lfpsohyltlfp$: Their algorithm over- and underapproximates fixed points and then
invokes a $\hyltl$ model-checking algorithm on these approximations. A prototype implementation of the algorithm is able to model-check properties capturing asynchronous hyperproperties and common knowledge, among other properties.

Beutner et al.'s focus was on algorithmic aspects, e.g., model-checking and monitoring~\cite{BeutnerFFM24}.
On the other hand, Frenkel and Zimmermann studied the complexity of the most important verification problems for $\sohyltl$ and its fragments, i.e., satisfiability, finite-state satisfiability, and model-checking~\cite{sohypercomplexity}. 
They showed that all three are equivalent to truth in third-order arithmetic, i.e., predicate logic with quantification over natural numbers, sets of natural numbers, and sets of sets of natural numbers.
These results hold both for full $\sohyltl$ and the fragment~$\sohyltlfp$ and for two different semantics:
the standard semantics introduced by Beutner et al.\ (where set quantifiers range over arbitrary sets) and closed-world semantics (where second-order quantifiers range only over subsets of the model).
Furthermore, they showed that $\lfpsohyltlfp$ has better properties, even though all three problems are still highly undecidable:
$\lfpsohyltlfp$ satisfiability under closed-world semantics is $\Sigma_1^1$-hard, i.e., not harder than $\hyltl$ satisfiability~\cite{hyperltlsatconf}, and finite-state satisfiability and model-checking are $\Sigma_1^1$-hard and in $\Sigma_2^2$ (under both semantics). 

Thus, there are several gaps in their results, i.e., the complexity of finite-state satisfiability and model-checking for $\lfpsohyltlfp$ and the complexity of $\lfpsohyltlfp$ satisfiability under standard semantics.
Finally, their results for $\sohyltlfp$ rely on using both minimal and maximal sets satisfying guards, i.e., the complexity of the fragments using only one type of \myquot{polarity} is still open. 

\subsection{Our Contributions}
In this work, we settle the complexity of all these open problems: $\lfpsohyltlfp$ satisfiability under standard semantics is $\Sigma_1^2$-complete, finite-state satisfiability and model-checking for $\lfpsohyltlfp$ are equivalent to truth in second-order arithmetic (under both semantics) and all three problems are equivalent to truth in third-order arithmetic for both unipolar fragments of $\sohyltlfp$ (again under both semantics).
Table~\ref{table_results} lists our results and compares them to related logics.

\begin{table}[h]
    \centering
    \caption{List of our results (in bold and red) and comparison to related logics~\cite{FinkbeinerH16,hyperltlsat,sohypercomplexity,regaud2024complexityhyperqptl}. \myquot{T2A-equivalent} stands for \myquot{equivalent to truth in second-order arithmetic}, \myquot{T3A-equivalent} for \myquot{equivalent to truth in third-order arithmetic}. Unless explicitly specified, results hold for both semantics.}
    \renewcommand{\arraystretch}{1.1}
    \begin{tabular}{llll}
      \small   Logic &\small  Satisfiability  &\small  Finite-state satisfiability  &\small  Model-checking \\
         \midrule
        $\ltl$ &  \pspace-complete &  \pspace-complete &  \pspace-complete\\
        \rowcolor{lightgray!40}$\hyltl$ & $\Sigma_1^1$-complete  & $\Sigma_1^0$-complete  & \tower-complete \\
        $\sohyltl$ &  T3A-equivalent &  T3A-equivalent &  T3A-equivalent\\
        \rowcolor{lightgray!40}$\sohyltlfp$ &  T3A-equivalent &  T3A-equivalent &  T3A-equivalent\\
        $\sohyltlfpmax$ & \textcolor{Maroon}{\textbf{T3A-equivalent}} &  \textcolor{Maroon}{\textbf{T3A-equivalent}} &  \textcolor{Maroon}{\textbf{T3A-equivalent}}\\
        \rowcolor{lightgray!40}$\sohyltlfpmin$ &  \textcolor{Maroon}{\textbf{T3A-equivalent}} &  \textcolor{Maroon}{\textbf{T3A-equivalent}} &  \textcolor{Maroon}{\textbf{T3A-equivalent}}\\
        $\lfpsohyltlfp$ &  \textcolor{Maroon}{\bf  \boldmath$\Sigma_1^2$-complete}/ &  \textcolor{Maroon}{\textbf{T2A-equivalent}} &  \textcolor{Maroon}{\textbf{T2A-equivalent}}\\
                        & $\Sigma_1^1$-complete (CW) &&\\
        \rowcolor{lightgray!40}$\hyqptl$ &  $\Sigma_1^2$-complete & $\Sigma_1^0$-complete & $\tower$-complete\\
        $\hyqptlplus$ & T3A-equivalent &  T3A-equivalent &  T3A-equivalent\\
    \end{tabular}
    \label{table_results}
\end{table}

%################################################
%################################################
%################################################
\section{Preliminaries}
\label{sec:prels}

We denote the nonnegative integers by $\nats$. 
An alphabet is a nonempty finite set. 
The set of infinite words over an alphabet~$\Sigma$ is denoted by $\Sigma^\omega$.
Let $\ap$ be a nonempty finite set of atomic propositions. 
A trace over $\ap$ is an infinite word over the alphabet~$\pow{\ap}$.
Given a subset~$\ap' \subseteq \ap$, the $\ap'$-projection of a trace~$t(0)t(1)t(2) \cdots$ over $\ap$ is the trace~$(t(0) \cap \ap')(t(1) \cap \ap')(t(2) \cap \ap') \cdots$ over $\ap'$.
The $\ap'$-projection of $T \subseteq (\pow{\ap})^\omega$ is defined as the set of $\ap$-projections of traces in $T$.
Now, let $\ap$ and $\ap'$ be two disjoint sets, let $t$ be a trace over~$\ap$, and let $t'$ be a trace over $\ap'$. 
Then, we define $t \merge t'$ as the pointwise union of $t$ and $t'$, i.e., $t \merge t'$ is the trace over $\ap \cup \ap'$ defined as $(t(0) \cup t'(0))(t(1) \cup t'(1))(t(2) \cup t'(2))\cdots$.

A transition system~$\tsys = (V,E,I, \lambda)$ consists of a finite nonempty set~$V$ of vertices, a set~$E \subseteq V \times V$ of (directed) edges, a set~$I \subseteq V$ of initial vertices, and a labeling~$\lambda\colon V \rightarrow \pow{\ap}$ of the vertices by sets of atomic propositions.
We assume that every vertex has at least one outgoing edge.
A path~$\rho$ through~$\tsys$ is an infinite sequence~$\rho(0)\rho(1)\rho(2)\cdots$ of vertices with $\rho(0) \in I$ and $(\rho(n),\rho(n+1))\in E$ for every $n \ge 0$.
The trace of $\rho$ is defined as $ \lambda(\rho ) = \lambda(\rho(0))\lambda(\rho(1))\lambda(\rho(2))\cdots$.
The set of traces of $\tsys$ is $\traces(\tsys) = \set{\lambda(\rho) \mid \rho \text{ is a path through $\tsys$}}$.

%################################################
%################################################
%################################################
\subparagraph*{\texorpdfstring{\boldmath$\sohyltl$}{Second-order HyperLTL}.}
\label{subsec_hyperltl}

Let $\fovar$ be a set of first-order trace variables (i.e., ranging over traces) and $\sovar$ be a set of second-order trace variables (i.e., ranging over sets of traces) such that $\fovar \cap \sovar = \emptyset$.
We typically use $\pi$ (possibly with decorations) to denote first-order variables and $X, Y, Z$ (possibly with decorations) to denote second-order variables.
Also, we assume the existence of two distinguished second-order variables~$\univar, \unidisvar  \in \sovar$ such that $\univar$ refers to the set~$(\pow{\ap})^\omega$ of all traces, and $\unidisvar$ refers to the universe of discourse (the set of traces the formula is evaluated over).

The formulas of $\sohyltl$ are given by the grammar
\begin{align*}
\phi  {} \cceq {} \exists X.\ \phi \mid \forall X.\ \phi \mid \exists \pi \in X.\ \phi \mid \forall \pi \in X.\ \phi \mid \psi \quad\quad
\psi {}  \cceq {} \prop_\pi \mid \neg \psi \mid \psi \vee \psi \mid \X \psi \mid \psi \U \psi    
\end{align*}
where $\prop$ ranges over $\ap$, $\pi$ ranges over $\fovar$, $X$ ranges over $\sovar$, and $\X$ (next) and $\U$ (until) are the temporal operators. Conjunction~($\wedge$), exclusive disjunction~$(\oplus)$, implication~($\rightarrow$), and equivalence~$(\leftrightarrow)$ are defined as usual, and the temporal operators eventually~($\F$) and always~($\G$) are derived as $\F\psi = \neg \psi\U \psi$ and $\G \psi = \neg \F \neg \psi$. We measure the size of a formula by its number of distinct subformulas.

The semantics of $\sohyltl$ is defined with respect to a variable assignment, i.e., a partial mapping~$\Pi \colon \fovar \cup \sovar \rightarrow (\pow{\ap})^\omega \cup \pow{(\pow{\ap})^\omega}$
such that
\begin{itemize}
    \item if $\Pi(\pi)$ for $\pi\in\fovar$ is defined, then $\Pi(\pi) \in (\pow{\ap})^\omega$ and
    \item if $\Pi(X)$ for $X \in\sovar$ is defined, then $\Pi(X) \in \pow{(\pow{\ap})^\omega}$.
\end{itemize}
Given a variable assignment~$\Pi$, a variable~$\pi \in \fovar$, and a trace~$t$, we denote by $\Pi[\pi \mapsto t]$ the assignment that coincides with $\Pi$ on all variables but $\pi$, which is mapped to $t$. 
Similarly, for a variable~$X \in \sovar$ and a set~$T$ of traces, $\Pi[X \mapsto T]$ is the assignment that coincides with $\Pi$ everywhere but $X$, which is mapped to $T$.
Furthermore, $\suffix{\Pi}{j}$ denotes the variable assignment mapping every $\pi \in \fovar$ in $\Pi$'s domain to $\Pi(\pi)(j)\Pi(\pi)(j+1)\Pi(\pi)(j+2) \cdots $, the suffix of $\Pi(\pi)$ starting at position $j$ (the assignment of variables~$X \in \sovar$ is not updated). %, as this is not necessary for our application).

For a variable assignment~$\Pi$ we define 
\begin{itemize}
	\item $\Pi \models \prop_\pi$ if $\prop \in \Pi(\pi)(0)$,
	\item $\Pi \models \neg \psi$ if $\Pi \not\models \psi$,
	\item $\Pi \models \psi_1 \vee \psi_2 $ if $\Pi \models \psi_1$ or $\Pi  \models \psi_2$,
	\item $\Pi \models \X \psi$ if $\suffix{\Pi}{1} \models \psi$,
	\item $\Pi \models \psi_1 \U \psi_2$ if there exists a $j \ge 0$ such that $\suffix{\Pi}{j} \models \psi_2$ and for all $0 \le j' < j$ we have $ \suffix{\Pi}{j'} \models \psi_1$ ,
	\item $\Pi \models \exists \pi \in X.\ \phi$ if there exists a trace~$t \in \Pi(X)$ such that $\Pi[\pi \mapsto t] \models \phi$ ,
	\item $\Pi \models \forall \pi \in X.\ \phi$ if for all traces~$t \in \Pi(X)$ we have $\Pi[\pi \mapsto t] \models \phi$,
    \item $\Pi \models \exists X.\ \phi$ if there exists a set~$T \subseteq (\pow{\ap})^\omega$ such that $\Pi[X\mapsto T] \models \phi$, and
    \item $\Pi \models \forall X.\ \phi$ if for all sets~$T \subseteq (\pow{\ap})^\omega$ we have $\Pi[X\mapsto T] \models \phi$.
\end{itemize}

A sentence is a formula in which only the variables $\univar,\unidisvar$ can be free. 
The variable assignment with empty domain is denoted by $\Pi_\emptyset$. 
We say that a set~$T$ of traces satisfies a $\sohyltl$ sentence~$\phi$, written $T \models \phi$, if $\Pi_\emptyset[\univar \mapsto (\pow{\ap})^\omega, \unidisvar \mapsto T]\models \phi$, i.e., if we assign the set of all traces to~$\univar$ and the set~$T$ to the universe of discourse~$\unidisvar$.
In this case, we say that $T$ is a model of $\varphi$.
A transition system~$\tsys$ satisfies $\phi$, written $\tsys \models \phi$, if $\traces(\tsys)\models \phi$. 

Although $\sohyltl$ sentences are required to be in prenex normal form, $\sohyltl$ sentences are closed under Boolean combinations, which can easily be seen by transforming such a sentence into an equivalent one in prenex normal form (which might require renaming of variables).
Thus, in examples and proofs we will often use Boolean combinations of $\sohyltl$ sentences. 

\begin{remark}
\label{rem_hyltlisfragment}
$\hyltl$ is the fragment of $\sohyltl$ obtained by disallowing second-order quantification and only allowing first-order quantification of the form~$\exists \pi \in \unidisvar$ and $\forall \pi \in \unidisvar$, i.e., one can only quantify over traces from the universe of discourse.
Hence, we typically simplify our notation to $\exists\pi$ and $\forall\pi$ in $\hyltl$ formulas.
\end{remark}

Throughout the paper, we use the following shorthands to simplify our formulas:
\begin{itemize}
    
    \item We write $\equals{\pi}{\pi'}{\ap'}$ for a set~$\ap' \subseteq \ap$ for the formula~$ \G\bigwedge_{\prop\in\ap'}(\prop_{\pi} \leftrightarrow \prop_{\pi'})$ expressing that the $\ap'$-projection of $\pi$ and the $\ap'$-projection of $\pi'$ are equal.
    
    \item We write $\pi\tracein X $  for the formula~$\exists \pi' \in X.\ \equals{\pi}{\pi'}{\ap}$ expressing that the trace~$\pi$ is in $X$. Note that this shorthand cannot be used under the scope of temporal operators, as we require formulas to be in prenex normal form.

\end{itemize}

\subparagraph*{Closed-World Semantics.}
Second-order quantification in $\sohyltl$ as defined by Beutner et al.~\cite{DBLP:conf/cav/BeutnerFFM23} (and introduced above) ranges over arbitrary sets of traces (not necessarily from the universe of discourse) and first-order quantification ranges over elements in such sets, i.e., (possibly) again over arbitrary traces.
\emph{Closed-world} semantics for $\sohyltl$, introduced by Frenkel and Zimmermann~\cite{sohypercomplexity}, disallow this: Formulas may not use the variable~$\univar$ and the semantics of set quantifiers are changed as follows, 
where the closed-world semantics of atomic propositions, Boolean connectives, temporal operators, and trace quantifiers is defined as before:
\begin{itemize}
    \item $\Pi \models_\cw \exists X.\ \phi$ if there exists a set~$T \subseteq \Pi(\unidisvar)$ such that $\Pi[X\mapsto T] \models_\cw \phi$, and 
    \item $\Pi \models_\cw \forall X.\ \phi$ if for all sets~$T \subseteq \Pi(\unidisvar)$ we have $\Pi[X\mapsto T] \models_\cw \phi$.
\end{itemize}
We say that $T \subseteq (\pow{\ap})^\omega$ satisfies $\phi$ under closed-world semantics, if $\Pi_\emptyset[\unidisvar \mapsto T] \models_\cw \phi$.
Hence, under closed-world semantics, second-order quantifiers only range over subsets of the universe of discourse. 
Consequently, first-order quantifiers also range over traces from the universe of discourse. 

%################################################
%################################################
%################################################
\subparagraph*{Arithmetic.}
\label{subsec_arithmetic}

To capture the complexity of undecidable problems, we consider formulas of arithmetic, i.e., predicate logic with signature~$(+, \cdot, <, \in)$, evaluated over the structure~$\natsstruct$. 
A type~$0$ object is a natural number in $\nats$, a type~$1$ object is a subset of $\nats$, and a type~$2$ object is a set of subsets of $\nats$.
% In the following, we use lower-case roman letters (possibly with decorations) for first-order variables, upper-case roman letters (possibly with decorations) for second-order variables, and upper-case calligraphic roman letters (possibly with decorations) for third-order variables. 

First-order arithmetic allows to quantify over type~$0$ objects, second-order arithmetic allows to quantify over type~$0$ and type~$1$ objects, and third-order arithmetic allows to quantify over type~$0$, type~$1$, and type~$2$ objects.
Note that every fixed natural number is definable in first-order arithmetic, so we freely use them as syntactic sugar. Similarly, equality can be eliminated if necessary, as it can be expressed using~$<$.

Truth in second-order arithmetic is the following problem: given a sentence~$\phi$ of second-order arithmetic, does $\natsstruct $ satisfy $\phi$?
Truth in third-order arithmetic is defined analogously.
Furthermore, arithmetic formulas with a single free first-order variable define sets of natural numbers. We are interested in the classes
\begin{itemize}
    \item $\Sigma_1^1$ containing sets of the form~$\set{x \in\nats \mid \exists X_1 \subseteq \nats.\ \cdots \exists X_k\subseteq \nats.\ \psi(x, X_1, \ldots,X_k )}$, where $\psi$ is a formula of arithmetic with arbitrary quantification over type~$0$ objects (but no other quantifiers), and
    \item $\Sigma_1^2$ containing sets of the following form,
    where $\psi$ is a formula of arithmetic with arbitrary quantification over type~$0$ and type~$1$ objects (but no other quantifiers):
    $\set{x \in\nats \mid \exists \mathcal{X}_1\subseteq \pow{\nats}.\ \cdots \exists \mathcal{X}_k\subseteq \pow{\nats}. \ \psi(x, \mathcal{X}_1, \ldots,\mathcal{X}_k)}.$
\end{itemize}

%################################################
%################################################
%################################################
\subparagraph*{Problem Statement.}

We are interested in the complexity of the following three problems for fragments of $\sohyltl$ for both semantics:
\begin{itemize}
    \item Satisfiability: Given a sentence~$\phi$, does it have a model, i.e., is there a set~$T$ of traces such that $T \models \phi$?
    \item Finite-state satisfiability: Given a sentence~$\phi$, is it satisfied by a transition system, i.e., is there a transition system~$\tsys$ such that $\tsys \models \phi$?
    \item Model-checking: Given a sentence~$\phi$ and a transition system~$\tsys$, do we have $\tsys \models \phi$?
\end{itemize}

It is known~\cite{sohypercomplexity} that all three problems are equivalent to truth in third-order arithmetic for full $\sohyltl$ and for the fragment where second-order quantification is restricted to maximal and minimal sets satisfying a guard formula (see Subsection~\ref{subsec_mmsands} for a formal definition).
Furthermore, if one restricts second-order quantification even further, i.e., to least fixed points of $\hyltl$-definable operators (see Subsection~\ref{subsec_lfpsands} for a formal definition), then the complexity finally decreases: satisfiability under closed-world semantics is $\Sigma_1^1$-complete while finite-state satisfiability and model-checking are 
$\Sigma_1^1$-hard and in $\Sigma_2^2$ (for both semantics).

Thus, there are several gaps in these results, i.e., in the complexity of finite-state satisfiability and model-checking when second-order quantification is restricted to least fixed points. 
Similarly, the complexity of satisfiability for this fragment under standard semantics is open.
Finally, the results for guarded quantification rely on using both minimal and maximal sets satisfying guards, i.e., the complexity of the fragments using only one type of \myquot{polarity} are still open. 
In the following, we close all these gaps with tight complexity results.

%################################################
%################################################
%################################################
\section{\texorpdfstring{\boldmath$\sohyltlfp$}{Second-order HyperLTL with Minimality/Maximality Constraints}} 

$\sohyltlfp$ is the fragment of $\sohyltl$ in which second-order quantification is restricted to maximal or minimal sets satisfying a guard formula. 
It is known that all three verification problems we are interested in have the same complexity for $\sohyltlfp$ as for full $\sohyltl$, i.e., they are equivalent to truth in third-order arithmetic~\cite{sohypercomplexity}.
Intuitively, the reason is that one can write a guard over some set that is only satisfied by uncountable models. Using this guard, one can mimic quantification over arbitrary sets of traces, as they are all subsets of an uncountable set.
However, when implementing this idea naively, one ends up with formulas that use both quantification over maximal and minimal sets satisfying a guard.
Here, we show that similar results hold when only using quantification over maximal sets satisfying a guard and when only using quantification over minimal sets satisfying a guard. 

%################################################
%################################################
%################################################
\subsection{Syntax and Semantics}
\label{subsec_mmsands}

The formulas of $\sohyltlfp$ are given by the grammar
\begin{align*}
\phi  {}& \cceq {} \exists (X,\smalar,\phi).\ \phi \mid \forall (X,\smalar,\phi).\ \phi \mid \exists \pi \in X.\ \phi \mid \forall \pi \in X.\ \phi \mid \psi\\
\psi {}&  \cceq {} \prop_\pi \mid \neg \psi \mid \psi \vee \psi \mid \X \psi \mid \psi \U \psi    
\end{align*}
where $\prop$ ranges over $\ap$, $\pi$ ranges over~$\fovar$, $X$ ranges over $\sovar$, and $\smalar \in \set{\smallest, \largest}$, i.e., the only modification concerns the syntax of second-order quantification. 

We consider two fragments of $\sohyltl$ obtained by only allowing quantification over maximal sets (minimal sets, respectively):
\begin{itemize}
    \item $\sohyltlfpmax$ is the fragment of $\sohyltlfp$ obtained by disallowing second-order quantifiers of the form~$\exists (X,\smallest,\phi) $ and $ \forall (X,\smallest,\phi)$.
    \item $\sohyltlfpmin$ is the fragment of $\sohyltlfp$ obtained by disallowing second-order quantifiers of the form~$\exists (X,\largest,\phi) $ and $ \forall (X,\largest,\phi)$.
\end{itemize}

The semantics of $\sohyltlfp$ is equal to that of $\sohyltl$ but for the second-order quantifiers, for which we define (for $\smalar \in \set{\smallest,\largest}$):
\begin{itemize}
    \item $\Pi \models \exists (X,\smalar,\phi_1).\ \phi_2$ if there exists a set~$T \in \solutions(\Pi, (X,\smalar,\phi_1))$ such that $\Pi[X\mapsto T] \models \phi_2$.
    \item $\Pi \models \forall (X,\smalar,\phi_1).\ \phi_2$ if for all sets~$T \in \solutions(\Pi, (X,\smalar,\phi_1))$ we have $\Pi[X\mapsto T] \models \phi_2$.
\end{itemize}
Here, $\solutions(\Pi, (X,\smalar,\phi_1))$ is the set of all minimal/maximal models of the formula~$\phi_1$, which is defined as follows: 
\begin{align*}
    {}&{}\!\!\!\solutions(\Pi, (X,\smallest,\phi_1))  = \set{T \subseteq (\pow{\ap})^\omega \mid \Pi[X\mapsto T] \models \phi_1\text{ and } 
    \Pi[X\mapsto T'] \not\models \phi_1 \text{ for all } T' \subsetneq T }\\
    {}&{}\!\!\!\solutions(\Pi, (X,\largest,\phi_1))  = \set{T \subseteq (\pow{\ap})^\omega \mid \Pi[X\mapsto T] \models \phi_1 \text{ and }
     \Pi[X\mapsto T'] \not\models \phi_1 \text{ for all } T' \supsetneq T}
\end{align*}
Note that $\solutions(\Pi, (X,\smalar,\phi_1))$ may be empty, may be a singleton, or may contain multiple sets, which then are pairwise incomparable.

Let us also define closed-world semantics for $\sohyltlfp$. Here, we again disallow the use of the variable~$\univar$ and change the semantics of set quantification to 
\begin{itemize}
    \item $\Pi \models_\cw \exists (X,\smalar,\phi_1).\ \phi_2$ if there exists a set~$T \in \solutions_\cw(\Pi, (X,\smalar,\phi_1))$ such that $\Pi[X\mapsto T] \models_\cw \phi_2$, and 
    \item $\Pi \models_\cw \forall (X,\smalar,\phi_1).\ \phi_2$ if for all sets~$T \in \solutions_\cw(\Pi, (X,\smalar,\phi_1))$ we have $\Pi[X\mapsto T] \models_\cw \phi_2$,
\end{itemize}
where $\solutions_\cw(\Pi, (X,\smallest,\phi_1))$ and $\solutions_\cw(\Pi, (X,\largest,\phi_1))$ are  defined as follows:
\begin{align*}
    \solutions_\cw(\Pi, (X,\smallest,\phi_1))  = \set{T \subseteq \Pi(\unidisvar) \mid{}&{} \Pi[X\mapsto T] \models_\cw \phi_1\\ 
    {}&{}\text{ and } \Pi[X\mapsto T'] \not\models_\cw \phi_1\text{ for all }T' \subsetneq T} \\
     \solutions_\cw(\Pi, (X,\largest,\phi_1))= \set{T \subseteq \Pi(\unidisvar) \mid{}&{}  \Pi[X\mapsto T] \models_\cw \phi_1\\ 
    {}&{}\text{ and }  \Pi[X\mapsto T'] \not\models_\cw \phi_1 \text{ for all } \Pi(\unidisvar) \supseteq  T' \supsetneq T}.
\end{align*}
Note that $\solutions_\cw(\Pi, (X,\smalar,\phi_1))$ may still be empty, may be a singleton, or may contain multiple sets, but all sets in it are now incomparable subsets of $\Pi(\unidisvar)$.

A $\sohyltlfp$ formula is a sentence if it does not have any free variables except for $\univar$ and $ \unidisvar$ (also in the guards). 
Models are defined as for $\sohyltl$.

\begin{proposition}[Proposition~1 of \cite{DBLP:conf/cav/BeutnerFFM23}]
\label{prop_fp2classical}
Every $\sohyltlfp$ sentence~$\phi$ can be translated in polynomial time (in $\size{\varphi}$) into a $\sohyltl$ sentence~$\phi'$ such that for all sets~$T$ of traces we have that $T \models \phi$  if and only if  $T \models \phi'$.\footnote{The polynomial-time claim is not made in \cite{DBLP:conf/cav/BeutnerFFM23}, but follows from the construction when using appropriate data structures for formulas.} 
\end{proposition}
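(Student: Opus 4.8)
The plan is to translate $\phi$ by structural induction, leaving every genuine $\sohyltl$ construct untouched and rewriting only the two forms of guarded second-order quantification into ordinary second-order quantification whose body encodes the extremality constraint explicitly. Write $\widehat{\psi}$ for the translation of $\psi$; on atomic propositions, Boolean and temporal operators, and first-order quantifiers it is simply the homomorphic (recursive) extension, so the only interesting cases are $\exists(X,\smalar,\phi_1).\,\phi_2$ and $\forall(X,\smalar,\phi_1).\,\phi_2$.

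The core gadget expresses, as a plain $\sohyltl$ formula, that the set assigned to $X$ is a $\subseteq$-minimal (resp.\ $\subseteq$-maximal) model of the guard. Subset comparison is already available: write $X \subseteq Y$ for $\forall \pi \in X.\ \pi\tracein Y$ and $X \subsetneq Y$ for $(X \subseteq Y) \wedge \neg(Y \subseteq X)$; both are legitimate $\sohyltl$ formulas, since $\pi\tracein Y$ abbreviates $\exists \pi'\in Y.\ \equals{\pi}{\pi'}{\ap}$ and is not placed under a temporal operator. I then set
\begin{align*}
\mathrm{min}(X,\phi_1) &:= \widehat{\phi_1} \wedge \neg\exists Y.\ \big((Y \subsetneq X) \wedge \widehat{\phi_1}[X/Y]\big),\\
\mathrm{max}(X,\phi_1) &:= \widehat{\phi_1} \wedge \neg\exists Y.\ \big((X \subsetneq Y) \wedge \widehat{\phi_1}[X/Y]\big),
\end{align*}
where $Y$ is fresh and $\widehat{\phi_1}[X/Y]$ renames the free occurrences of $X$ to $Y$. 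I translate $\exists(X,\smallest,\phi_1).\,\phi_2$ to $\exists X.\ (\mathrm{min}(X,\phi_1) \wedge \widehat{\phi_2})$ and $\forall(X,\smallest,\phi_1).\,\phi_2$ to $\forall X.\ (\mathrm{min}(X,\phi_1)\rightarrow\widehat{\phi_2})$, symmetrically with $\mathrm{max}$ for $\smalar=\largest$, and finally put the result into prenex normal form. Correctness is then a routine induction: under $\Pi[X\mapsto T]$ the formula $\mathrm{min}(X,\phi_1)$ holds exactly when $\Pi[X\mapsto T]\models\widehat{\phi_1}$ and no $T'\subsetneq T$ satisfies $\widehat{\phi_1}$, which by the induction hypothesis on $\phi_1$ is precisely $T \in \solutions(\Pi,(X,\smallest,\phi_1))$; the two quantifier cases then match the semantics of guarded quantification verbatim, and analogously for $\largest$.

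The main obstacle is the polynomial bound, not the correctness. Naively the guard is duplicated — once as $\widehat{\phi_1}$ and once as $\widehat{\phi_1}[X/Y]$ — and for nested guards this duplication compounds: over $d$ levels of nesting a single subformula can acquire any of the $2^{d}$ combinations of renamings of the enclosing guard variables, so as a syntax tree the translation is exponential. The fix exploits that size is measured by the number of \emph{distinct} subformulas. One materializes the translation of each guard only once, as a single subformula over fresh ``parameter'' variables, and invokes it through set-equality constraints of the form $\exists W.\ ((W = X)\wedge \cdots)$ (with $W=X$ abbreviating $W\subseteq X \wedge X\subseteq W$) rather than by substitution, so that the satisfaction check and the minimality/maximality check point to the \emph{same} shared subformula. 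To prevent the one renaming per guard from recursively copying the nested guards it contains, each guard is translated into a template that is \emph{closed} in its second-order variables — parametrized over its own variable and all enclosing ones — so that renaming touches only the small equality-wrappers around the shared closed templates and never the templates themselves. The number of distinct subformulas, and hence the running time, then stays polynomial in $\size{\phi}$.
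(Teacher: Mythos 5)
Your construction is exactly the standard one behind Proposition~1 of Beutner et al., which this paper only cites rather than reproves: replace each guarded quantifier by an ordinary second-order quantifier whose body conjoins the translated guard with a formula asserting that no proper subset (resp.\ superset) satisfies it, using $\pi\tracein X$ to express set inclusion. Your additional sharing argument for the nested-guard blow-up is also sound and addresses precisely the point of the paper's footnote that the polynomial bound ``follows from the construction when using appropriate data structures for formulas,'' given that size is measured by the number of distinct subformulas.
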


The same claim is also true for closed-world semantics, using the same proof.

\begin{remark}
\label{remark_fp2classical_cw}
Every $\sohyltlfp$ sentence~$\phi$ can be translated in polynomial time (in $\size{\varphi}$) into a $\sohyltl$ sentence~$\phi'$ such that for all sets~$T$ of traces we have that $T \models_\cw \phi$  if and only if  $T \models_\cw \phi'$.
\end{remark}

Thus, every complexity upper bound for $\sohyltl$ also holds for $\sohyltlfp$ and every lower bound for $\sohyltlfp$ also holds for $\sohyltl$.

%################################################
%################################################
%################################################
\subsection{Satisfiability, Finite-State Satisfiability, and Model-Checking} 

In this subsection, we settle the complexity of satisfiability, finite-state satisfiability, and model-checking for the fragments~$\sohyltlfpmax$ and $\sohyltlfpmin$ using only second-order quantification over maximal respectively minimal sets satisfying a given guard.
We will show, for both semantics, that all three problems have the same complexity as the corresponding problems for full $\sohyltl$ and for $\sohyltlfp$ (which allows quantification over maximal \emph{and} minimal sets satisfying a given guard), i.e., they are equivalent to truth in third-order arithmetic.
Thus, we show that even restricting to one type of monotonicity does not lower the complexity. 

As just explained, the upper bounds already hold for full $\sohyltl$, hence the remainder of this subsection is concerned with lower bounds. 
We show that one can translate each $\sohyltl$ sentence~$\phi$ (over $\ap$) into a $\sohyltlfpmax$ sentence~$\phi^\largest$ and into a $\sohyltlfpmin$ sentence~$\phi^\smallest$ (both over some $\ap' \supsetneq \ap$) and each $T \subseteq (\pow{\ap})^\omega$ into a $T' \subseteq (\pow{\ap'})^\omega$ such that $T\models_\cw \phi$ if and only if $T'\models_\cw \phi^\largest$ and $T\models_\cw \phi$ if and only if $T'\models_\cw \phi^\smallest$.
As closed-world semantics can be reduced to standard semantics, this suffices to prove the result for both semantics.

Intuitively, $\phi^\largest$ and $\phi^\smallest$ mimic the quantification over arbitrary sets in $\phi$ by quantification over maximal and minimal sets that satisfy a guard~$\phi_1$ that is only satisfied by uncountable sets.
It is known that such formulas~$\phi_1$ can be written in $\sohyltl$~\cite{sohypercomplexity}. 
Here, we show that similar formulas can also be written in $\sohyltlfpmax$ and $\sohyltlfpmin$.
With these formulas as guards (which use fresh propositions in $\ap' \setminus \ap$), we mimic arbitrary set quantification via quantification of sets of traces over $\ap'$ that are uncountable (enforced by the guards) and then consider their $\ap$-projections.
This approach works for all sets but the empty set, as projecting an uncountable set cannot result in the empty set. 
For this reason, we additionally mark some traces in the uncountable set and only project the marked ones, but discard the unmarked ones.
Thus, by marking no trace, the projection is the empty set.

\begin{theorem}
\label{thm_hyltlmm_complexity}
\hfill
\begin{enumerate}
    \item $\sohyltlfpmax$ satisfiability, finite-state satisfiability, and model-checking (both under standard semantics and under closed world semantics) are polynomial-time equivalent to truth in third-order arithmetic. The lower bounds for standard semantics already hold for $\univar$-free sentences.

    \item $\sohyltlfpmin$ satisfiability, finite-state satisfiability, and model-checking (both under standard semantics and under closed world semantics) are polynomial-time equivalent to truth in third-order arithmetic. The lower bounds for standard semantics already hold for $\univar$-free sentences.

\end{enumerate}
\end{theorem}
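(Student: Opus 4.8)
The upper bounds require no new work. By Proposition~\ref{prop_fp2classical} and Remark~\ref{remark_fp2classical_cw}, every sentence of either fragment translates in polynomial time into an equivalent $\sohyltl$ sentence (under the respective semantics), and all three problems for full $\sohyltl$ are known~\cite{sohypercomplexity} to reduce to truth in third-order arithmetic. Since $\sohyltlfpmax$ and $\sohyltlfpmin$ are (fragments that translate into) $\sohyltl$, only the lower bounds remain, i.e., I must reduce the corresponding T3A-hard problem for full $\sohyltl$ to the problem at hand for each unipolar fragment.

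The plan for the lower bounds is a single syntactic translation $\phi \mapsto \phi^\largest$ (respectively $\phi \mapsto \phi^\smallest$) together with a model translation $T \mapsto T'$, carried out under closed-world semantics, so that $T \models_\cw \phi$ iff $T' \models_\cw \phi^\largest$ (and likewise for $\smallest$). Working over an enlarged alphabet $\ap' \supsetneq \ap$ with fresh propositions — in particular a marker $\marker$ — I would replace each arbitrary set quantifier $\exists X$, $\forall X$ of $\phi$ by a guarded quantifier $\exists(X',\largest,\phi_1)$, $\forall(X',\largest,\phi_1)$ whose guard $\phi_1$ is satisfied only by uncountable sets of $\ap'$-traces. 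Inside $\phi^\largest$ every occurrence of $X$ is then read not as $X'$ itself but as the $\ap$-projection of the $\marker$-marked traces of $X'$. This indirection is what lets a guarded quantifier realize an \emph{arbitrary} subset~$S$ of the universe as the marked projection of a suitable uncountable witness, while the unmarked part supplies the uncountability that the guard demands; in particular, marking no trace yields the empty projection, which a bare projection of an uncountable set could never produce. The three decision problems are then dispatched uniformly: for satisfiability the biconditional transfers models in both directions, and for finite-state satisfiability and model-checking I must additionally check that $T \mapsto T'$ preserves being the trace set of a polynomial-size transition system in both directions, the fresh-proposition traces forming a regular, hence finite-state, scaffold around $\traces(\tsys)$. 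To obtain the results under standard semantics I invoke the reduction of closed-world to standard semantics; since the produced sentences never mention $\univar$, this simultaneously yields the stated strengthening that the standard-semantics lower bounds already hold for $\univar$-free sentences.

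The main obstacle is the construction of the uncountability guard $\phi_1$ \emph{inside} each unipolar fragment. Expressing \myquot{$X'$ is uncountable} is genuinely second-order — e.g.\ \myquot{there is no injection of $X'$ into $\nats$}, or \myquot{$X'$ carries a perfect tree of traces, hence has cardinality $\contcard$} — so $\phi_1$ itself needs nested set quantification, and the entire content of the theorem is that this nested quantification must use \emph{only} maximal sets (for $\sohyltlfpmax$) or \emph{only} minimal sets (for $\sohyltlfpmin$). This is precisely where the known argument for $\sohyltlfp$ used both polarities at once. I therefore expect the technical heart to be twofold: first, encoding the auxiliary objects witnessing uncountability (enumerations, or the levels and branches of a perfect tree) as max-solutions (respectively min-solutions) of further guards, so that the underlying universal/existential second-order statement becomes a guarded quantifier of the correct fixed polarity; and second, verifying that $\solutions(\Pi,(X',\largest,\phi_1))$ (respectively with $\smallest$) consists \emph{exactly} of the well-formed uncountable sets whose marked $\ap$-projections range over all subsets of the universe of discourse. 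Establishing this exact correspondence, rather than a mere inclusion, is what makes both the $\exists$ and the $\forall$ cases of the translation sound, and is the step I would spend the most care on.
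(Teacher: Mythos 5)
Your architecture coincides with the paper's: upper bounds via Proposition~\ref{prop_fp2classical} and Remark~\ref{remark_fp2classical_cw}, lower bounds via a translation $\phi\mapsto\phi^\smalar$ under closed-world semantics that simulates arbitrary set quantification by guarded quantification over uncountable sets of marked traces (with the marker trick for the empty set), followed by the closed-world-to-standard reduction. But the proof is not complete: the step that carries essentially all of the new mathematical content --- constructing, \emph{inside} each unipolar fragment, a guard whose $\largest$- (resp.\ $\smallest$-) solutions are exactly the intended uncountable sets --- is precisely the step you defer with \myquot{I expect the technical heart to be twofold}. Neither of the directions you hint at (an enumeration/injection argument or a perfect tree) is developed, and it is not evident that either goes through with a single polarity. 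The paper's Lemma~\ref{lemma_uncountmodels} does this concretely: it fixes the language~$T_\allsets$ over $\set{\posprop,\negprop,\setprop,\inprop}$ containing, for each $n$, a positive and a negative trace with $\inprop$ at position~$n$, plus a type-$\setprop$ trace for every subset of $\nats$; the countable part is forced by first-order means, and the uncountable $\setprop$-part by a single \emph{nested} guarded quantifier~$\forall(X,\largest,\cdot)$ ranging over maximal \emph{contradiction-free} sets, each of which selects the positive or negative trace for every $n$ and hence encodes an arbitrary subset of $\nats$ whose consistent $\setprop$-witness is then demanded. For $\smallest$ this fails naively because the empty set is the unique minimal contradiction-free set, so the guard must be strengthened by a completeness conjunct~$\phi_\complete$; this asymmetry between the two fragments is invisible in your sketch.

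A second missing mechanism is how a single polarity serves \emph{both} $\exists X$ and $\forall X$. The paper conjoins the guard with $\phi_\parti^{i}$, which forbids two distinct traces in $X_i$ sharing an $\ap_\allsets$-projection while forcing that projection to be all of $T_\allsets$; Lemma~\ref{lemma_guardcorrectness} then shows any two guard-solutions are $\subseteq$-incomparable, so maximal, minimal, and all solutions coincide, and Lemma~\ref{lemma_encodingrichness} shows every $T\subseteq(\pow{\ap})^\omega$ arises as the marked projection of some solution. You correctly identify that such an exact correspondence is needed for soundness of both quantifier cases, but you supply no device to obtain it. Finally, for plain satisfiability your biconditional only gives the left-to-right direction; as in Lemma~\ref{lem_so2sofp} you must additionally conjoin a sentence such as $\phi_\extend^\smalar$ forcing every model of the translated formula to have the form~$\extend(T)$, since otherwise an arbitrary model of $\phi^\smalar$ need not yield a model of $\phi$.
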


Recall that we only need to prove the lower bounds, as the upper bounds already hold for full $\sohyltl$.
Furthermore, as closed-world semantics can be reduced to standard semantics, we only need to prove the lower bounds for closed-world semantics:
Frenkel and Zimmermann showed  that every $\sohyltl$ sentence~$\phi$ can be translated in polynomial time (in $\size{\varphi}$) into a $\sohyltl$ sentence~$\phi'$ such that for all sets~$T$ of traces we have that $T \models_\cw \phi$  if and only if  $T \models \phi'$ (under standard semantics)~\cite{sohypercomplexity}.
Note that in this translation, a $\univar$-free $\phi$ is mapped to a $\univar$-free $\phi'$.

We begin by constructing the desired guards that have only uncountable models.
As a first step, we modify a construction due to Hadar and Zimmermann that yielded a $\sohyltl$ formula that has only uncountable models: we show that $\sohyltlfpmax$ and $\sohyltlfpmin$ also have formulas that require the interpretation of a free second-order variable to be uncountable.
To this end, fix $\ap_\allsets = \set{\posprop, \negprop, \setprop,\inprop}$ and consider the language
\begin{align*}
T_\allsets ={}&{} \set{ \set{\posprop}^\omega,\set{\negprop}^\omega} \cup\\
&{}\set{\set{\posprop}^n \set{\inprop,\posprop} \set{\posprop}^\omega \mid n\in\nats} \cup\\
&{}\set{\set{\negprop}^n \set{\inprop,\negprop} \set{\negprop}^\omega \mid n\in\nats}{} \cup\\
&{}\set{ (t(0) \cup \set{\setprop})(t(1) \cup \set{\setprop})(t(2) \cup \set{\setprop}) \cdots \mid t \in (\pow{\set{\inprop}})^\omega},
\end{align*}
which is an uncountable subset of $(\pow{\ap_\allsets})^\omega$.
Figure~\ref{fig_tallsets} depicts a transition system~$\tsys_\allsets$ satisfying $\traces(\tsys_\allsets) = T_\allsets$.

\begin{figure}[h]
    \centering

        \begin{tikzpicture}[thick]
        \def\d{1.25}
            \node[plainnode] (1p) at (0,0) {\small$\set{\posprop}$};
            \node[plainnode] (2p) at (3,0) {\small$\set{\inprop,\posprop}$};
            \node[plainnode] (3p) at (6,0) {\small$\set{\posprop}$};

            \path[->] 
            (-.85,0) edge (1p)
            (1p) edge[loop above] ()
            (1p) edge (2p)
            (3,.85) edge (2p)
            (2p) edge (3p) 
            (3p) edge[loop above] ()
            ;

            \node[plainnode] (1n) at (0,-\d) {\small$\set{\negprop}$};
            \node[plainnode] (2n) at (3,-\d) {\small$\set{\inprop,\negprop}$};
            \node[plainnode] (3n) at (6,-\d) {\small$\set{\negprop}$};

            \path[->] 
            (-.85,-\d) edge (1n)
            (1n) edge[loop below] ()
            (1n) edge (2n)
            (3,-\d-.85) edge (2n)
            (2n) edge (3n) 
            (3n) edge[loop below] ()
            ;

            \node[plainnode] (1s) at (9,0) {\small$\set{\inprop,\setprop}$};
            \node[plainnode] (2s) at (9,-\d) {\small$\set{\setprop}$};

            \path[->] 
            (8.15,-0) edge (1s)
            (8.15,-\d) edge (2s)
            (1s) edge[loop above] ()
            (1s) edge[bend left] (2s)
            (2s) edge[bend left] (1s)
            (2s) edge[loop below] ()
            ;

        \end{tikzpicture}
    
    \caption{The transition system $\tsys_\allsets$ with $\traces(\tsys_\allsets) = T_\allsets$.}
    \label{fig_tallsets}
\end{figure}

\begin{lemma}\hfill
\label{lemma_uncountmodels}
\begin{enumerate}
    \item\label{lemma_uncountmodels_largest}
    There exists a $\sohyltlfpmax$ formula~$\phi_\allsets^\largest$ over $\ap_\allsets$ with a single free (second-order) variable~$Z$ such that $\Pi \models_\cw \varphi_\allsets^\largest$ if and only if the $\ap_\allsets$-projection of $\Pi(Z)$ is $T_\allsets$.

    \item\label{lemma_uncountmodels_smallest}
    There exists a $\sohyltlfpmin$ formula~$\phi_\allsets^\smallest$ over $\ap_\allsets$ with a single free (second-order) variable~$Z$ such that $\Pi \models_\cw \varphi_\allsets^\smallest$ if and only if the $\ap_\allsets$-projection of $\Pi(Z)$ is $T_\allsets$.
\end{enumerate}
\end{lemma}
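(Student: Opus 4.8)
The plan is to mirror the Hadar--Zimmermann sentence that forces a set to project to $T_\allsets$ and to re-express its two second-order quantifiers using only maximal (for Part~\ref{lemma_uncountmodels_largest}) respectively only minimal (for Part~\ref{lemma_uncountmodels_smallest}) guarded quantification. I would write $\phi_\allsets^\largest$ (and dually $\phi_\allsets^\smallest$) as a conjunction of a \emph{soundness} part, stating that every trace in $Z$ has one of the five shapes occurring in $T_\allsets$, and a \emph{completeness} part, stating that all these shapes actually occur. Soundness is purely first-order over $Z$: a single $\forall \pi \in Z$ together with an \ltl formula expresses that exactly one of $\posprop,\negprop,\setprop$ holds and does so at every position, and that on $\posprop$- and $\negprop$-typed traces $\inprop$ holds at most once. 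This part contains no set quantifier and is identical in both fragments.

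For completeness, the two constant traces are asserted by two first-order $\exists \pi \in Z$. The two families of marked traces can also be forced first-order, by a \emph{successor-closure} argument: I would state that $Z$ contains a $\posprop$-typed trace carrying $\inprop$ exactly at position~$0$, and that for every $\posprop$-typed trace in $Z$ whose unique $\inprop$ sits at some position, $Z$ also contains the $\posprop$-typed trace whose unique $\inprop$ sits one position later (and symmetrically for $\negprop$). By induction this forces all countably many marked traces to be present, again without any set quantifier.

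The crux is the fourth, uncountable family: $Z$ must contain a $\setprop$-typed trace for \emph{every} subset of $\nats$, and this is exactly where the single second-order quantifier has to do its work. For $\sohyltlfpmax$ I would pin down the set-trace component of $Z$ as a maximal model of a guard combining the shape constraints with a first-order \myquot{splitting} condition (for every present set-trace and every position named by a marked trace, both one-step extensions agreeing up to that position are present); the full binary tree is the largest such splitting family, so maximality upgrades \myquot{uncountable perfect family} to \myquot{all subsets}. Concretely, I would use one $\forall (X,\largest,\gamma)$ asserting that $X$ and $Z$ carry the same $\setprop$-typed traces. For $\sohyltlfpmin$ maximality is unavailable, so instead I would exploit that the minimal models of a guard of the form \myquot{$X$ contains at least one set-trace} are exactly the singletons of individual set-traces; universally quantifying over these singletons simulates a universal quantifier over single set-traces, which together with the first-order splitting/successor closure lets me verify that no subset is missing.

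The main obstacle is precisely this uncountable completeness under a single polarity. A priori, guarded quantification of one type looks strictly weaker than the unrestricted set quantification Hadar and Zimmermann use: maximal quantification hands back \myquot{the largest shape-restricted set} in one shot, which is convenient for a universal check but useless for isolating a single witness, whereas minimal quantification hands back only singletons, so the universal over the uncountably many set-traces must be re-encoded through the guard itself. Getting each fragment to force \emph{exactly} $T_\allsets$ -- rather than merely some uncountable family -- is the delicate point, since it is the extremality of the guarded quantifier, not the first-order closure conditions, that rules out proper perfect subfamilies. Once the guard is engineered so that its extremal models coincide with the full set-trace family, the equivalence $\Pi \models_\cw \phi_\allsets^{\largest/\smallest}$ iff the $\ap_\allsets$-projection of $\Pi(Z)$ equals $T_\allsets$ follows by unwinding the closed-world semantics of the guarded quantifier.
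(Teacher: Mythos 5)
Your soundness part and your first-order completeness conditions for the constant and marked traces match the paper's $\phi_0$--$\phi_3$ exactly. The gap is in the step you yourself flag as the crux: forcing all uncountably many $\setprop$-traces into $Z$. Under closed-world semantics---which is what the lemma asserts---every guarded quantifier~$\quant(X,\smalar,\gamma)$ ranges over extremal \emph{subsets of $\Pi(\unidisvar)$}, so the extremal models of your guards are computed relative to whatever the universe happens to contain and can never force a missing trace into existence. Concretely, for the $\largest$ case: if $\Pi(\unidisvar)$ contains, besides the marked traces, only the eventually-$\emptyset$ $\setprop$-traces, then that countable family already satisfies your first-order splitting closure and is the unique maximal splitting subfamily of the universe, so ``$X$ and $Z$ carry the same $\setprop$-traces'' is satisfiable while the $\ap_\allsets$-projection of $\Pi(Z)$ is a proper countable subset of $T_\allsets$; the ``only if'' direction fails. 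The same counterexample defeats the $\smallest$ case: quantifying over minimal sets satisfying ``$X$ contains at least one $\setprop$-trace'' merely enumerates singletons of $\setprop$-traces \emph{already present} (which is first-order expressible anyway), and a full binary prefix tree does not imply that all $2^{\aleph_0}$ branches are realized---this is exactly the K\H{o}nig-type obstruction that first-order closure conditions cannot overcome.

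The paper's construction avoids this circularity by never letting the guard constrain the $\setprop$-traces at all. Instead, each subset $A\subseteq\nats$ is encoded as an extremal set of \emph{marked} traces: a maximal contradiction-free set (resp.\ a minimal contradiction-free set additionally forced by $\phi_\complete$ to contain one marked trace per position) selects, for every $n$, exactly one of $\set{\posprop}^n\set{\inprop,\posprop}\set{\posprop}^\omega$ and $\set{\negprop}^n\set{\inprop,\negprop}\set{\negprop}^\omega$. Since $\phi_2$ and $\phi_3$ already force all countably many of these traces into $Z$ (hence into the universe), the extremal sets of that guard are in bijection with $\pow{\nats}$ no matter what else the universe contains, and the body of $\phi_4$ (resp.\ $\phi_4'$) then demands, for each of them, a consistent $\setprop$-trace in $Z$. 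That transfer of extremality from the uncountable family you want to create onto a countable family you have already secured is the missing idea, and without it the proposal does not establish the ``only if'' direction of either part.
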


\begin{proof}
\ref{lemma_uncountmodels_largest}.)
Consider $\phi_\allsets^\largest = \phi_0 \wedge \cdots \wedge \phi_4$ where
\begin{itemize}
    \item $\phi_0 = \forall \pi \in Z.\ \bigvee_{\prop \in \set{\posprop, \negprop,\setprop}} \G (\prop_\pi \wedge \bigwedge_{\prop' \in \set{\posprop, \negprop,\setprop} \setminus \set{\prop}} \neg \prop'_\pi )$ expresses that on each trace in the $\ap_\allsets$-projection of $\Pi(Z)$ exactly one of the propositions in $\set{\posprop, \negprop,\setprop}$ holds at each position and the other two at none. 
    In the following, we speak therefore about type~$\prop$ traces for $\prop \in \set{\posprop, \negprop,\setprop}$,

    \item $\phi_1 = \forall \pi \in Z.\  (\posprop_\pi \vee \negprop_\pi) \rightarrow ( (\G \neg \inprop_\pi) \vee (\neg \inprop_\pi \U ( \inprop_\pi \wedge \X \G \neg \inprop_\pi )))$
    expresses that $\inprop$ appears at most once on each type~$\prop$ trace in the $\ap_\allsets$-projection of $\Pi(Z)$, for both $\prop \in \set{\posprop,\negprop}$,

    \item $\phi_2 = \bigwedge_{\prop \in \set{\posprop,\negprop}} (\exists \pi \in Z.\ \prop_\pi\wedge \G\neg \inprop_\pi) \wedge (\exists\pi \in Z.\ \prop_\pi \wedge \inprop_\pi)$ expresses that the type~$\prop$ traces~$\set{\prop}^\omega$ and $\set{\prop}^0 \set{\inprop,\prop} \set{\prop}^\omega$ are in the $\ap_\allsets$-projection of $\Pi(Z)$, for both $\prop \in \set{\posprop,\negprop}$, and

    \item $\phi_3 = \bigwedge_{\prop \in \set{\posprop,\negprop}}\forall \pi \in Z.\ \exists\pi'\in Z.\ (\prop_\pi \wedge \F \inprop_\pi) \rightarrow ( \prop_{\pi'} \wedge  \F (\inprop_\pi \wedge \X \inprop_{\pi'}))$ expresses that for each type~$\prop$ trace of the form~$\set{\prop}^n \set{\inprop,\prop} \set{\prop}^\omega$ in the $\ap_\allsets$-projection of $\Pi(Z)$, the trace~$\set{\prop}^{n+1} \set{\inprop,\prop} \set{\prop}^\omega$ is also in the $\ap_\allsets$-projection of $\Pi(Z)$, for both $\prop \in \set{\posprop,\negprop}$.
    
\end{itemize}
The conjunction of these first four formulas requires that the $\ap_\allsets$-projection of $\Pi(Z) $ (and thus, under closed-world semantics, also the model) contains at least the traces in
\begin{equation}
    \label{eq_nats}
\set{ \set{\posprop}^\omega,\set{\negprop}^\omega} \cup \set{\set{\posprop}^n \set{\inprop,\posprop} \set{\posprop}^\omega \mid n\in\nats} \cup \set{\set{\negprop}^n \set{\inprop,\negprop} \set{\negprop}^\omega \mid n\in\nats}.
\end{equation}

We say that a set~$T$ of traces is contradiction-free if there is no $n \in \nats$ such that $\set{\posprop}^n \set{\inprop,\posprop} \set{\posprop}^\omega$ and $\set{\negprop}^n \set{\inprop,\negprop} \set{\negprop}^\omega$ are in the $\ap_\allsets$-projection of $T$.
A trace~$t$ is consistent with a contradiction-free~$T$ if the following two conditions are satisfied:
\begin{description}
    \item[(C1)] If $\set{\posprop}^n \set{\inprop,\posprop} \set{\posprop}^\omega $ is in the $\ap_\allsets$-projection of $T$ then $\inprop\in t(n)$.
    \item[(C2)] If $\set{\negprop}^n \set{\inprop,\negprop} \set{\negprop}^\omega $ is in the $\ap_\allsets$-projection of $T$ then $\inprop\notin t(n)$.
\end{description}
Note that $T$ does not necessarily specify the truth value of $\inprop$ in every position of $t$, i.e., in those positions~$n\in\nats$ where neither $\set{\posprop}^n \set{\inprop,\posprop} \set{\posprop}^\omega$ nor $\set{\negprop}^n \set{\inprop,\negprop} \set{\negprop}^\omega$ are in $T$.
Nevertheless, due to (\ref{eq_nats}), for every trace~$t$ over $\set{\inprop}$ there exists a contradiction-free subset~$T$ of the $\ap_\allsets$-projection of $\Pi(Z)$ such that the $\set{\inprop}$-projection of every trace~$t'$ that is consistent with $T$ is equal to $t$. 
Here, we can, w.l.o.g., restrict ourselves to maximal contradiction-free sets, i.e., sets that stop being contradiction-free if more traces are added.
Thus, each of the uncountably many traces over $\set{\inprop}$ is induced by some maximal contradiction-free subset of the $\ap_\allsets$-projection of $\Pi(Z)$.
\begin{itemize}
\item Hence, we define $\phi_4$ as the formula
\begin{align*}
\forall {}&{}(X, \largest, \overbrace{ \forall \pi \in X.\ \forall \pi' \in X.\ (\posprop_\pi \wedge \negprop_{\pi'}) \rightarrow \neg \F(\inprop_{\pi} \wedge \inprop_{\pi'}) }^{\text{$X$ is contradiction-free}}).\\
&{}\exists \pi'' \in Z.\ \forall \pi''' \in X.\ \setprop_{\pi''} \wedge  \underbrace{((\posprop_{\pi'''}\wedge \F\inprop_{\pi'''}) \rightarrow \F(\inprop_{\pi'''} \wedge \inprop_{\pi''}))}_{\text{(C1)}}
\wedge 
\underbrace{((\negprop_{\pi'''} \wedge \F\inprop_{\pi'''}) \rightarrow \F(\inprop_{\pi'''} \wedge \neg\inprop_{\pi''}))}_{\text{(C2)}},
\end{align*} 
expressing that for every maximal contradiction-free set of traces~$T$, there exists a type~$\setprop$ trace~$t''$ in the $\ap_\allsets$-projection of $\Pi(Z)$ that is consistent with $T$.
\end{itemize}

Thus, if the $\ap_\allsets$-projection of $\Pi(Z)$ is $T_\allsets$, then $\Pi\models_\cw\phi_\allsets^\largest$.
Dually, we can conclude that $T_\allsets$ must be a subset of the $\ap_\allsets$-projection of $\Pi(Z)$ whenever $\Pi\models_\cw \phi_\allsets^\largest$, and that the $\ap_\allsets$-projection of $\Pi(Z)$ cannot contain other traces (due to $\phi_0$ and $\phi_1$).
Hence, $\Pi \models_\cw \varphi^\largest$ if and only if the $\ap_\allsets$-projection of $\Pi(Z)$ is $T_\allsets$.

\ref{lemma_uncountmodels_smallest}.)
Here, we will follow a similar approach, but have to overcome one obstacle: there exists a unique minimal contradiction-free set, i.e., the empty set.
Hence, we cannot naively replace quantification over maximal contradiction-free sets in $\phi_4$ above by quantification over minimal contradiction-free sets. 
Instead, we will quantify over minimal contradiction-free sets that have, for each $n$, either the trace~$\set{\posprop}^n \set{\inprop,\posprop} \set{\posprop}^\omega$ or the trace~$\set{\negprop}^n \set{\inprop,\negprop} \set{\negprop}^\omega$ in their $\ap_\allsets$-projection. 
The minimal sets satisfying this constraint are still \emph{rich} enough to enforce every possible trace over $\set{\inprop}$.

Formally, we replace $\phi_4$ by $\phi_4'$ defined as 
\begin{align*}
\forall {}&{}(X, \smallest, { \phi_\complete \wedge \forall \pi \in X.\ \forall \pi' \in X.\ (\posprop_\pi \wedge \negprop_{\pi'}) \rightarrow \neg \F(\inprop_{\pi} \wedge \inprop_{\pi'}) }) .\\
&{}\exists \pi'' \in Z.\ \forall \pi''' \in X.\ \setprop_{\pi''} \wedge  {(\posprop_{\pi'''} \rightarrow \F(\inprop_{\pi'''} \wedge \inprop_{\pi''}))}
\wedge 
{(\negprop_{\pi'''} \rightarrow \F(\inprop_{\pi'''} \wedge \neg\inprop_{\pi''}))},
\end{align*} 
i.e., the only changes are the change of the polarity from $\largest$ to $\smallest$ and the addition of $\phi_\complete$ in the guard, which is the formula
\[
\exists \pi \in X.\ (\inprop_\pi \wedge (\posprop_\pi \vee \negprop_\pi)) \wedge \forall \pi \in X.\ \exists \pi' \in X.\ (\posprop_\pi \vee \negprop_\pi) \rightarrow ((\posprop_{\pi'} \vee \negprop_{\pi'}) \wedge \F( \inprop_\pi \wedge \X\inprop_{\pi'}))
.
\]
This ensures that $\phi_\allsets^\smallest = \phi_0 \wedge \cdots \wedge \phi_3 \wedge \phi_4'$ has the desired properties.
\end{proof}

Now, we can begin the translation of full $\sohyltl$ into $\sohyltlfpmax$ and $\sohyltlfpmin$.
Let us fix a $\sohyltl$ sentence~$\varphi$ over a set~$\ap$ of propositions that is, without loss of generality, disjoint from $\ap_\allsets$.
Hence, satisfaction of $\phi$ only depends on the projection of traces to $\ap$, i.e., if $T_0$ and $T_1$ have the same $\ap$-projection, then $T_0 \models_\cw \phi$ if and only if $T_1 \models_\cw\phi$. 
We assume without loss of generality that each variable is quantified at most once in $\phi$ and that $\univar$ and $\unidisvar$ are not bound by a quantifier in $\phi$, which can always be achieved by renaming variables. 
Let $X_0, \ldots, X_{k-1}$ be the second-order variables quantified in $\phi$. 
To simplify our notation, we define $[k] = \set{0,1,\ldots, k-1}$.
For each $i \in [k]$, we introduce a fresh proposition~$\marker_{i}$ so that we can define $\ap'$ as the pairwise disjoint union of $\ap$, $\set{\marker_{i} \mid i \in [k]}$, and $\ap_\allsets$.

Let $i \in [k]$ and consider the formula
\begin{align*}
\phi_\parti^{i} ={}&{} \forall \pi \in X_i.\ ((\G ({\marker_{i}})_\pi) \vee (\G \neg ({\marker_{i}})_\pi) \wedge \bigwedge_{i' \in [k]\setminus\set{i}}\G\neg(\marker_{i'})_\pi ) \wedge \\
{}&{}\forall \pi \in X_i.\ \forall \pi' \in X_i.\ (\equals{\pi}{\pi'}{\ap_\allsets }) \rightarrow (\equals{\pi}{\pi'}{\ap'}).   
\end{align*}
Note that $\phi_\parti^{i}$ has a single free variable, i.e., $X_i$, and that it is both a formula of $\sohyltlfpmin$ and $\sohyltlfpmax$.
Intuitively, $\phi_\parti^{i}$ expresses that each trace in $\Pi(X_i)$ is either marked by $\marker_{i}$ (if ${\marker_{i}}$ holds at every position) or it is not marked (if ${\marker_{i}}$ holds at no position), it is not marked by any other $\marker_{i'}$, and that there may \emph{not} be two distinct traces~$t \neq t'$ in $\Pi(X_i)$ that have the same $\ap_\allsets$-projection.
The former condition means that $\Pi(X_i)$ is partitioned into two (possibly empty) parts, the subset of marked traces and the set of unmarked traces; the latter condition implies that each trace in $\pi(X_i)$ is uniquely identified by its $\ap_\allsets$-projection.

Fix some $i \in [k]$ and some $\smalar \in \set{\largest,\smallest}$, and define 
\[\phi_\guard^{i,\smalar} = \phi_\allsets^{\smalar}[Z/X_i] \wedge \phi_\parti^{i}, \]
where $\phi_\allsets^{\smalar}[Z/X_i]$ is the formula obtained from $\phi_\allsets^{\smalar}$ by replacing each occurrence of $Z$ by $X_i$.
The only free variable of $\phi_\guard^{i,\smalar}$ is $X_i$.

\begin{lemma}
\label{lemma_guardcorrectness}
Let $\Pi_0'$ and $\Pi_1'$ be two variable assignments with $\Pi_0'(X_i) \subseteq (\pow{\ap'})^\omega$ and  $\Pi_1'(X_i) \subseteq (\pow{\ap'})^\omega$.
\begin{enumerate}
    
    \item\label{lemma_guardcorrectness_largest}
    If $\Pi_0' \models_\cw \phi_\guard^{i,\largest}$ and $\Pi_1'(X_i) \supsetneq \Pi_0'(X_i)$, then $\Pi_1' \not\models_\cw \phi_\guard^{i,\largest}$.

    \item\label{lemma_guardcorrectness_smallest}
    If $\Pi_0' \models_\cw \phi_\guard^{i,\smallest}$ and $\Pi_1'(X_i) \subsetneq \Pi_0'(X_i)$, then $\Pi_1' \not\models_\cw \phi_\guard^{i,\smallest}$.

\end{enumerate}
\end{lemma}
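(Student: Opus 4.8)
The plan is to extract from the two conjuncts of $\phi_\guard^{i,\smalar}$ a single structural invariant that every model must satisfy: the $\ap_\allsets$-projection restricted to $\Pi(X_i)$ is a \emph{bijection} onto $T_\allsets$. Surjectivity is precisely Lemma~\ref{lemma_uncountmodels} applied to $\phi_\allsets^{\smalar}[Z/X_i]$, which gives $\Pi \models_\cw \phi_\allsets^{\smalar}[Z/X_i]$ iff the $\ap_\allsets$-projection of $\Pi(X_i)$ equals $T_\allsets$. Injectivity I would read off the second conjunct of $\phi_\parti^{i}$: since $\ap_\allsets \subseteq \ap'$, any two traces of $\Pi(X_i)$ agreeing on $\ap_\allsets$ must agree on all of $\ap'$ and hence coincide, so distinct traces have distinct $\ap_\allsets$-projections. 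Because both $\phi_\allsets^{\smalar}[Z/X_i]$ and $\phi_\parti^{i}$ refer only to $X_i$, satisfaction of the guard depends solely on $\Pi(X_i)$, so I need not assume anything about how $\Pi_0'$ and $\Pi_1'$ relate on other variables. The entire lemma then follows from the rigidity of such a bijective witness: it tolerates neither a proper enlargement (part~1) nor a proper shrinking (part~2) of $\Pi(X_i)$ without breaking the guard.

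For part~\ref{lemma_guardcorrectness_largest} I would take any $t \in \Pi_1'(X_i) \setminus \Pi_0'(X_i)$ and split on its $\ap_\allsets$-projection $s$. If $s \notin T_\allsets$, then the projection of $\Pi_1'(X_i)$ strictly contains $T_\allsets$ (it contains $T_\allsets$ because $\Pi_0'(X_i) \subseteq \Pi_1'(X_i)$, and additionally $s$), so $\phi_\allsets^{\largest}[Z/X_i]$ fails by Lemma~\ref{lemma_uncountmodels}. If $s \in T_\allsets$, surjectivity of $\Pi_0'$ already provides some $t' \in \Pi_0'(X_i) \subseteq \Pi_1'(X_i)$ with the same projection $s$, and since $t \notin \Pi_0'(X_i)$ while $t' \in \Pi_0'(X_i)$ the two traces are distinct, violating the injectivity conjunct of $\phi_\parti^{i}$. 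Either way $\Pi_1' \not\models_\cw \phi_\guard^{i,\largest}$.

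For part~\ref{lemma_guardcorrectness_smallest} I would take any $t \in \Pi_0'(X_i) \setminus \Pi_1'(X_i)$; its projection $s$ lies in $T_\allsets$ by surjectivity of $\Pi_0'$, and by the injectivity established above $t$ is the \emph{unique} trace of $\Pi_0'(X_i)$ projecting to $s$. Removing it therefore removes $s$ from the image, so the $\ap_\allsets$-projection of $\Pi_1'(X_i) \subseteq \Pi_0'(X_i)$ is a proper subset of $T_\allsets$, whence $\phi_\allsets^{\smallest}[Z/X_i]$ fails and $\Pi_1' \not\models_\cw \phi_\guard^{i,\smallest}$. I do not expect a genuine obstacle here; the one point that needs care is that surjectivity and injectivity must be combined, not used in isolation — injectivity alone does not block enlargement (only a duplicated projection does), and surjectivity alone does not block shrinking (only an uncovered projection does), which is exactly why the guard conjoins $\phi_\allsets^{\smalar}$ with $\phi_\parti^{i}$. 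Note that the marking conjunct of $\phi_\parti^{i}$ is irrelevant to this lemma and is used only later when encoding arbitrary subsets.
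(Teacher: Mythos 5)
Your proof is correct and uses exactly the same two ingredients as the paper's: Lemma~\ref{lemma_uncountmodels} (via $\phi_\allsets^{\smalar}[Z/X_i]$, forcing the $\ap_\allsets$-projection of $\Pi(X_i)$ to equal $T_\allsets$) and the second conjunct of $\phi_\parti^{i}$ (forcing distinct traces of $\Pi(X_i)$ to have distinct $\ap_\allsets$-projections). The only difference is presentational — the paper argues by contradiction, assuming both assignments satisfy the guard and deriving $t_0 = t_1$, whereas you argue directly with a case split on whether the new trace's projection lies in $T_\allsets$ — so the two arguments are essentially the same.
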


\begin{proof}
\ref{lemma_guardcorrectness_largest}.) 
Towards a contradiction, assume we have $\Pi_0' \models_\cw \phi_\guard^{i,\largest}$ and $\Pi_1'(X_i) \supsetneq \Pi_0'(X_i)$, but also $\Pi_1' \models_\cw \phi_\guard^{i,\largest}$.
Then, there exists a trace~$t_1 \in \Pi_1'(X_i) \setminus \Pi_0'(X_i)$. 
As $\Pi_0'\models_\cw \phi_\allsets^\largest$, which was constructed to satisfy Lemma~\ref{lemma_uncountmodels}.\ref{lemma_uncountmodels_largest}, there exists a trace~$t_0 \in \Pi_0'(X_i)$ that has the same $\ap_\allsets$-projection as $t_1$. 
Hence, $\varphi_\parti^{i}$ implies that $t_0$ and $t_1$ have the same $\ap'$-projection, i.e., they are the same trace (here we use $\Pi_0'(X_i) \subseteq (\pow{\ap'})^\omega$ and  $\Pi_1'(X_i) \subseteq (\pow{\ap'})^\omega$).
Thus, $t_1 = t_0$ is in $\Pi_0'(X_i)$, i.e., we have derived a contradiction.

\ref{lemma_guardcorrectness_smallest}.)
The argument here is similar, we just have to swap the roles of the two sets~$\Pi_0'(X_i)$ and 
$\Pi_1'(X_i)$.

Towards a contradiction, assume we have $\Pi_0' \models_\cw \phi_\guard^{i,\smallest}$ and $\Pi_1'(X_i) \subsetneq \Pi_0'(X_i)$, but also $\Pi_1' \models_\cw \phi_\guard^{i,\smallest}$.
Then, there exists a trace~$t_0 \in \Pi_0'(X_i) \setminus \Pi_1'(X_i)$. 
As $\Pi_1'\models_\cw \phi_\allsets^\smallest$, which was constructed to satisfy Lemma~\ref{lemma_uncountmodels}.\ref{lemma_uncountmodels_smallest}, there exists a trace~$t_1 \in \Pi_1'(X_i)$ that has the same $\ap_\allsets$-projection as $t_0$. 
Hence, $\varphi_\parti^{i}$ implies that $t_1$ and $t_0$ have the same $\ap'$-projection, i.e., they are the same trace.
Thus, $t_0 = t_1$ is in $\Pi_1'(X_i)$, i.e., we have derived a contradiction.
\end{proof}

Recall that our goal is to show that quantification over minimal/maximal subsets of $(\pow{\ap'})^\omega$ satisfying $\phi_\guard^{i,\smalar}$ mimics quantification over subsets of $(\pow{\ap})^\omega$.
Now, we can make this statement more formal.
Let $T' \subseteq (\pow{\ap'})^\omega$. We define
\[
\enc_i(T') = \set{
t \in (\pow{\ap})^\omega \mid t \text{ is the $\ap$-projection of some $t' \in T'$ whose $\set{\marker_{i}}$-projection is $\set{\marker_{i}}^\omega$}
}.
\]
Now, every $\enc_i(T')$ is, by definition, a subset of $(\pow{\ap})^\omega$. 
Our next result shows that, conversely, every subset of $(\pow{\ap})^\omega$ 
can be obtained as an encoding of some $T'$ that additionally satisfies the guard formulas.

\begin{lemma}
\label{lemma_encodingrichness}
Let $T \subseteq (\pow{\ap})^\omega$, $i \in [k]$, and $\smalar \in \set{\largest,\smallest}$.
there exists a $T' \subseteq (\pow{\ap'})^\omega$ such that
\begin{itemize}
    \item $T = \enc_i(T')$, and 
    \item for all $\Pi$ with $\Pi(X_i) = T'$, we have $\Pi \models_\cw \phi_\guard^{i,\smalar}$.
\end{itemize}
\end{lemma}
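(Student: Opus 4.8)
The plan is to realize $T$ as the set of $\ap$-projections of the \emph{marked} traces inside a single lift of the fixed uncountable language $T_\allsets$ to the larger alphabet $\ap'$. Since $\ap$ is finite and nonempty, $|(\pow{\ap})^\omega| = \contcard$, and the $\setprop$-traces of $T_\allsets$ alone already witness $|T_\allsets| = \contcard$; hence $|T| \le \contcard = |T_\allsets|$, and I may fix an injection $t \mapsto s_t$ of $T$ into $T_\allsets$. For each $t \in T$ let $v_t$ be the trace over $\ap'$ given by $v_t(n) = t(n) \cup \set{\marker_i} \cup s_t(n)$, and for each remaining $s \in T_\allsets \setminus \set{s_t \mid t \in T}$ let $w_s$ be the trace given by $w_s(n) = s(n)$ (carrying no atomic proposition of $\ap$ and no marker). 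These are well defined because $\ap$, the markers, and $\ap_\allsets$ are pairwise disjoint. Put $T' = \set{v_t \mid t \in T} \cup \set{w_s \mid s \in T_\allsets \setminus \set{s_t \mid t \in T}}$.

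First I would check $\enc_i(T') = T$: the traces of $T'$ whose $\set{\marker_i}$-projection is $\set{\marker_i}^\omega$ are exactly the $v_t$, whose $\ap$-projections are precisely the $t \in T$, while the $w_s$ carry no marker and are discarded by $\enc_i$. In particular $T = \emptyset$ yields a $T'$ with no marked trace, so $\enc_i(T') = \emptyset$; this is exactly the degenerate case the marking was introduced for, since the $\ap_\allsets$-projection of $T'$ can never itself be empty. Next, the conjunct $\phi_\parti^{i}$ of the guard holds by construction: each trace of $T'$ is uniformly marked ($v_t$) or uniformly unmarked ($w_s$) on $\marker_i$ and carries no $\marker_{i'}$ with $i' \neq i$; moreover the map sending a trace of $T'$ to its $\ap_\allsets$-projection is a bijection onto $T_\allsets$, because the $s_t$ are pairwise distinct and disjoint from $T_\allsets \setminus \set{s_t \mid t \in T}$, so no two distinct traces of $T'$ share an $\ap_\allsets$-projection. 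Both requirements of $\phi_\parti^{i}$ are therefore met, and this check only involves quantifiers bounded to $X_i$, hence is independent of $\unidisvar$.

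Finally, for the conjunct $\phi_\allsets^\smalar[Z/X_i]$, the $\ap_\allsets$-projection of $T'$ equals $\set{s_t \mid t \in T} \cup (T_\allsets \setminus \set{s_t \mid t \in T}) = T_\allsets$, so this conjunct holds for every $\Pi$ with $\Pi(X_i) = T'$ by Lemma~\ref{lemma_uncountmodels} (part~\ref{lemma_uncountmodels_largest} when $\smalar = \largest$, part~\ref{lemma_uncountmodels_smallest} when $\smalar = \smallest$), reading $X_i$ in place of the free variable $Z$. The one delicate point is that, under closed-world semantics, the second-order quantifier inside $\phi_\allsets^\smalar$ ranges over subsets of $\Pi(\unidisvar)$, so I must apply the biconditional of Lemma~\ref{lemma_uncountmodels} uniformly over all $\Pi$ extending $X_i \mapsto T'$, regardless of $\Pi(\unidisvar)$; this is legitimate because that lemma is proved for an arbitrary assignment, and the direction I use relies only on $\Pi(X_i)$ realizing every $\setprop$-trace (so that each admissible contradiction-free set induces a consistent $\setprop$-witness in $\Pi(X_i)$), which our $T'$ guarantees. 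I expect the main obstacle to be precisely this reconciliation: the guard is rigid, forcing the $\ap_\allsets$-projection to be the \emph{fixed} uncountable set $T_\allsets$, yet we must realize an \emph{arbitrary} target $T$ (including $\emptyset$); the $\marker_i$-marking, which decouples "which traces are projected" from "which $\ap_\allsets$-patterns occur", is exactly what resolves the tension.
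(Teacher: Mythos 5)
Your proof is correct and follows essentially the same route as the paper: pair each trace of $T$ injectively with an element of $T_\allsets$, mark those traces with $\marker_i$, and pad with unmarked traces so that the $\ap_\allsets$-projection of $T'$ is exactly $T_\allsets$ and distinct traces of $T'$ have distinct $\ap_\allsets$-projections, whence $\phi_\parti^i$ and (via Lemma~\ref{lemma_uncountmodels}) $\phi_\allsets^\smalar[Z/X_i]$ hold. The only cosmetic difference is that the paper fixes a bijection $f\colon(\pow{\ap})^\omega\to T_\allsets$ and pads with the unmarked traces $t\merge f(t)$ for $t\notin T$, whereas you pad with bare $\ap_\allsets$-traces; both satisfy the two properties claimed in the lemma.
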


\begin{proof}
Fix a bijection~$f \colon (\pow{\ap})^\omega \rightarrow T_\allsets$, which can, e.g., be obtained by applying the Schröder-Bernstein theorem.
Now, we define 
\[T' = \set{t\merge f(t) \merge\set{\marker_{i}}^\omega \mid t \in T} \cup \set{t\merge f(t) \mid t \in (\pow{\ap})^\omega \setminus T}. \]
By definition, we have $\enc_i(T') = T$, satisfying the first requirement on $T'$.
Furthermore, the $\ap_\allsets$-projection of $T'$ is $T_\allsets$, each trace in $T'$ is either marked by $\marker_{i}$ at every position or at none, the other markers do not appear in traces in $T'$, and, due to $f$ being a bijection, there are no two traces in $T'$ with the same $\ap_\allsets$-projection.
Hence, due to Lemma~\ref{lemma_uncountmodels}, $T'$ does indeed satisfy $\phi_\guard^{i,\smalar}$.
\end{proof}

Now, we explain how to mimic quantification over arbitrary sets via quantification over maximal or minimal sets satisfying the guards we have constructed above.
Let $\smalar \in \set{\smallest, \largest}$ and let $\varphi^{\smalar}$ be the $\sohyltlfpsmalar$-sentence obtained from $\phi$ by inductively replacing 
\begin{itemize}
    \item each $\exists X_i.\phi'$ by $\exists (X_i, \smalar, \varphi_\guard^{i,\smalar})$, 
    \item each $\forall X_i.\phi'$ by $\forall (X_i, \smalar, \varphi_\guard^{i,\smalar})$, 
    \item each $\exists \pi \in X_i.\ \phi'$ by $\exists \pi \in X_i.\ (\marker_{i})_{\pi} \wedge \psi'$, and 
    \item each $\forall \pi \in X_i.\ \phi'$ by $\forall \pi \in X_i.\ (\marker_{i})_{\pi} \rightarrow \psi'$.
\end{itemize}

We show that $\phi$ and $\phi^{\smalar}$ are in some sense equivalent. 
Obviously, they are not equivalent in the sense that they have the same models as $\phi^{\smalar}$ uses in general the additional propositions in $\ap' \setminus \ap$, which are not used by $\phi$.
But, by \emph{extending} a model of $\phi$ we obtain a model of $\phi^{\smalar}$.
Similarly, by \emph{ignoring} the additional propositions (i.e., the inverse of the extension) in a model of $\phi^{\smalar}$, we obtain a model of $\phi$.

As the model is captured by the assignment to the variable~$\unidisvar$, this means the interpretation of $\unidisvar$ when evaluating $\phi$ and the the interpretation of $\unidisvar$ when evaluating $\phi^{\smalar}$ have to satisfy the extension property to be defined.
However, we show correctness of our translation by induction.
Hence, we need to strengthen the induction hypothesis and require that the variable assignments for $\varphi$ and $\phi^{\smalar}$ satisfy the extension property for all free variables.

Formally, let $\Pi$ and $\Pi'$ be two variable assignments such that 
\begin{itemize}
    \item $\Pi(\pi) \in (\pow{\ap})^\omega$ and $\Pi(X) \subseteq (\pow{\ap})^\omega$ for all $\pi$ and $X$ in the domain of $\Pi$, and 
    \item $\Pi'(\pi) \in (\pow{\ap'})^\omega$ and $\Pi'(X) \subseteq (\pow{\ap'})^\omega$ for all $\pi$ and $X$ in the domain of $\Pi'$.
\end{itemize}
We say that $\Pi'$ extends $\Pi$ if they have the same domain (which must contain $\unidisvar$, but not $\univar$ as this variable may not be used under closed-world semantics) and we have that
\begin{itemize}
    \item for all $\pi$ in the domain, the $\ap$-projection of $\Pi'(\pi)$ is $\Pi(\pi)$, 
    \item for all $X_i$ for $i \in [k]$, in the domain, $\Pi(X_i) = \enc_i(\Pi'(X_i))$, and
    \item $\Pi'(\unidisvar) = \extend(\Pi(\unidisvar))$ where
    \[\extend(T) = \set{t \merge t'{} \merge \set{\marker_{i}} \mid t \in T, t' \in T_\allsets, \text{ and } i \in [k]} \cup \set{t \merge t' \mid t \in T \text{ and } t' \in T_\allsets},\]
    which implies that the $\ap$-projection of $\Pi'(\unidisvar)$ is $\Pi(\unidisvar)$.
\end{itemize}

Note that if $T$ is a subset of $\Pi(\unidisvar)$, then there exists a subset~$T'$ of $\Pi'(\unidisvar)$ such that $\enc_i(T') = T$, i.e., $\Pi'(\unidisvar)$ contains enough traces to mimic the quantification of subsets of $\Pi(\unidisvar)$ using our encoding.
Furthermore, if $T' \subseteq (\pow{\ap'})^\omega$  has the form~$\extend(T)$ for some $T \subseteq (\pow{\ap})^\omega$, then this $T$ is unique.

The following lemma states that our translation of $\sohyltl$ into $\sohyltlfpmax$ and $\sohyltlfpmin$ is correct. 

\begin{lemma}
\label{lemma_mmcorrectness}
Let $\Pi'$ extend $\Pi$. Then, $\Pi \models_\cw \phi$ if and only if $\Pi' \models_\cw \varphi^{\smalar}$.
\end{lemma}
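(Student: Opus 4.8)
The plan is to prove Lemma~\ref{lemma_mmcorrectness} by structural induction on $\phi$, simultaneously for both polarities $\smalar \in \set{\smallest,\largest}$, with the inductive hypothesis stated exactly as in the lemma: for every pair $\Pi'$ extending $\Pi$ and every subformula, satisfaction is preserved. The base cases and the Boolean/temporal cases of the quantifier-free part~$\psi$ are immediate, since extension preserves the $\ap$-projection of every first-order trace variable in the domain, and atomic propositions~$\prop_\pi$ with $\prop \in \ap$, the operators~$\X$, $\U$, and negation and disjunction only ever inspect the $\ap$-labels of the assigned traces; the suffix operation~$\suffix{\Pi}{j}$ commutes with extension. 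The interesting cases are the four quantifiers.

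For the first-order quantifiers I would argue as follows. Consider $\forall \pi \in X_i.\ \phi'$, translated to $\forall \pi \in X_i.\ (\marker_i)_\pi \rightarrow \psi'$. Since $\Pi'$ extends $\Pi$, we have $\Pi(X_i) = \enc_i(\Pi'(X_i))$. By definition of $\enc_i$, the traces~$t' \in \Pi'(X_i)$ whose $\set{\marker_i}$-projection is $\set{\marker_i}^\omega$ are exactly those satisfying the guard~$(\marker_i)_\pi$ in the translated formula, and their $\ap$-projections are precisely the elements of $\Pi(X_i)$. Thus ranging over $t \in \Pi(X_i)$ on the left corresponds bijectively (at the level of $\ap$-projections) to ranging over the marked $t' \in \Pi'(X_i)$ on the right; for each such pair, $\Pi'[\pi \mapsto t']$ extends $\Pi[\pi \mapsto t]$, so the induction hypothesis on $\phi'$ closes the case. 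The unmarked traces on the right are exactly filtered out by the guard~$(\marker_i)_\pi$, which is why the translation inserts it. The existential case is symmetric.

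The second-order quantifiers are where I expect the real work, and this is the main obstacle. Consider $\exists X_i.\ \phi'$, translated to $\exists (X_i,\smalar,\varphi_\guard^{i,\smalar}).\ \psi'$ (where $\psi'$ is the recursively translated body). For the forward direction, given a witness~$T \subseteq \Pi(\unidisvar)$ under closed-world semantics, I would invoke Lemma~\ref{lemma_encodingrichness} to obtain a $T' \subseteq (\pow{\ap'})^\omega$ with $\enc_i(T') = T$ and $\Pi'[X_i \mapsto T'] \models_\cw \phi_\guard^{i,\smalar}$; using the remark that $\Pi'(\unidisvar) = \extend(\Pi(\unidisvar))$ contains enough traces, I can arrange $T' \subseteq \Pi'(\unidisvar)$ so that $T'$ is a legal closed-world witness. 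The crux is then showing $T' \in \solutions_\cw(\Pi', (X_i,\smalar,\phi_\guard^{i,\smalar}))$, i.e.\ that $T'$ is not merely a model of the guard but a \emph{maximal} (resp.\ minimal) one: this is exactly what Lemma~\ref{lemma_guardcorrectness} supplies, since any strict superset (resp.\ subset) of $T'$ fails the guard. Since $\Pi'[X_i \mapsto T']$ extends $\Pi[X_i \mapsto T]$, the induction hypothesis on $\phi'$ finishes the direction. For the backward direction, given any $T' \in \solutions_\cw(\Pi', (X_i,\smalar,\phi_\guard^{i,\smalar}))$, the guard forces the $\ap_\allsets$-projection of $T'$ to be $T_\allsets$ and forces the partition/uniqueness properties of $\phi_\parti^i$, so that $T := \enc_i(T')$ is a well-defined subset of $\Pi(\unidisvar)$ and $\Pi'[X_i \mapsto T']$ extends $\Pi[X_i \mapsto T]$; the induction hypothesis then transfers satisfaction back. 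The universal second-order case is dual, quantifying over all solution sets on the right and all subsets $T \subseteq \Pi(\unidisvar)$ on the left; the delicate point, again handled by Lemmas~\ref{lemma_guardcorrectness} and~\ref{lemma_encodingrichness} together, is the exact correspondence between the solution sets of the guard and the subsets of the universe of discourse, ensuring neither side quantifies over anything the other misses.
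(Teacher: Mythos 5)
Your proposal is correct and follows essentially the same route as the paper: induction over subformulas, with the quantifier-free cases handled by the projection-preservation built into the extension relation, the first-order cases handled via the marker guard and $\enc_i$, and the second-order cases handled by combining Lemma~\ref{lemma_encodingrichness} (to realize any $T\subseteq\Pi(\unidisvar)$ as $\enc_i(T')$ for a guard-satisfying $T'$) with Lemma~\ref{lemma_guardcorrectness} (to conclude that every guard-satisfying set is automatically maximal/minimal, hence in $\solutions_\cw$), and taking $T=\enc_i(T')$ in the converse direction. The only point the paper treats that you do not mention explicitly is first-order quantification over $\unidisvar$ (which carries no marker), but that case is routine given your setup.
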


\begin{proof}
By induction over the subformulas~$\psi$ of $\phi$.

First, let us consider the case of atomic propositions, i.e., $\psi = \prop_\pi$ for some $\prop \in \ap$ and some $\pi$ in the domains of $\Pi$ and $\Pi'$. We have 
\[
\Pi\models_\cw\prop_\pi \Leftrightarrow 
\prop \in \Pi(\pi)(0) \Leftrightarrow 
\prop \in \Pi'(\pi)(0) \Leftrightarrow 
\Pi'\models_\cw\prop_\pi,
\]
as $\Pi(\pi)$ and $\Pi'(\pi)$ have the same $\ap$-projection due to $\Pi'$ extending $\Pi$.

The cases of Boolean and temporal operators are straightforward applications of the induction hypothesis. So, it only remains to consider the quantifiers. 

Let $\psi = \exists X_i.\ \psi_0$ for some $i \in [k]$. Then, we have $\varphi^{\smalar} = \exists (X_i, \smalar, \varphi_\guard^{i,\smalar}).\ \psi_0^{\smalar}$.
By induction hypothesis, we have for all $T \subseteq (\pow{\ap})^\omega$ and $T' \subseteq (\pow{\ap'})^\omega$ with $\enc_i(T') = T$, the equivalence
\[\Pi[X_i \mapsto T] \models_\cw \psi_0 \Leftrightarrow \Pi'[X_i \mapsto T' ] \models_\cw \psi_0^{\smalar}.\]
Now, we have 
\begin{align*}
&{}\Pi\models_\cw \psi\\
 \Leftrightarrow{} &{}\text{there exists a $T \subseteq \Pi(\unidisvar)$ such that } \Pi[X_i \mapsto T] \models_\cw \psi_0 \\
\xLeftrightarrow{\ast}{}&{}\text{there exists a $T' \subseteq \Pi'(\unidisvar)$ such that } \Pi'[X_i \mapsto T'] \models_\cw \psi_0^{\smalar} \text{ and  } T' \in \solutions_\cw(\Pi,( X_i, \smalar, \varphi_\guard^{i,\smalar} ))
\\
\Leftrightarrow{}&{}\Pi'\models_\cw \psi^{\smalar}.
\end{align*}
Here, the equivalence marked with $\ast$ is obtained by applying the induction hypothesis:
\begin{itemize}
    \item For the left-to-right direction, given $T$, we pick $T'$ such that $\enc_i(T') = T$, which is always possible due to Lemma~\ref{lemma_encodingrichness}. Furthermore, $T'$ is in $\solutions_\cw(\Pi,( X_i, \smalar, \varphi_\guard^{i,\smalar} )$, due to Lemma~\ref{lemma_encodingrichness} and Lemma~\ref{lemma_guardcorrectness}.
    \item For the right-to-left direction, given $T'$, we pick $T = \enc_i(T')$.
\end{itemize}
Thus, in both directions, $\Pi'[X_i \mapsto T']$ extends $\Pi[X_i \mapsto T] $, i.e., the induction hypothesis is indeed applicable.
The argument for universal set quantification is dual.

So, it remains to consider trace quantification. 
First, let $\psi= \exists \pi \in X_i.\ \psi_0$ for some $i \in [k]$, which implies
$\psi^{\smalar} = \exists \pi \in X.\ (\marker_{X})_{\pi} \wedge \psi'$.
Now,
\begin{align*}
{}&{}\Pi \models_\cw \psi \\
\Leftrightarrow{}&{}
\text{there exists a $t \in \Pi(X_i)$ such that } \Pi[\pi \mapsto t] \models_\cw \psi_0 \\
\xLeftrightarrow{\ast}{}&{}
\text{there exists a $t' \in \Pi'(X_i)$ such that } \Pi'[\pi \mapsto t'] \models_\cw \psi_0^{\smalar} \\
\Leftrightarrow{}&{}
\Pi' \models_\cw (\marker_{i})_\pi \wedge \psi^{\smalar}.
\end{align*}
Here, the equivalence marked by $\ast$ follows from the induction hypothesis: 
\begin{itemize}
    \item For the left-to-right direction, given $t$, we can pick a $t'$ with the desired properties as $\Pi'$ extends $\Pi$, which implies that $\Pi(X_i) = \enc_i(\Pi'(X_i))$, which in turn implies that $\Pi'(X_i)$ contains a trace~$t'$ marked by $\marker_{i}$ whose $\ap$-projection is $t$.
    \item For the right-to-left direction, given $t'$, we pick $t$ as the $\ap$-projection of $t'$, which is in $\Pi(X_i)$, as $\Pi(X_i) = \enc_i(\Pi'(X_i))$ and $t'$ is marked by $\marker_i$ due to $\Pi' \models_\cw (\marker_{i})_\pi$.
\end{itemize}
Thus, in both directions, $\Pi'[\pi \mapsto t']$ extends $\Pi[\pi \mapsto t] $, i.e., the induction hypothesis is indeed applicable.
The argument for universal trace quantification is again dual. 

So, it only remains to consider trace quantification over $\unidisvar$.
Consider $\psi = \exists \pi \in \unidisvar.\ \psi_0$, which implies $\psi^{\smalar} = \exists \pi \in \unidisvar.\ \psi_0^{\smalar}$.
Then, we have
\begin{align*}
{}&{}\Pi\models_\cw \psi \\
\Leftrightarrow{}&{} \text{there exists a $t \in \Pi(\unidisvar)$ such that $\Pi[\pi \mapsto t] \models_\cw \psi_0$} \\
\xLeftrightarrow{\ast}{}&{} \text{there exists a $t' \in \Pi'(\unidisvar)$ such that $\Pi'[\pi \mapsto t'] \models_\cw \psi_0^{\smalar}$}\\
\Leftrightarrow{}&{} \Pi'\models_\cw \psi^{\smalar}
,    
\end{align*}
where the equivalence marked with $\ast$ follows from the induction hypothesis, which is applicable as the $\ap$-projection of $\Pi'(\unidisvar)$ is equal to $\Pi(\unidisvar)$, which implies that we can choose $t'$ from 
$t$ for the left-to-right direction (and choose $t$ from 
$t'$ for the right-to-left direction) such that $\Pi'[\pi \mapsto t']$ extends $\Pi[\pi \mapsto t]$.
Again, the argument for universal quantification is dual.
\end{proof}

Recall that $\phi$ is satisfied by some set~$T \subseteq (\pow{\ap})^\omega$ of traces (under closed-world semantics) if $\Pi_\emptyset[\unidisvar\mapsto T] \models_\cw \varphi$.
Similarly, $\phi^\smalar$ is satisfied by some set~$T' \subseteq (\pow{\ap'})^\omega$ of traces (under closed-world semantics) if $\Pi_\emptyset[\unidisvar\mapsto T'] \models_\cw \varphi^\smalar$.
The following corollary of Lemma~\ref{lemma_mmcorrectness} holds as $\Pi_\emptyset[\unidisvar\mapsto \extend(T)]$ extends $\Pi_\emptyset[\unidisvar\mapsto T]$.

\begin{corollary}
\label{coro_mmcorrectness}
Let $T\subseteq (\pow{\ap})^\omega$, and $\smalar \in \set{\largest,\smallest}$. Then, $T \models_\cw \phi$ if and only if $\extend(T) \models_\cw \phi^\smalar$.
\end{corollary}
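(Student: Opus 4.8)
The plan is to obtain the corollary as a direct instantiation of Lemma~\ref{lemma_mmcorrectness}, with essentially no new work beyond unwinding the definition of satisfaction for sentences under closed-world semantics. Recall that for a sentence we have $T \models_\cw \phi$ exactly when $\Pi_\emptyset[\unidisvar \mapsto T] \models_\cw \phi$, and likewise $\extend(T) \models_\cw \phi^\smalar$ exactly when $\Pi_\emptyset[\unidisvar \mapsto \extend(T)] \models_\cw \phi^\smalar$. So I would set $\Pi = \Pi_\emptyset[\unidisvar \mapsto T]$ and $\Pi' = \Pi_\emptyset[\unidisvar \mapsto \extend(T)]$ and then simply invoke the lemma on this pair, provided $\Pi'$ extends $\Pi$.

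Thus the only genuine check is the extension property. I would verify the four defining clauses in turn. Both assignments have domain exactly $\set{\unidisvar}$ and neither mentions $\univar$, so they share a domain as required. The clauses constraining first-order variables $\pi$ (via $\ap$-projection) and second-order variables $X_i$ with $i \in [k]$ (via $\Pi(X_i) = \enc_i(\Pi'(X_i))$) hold vacuously, since no such variables lie in the common domain. The final clause asks for $\Pi'(\unidisvar) = \extend(\Pi(\unidisvar))$, which holds by construction because $\Pi'(\unidisvar) = \extend(T) = \extend(\Pi(\unidisvar))$. Hence $\Pi'$ extends $\Pi$.

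Having established the extension property, Lemma~\ref{lemma_mmcorrectness} gives $\Pi \models_\cw \phi$ if and only if $\Pi' \models_\cw \phi^\smalar$, and folding the two satisfaction definitions back in turns this into precisely $T \models_\cw \phi$ if and only if $\extend(T) \models_\cw \phi^\smalar$. I do not expect any real obstacle here: all the substantive reasoning has already been carried out in proving Lemma~\ref{lemma_mmcorrectness}, and this corollary is a routine specialization to the top-level assignments. The only subtlety worth stating explicitly is that the demanding linkage conditions between $\Pi$ and $\Pi'$ on $X_i$ and on $\pi$ impose no constraints at this stage precisely because the empty-domain assignment carries no first- or second-order variables, so they are satisfied vacuously.
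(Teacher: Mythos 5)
Your proof is correct and matches the paper's argument exactly: the paper also obtains the corollary by observing that $\Pi_\emptyset[\unidisvar\mapsto \extend(T)]$ extends $\Pi_\emptyset[\unidisvar\mapsto T]$ and then invoking Lemma~\ref{lemma_mmcorrectness}. Your explicit check that the extension clauses for trace and second-order variables hold vacuously is a nice touch the paper leaves implicit.
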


So, to conclude the construction, we need to ensure that models of $\phi^\smalar$ have the form~$\extend(T)$ for some $T \subseteq (\pow{\ap})^\omega$.
For the two satisfiability problems, we do so using a sentence that only has such models while for the model-checking problem, we can directly transform the transition system we are checking so that it satisfies this property.

First, we construct $\phi_\extend^{\smalar}$ for $\smalar \in \set{\largest,\smallest}$ such that $\Pi'\models_\cw \phi_\extend^{\smalar}$ if and only if $\Pi'(\unidisvar) = \extend(T)$ for some $T \subseteq (\pow{\ap})^\omega$:
\begin{align*}
\phi_\extend^{\smalar} ={}&{} (\forall \pi \in \unidisvar.\ (\prop_\pi \wedge \neg \prop_\pi)) \vee \\
{}&{}\phi_\alltraces^{\smalar} \wedge \forall \pi \in \unidisvar.\ \forall \pi' \in \unidisvar.\ \\
{}&{}\quad
\left(
\bigwedge_{i \in [k]} \exists \pi'' \in \unidisvar.\ 
\equals{\pi}{\pi''}{\ap_\allsets} \wedge 
\equals{\pi'}{\pi''}{\ap} \wedge 
\G (\marker_i)_{\pi''} \wedge \bigwedge_{i' \in [k]\setminus\set{i}} \G\neg(\marker_{i'})_{\pi''}
\right) \wedge\\
{}&{}\quad
\left( 
 \exists \pi'' \in \unidisvar.\ 
\equals{\pi}{\pi''}{\ap_\allsets} \wedge 
\equals{\pi'}{\pi''}{\ap} \wedge 
\bigwedge_{i \in [k]} \G\neg(\marker_{i})_{\pi''}
\right)
\end{align*}
Here, the first disjunct is for the special case of the empty $T$ (where $\prop$ is an arbitrary proposition) while the second one expresses intuitively that for each $t \in T_\allsets$ (bound to $\pi$) and each $t'$ in the $\ap$-projection of the interpretation of $\unidisvar$ (bound to $\pi'$), the traces~$t\merge t'$ and the traces~$t\merge t'{}\merge \set{\marker_i}^\omega$, for each $i \in [k]$, are in the interpretation of $\unidisvar$.

Finally, let us consider the transformation of transition systems: Given a transition system~$\tsys = (V, E, I, \lambda)$ we construct a transition system~$\extend(\tsys)$ such that $\traces(\extend(\tsys)) = \extend(\traces(\tsys))$.
Recall the transition system~$\tsys_\allsets$ depicted in Figure~\ref{fig_tallsets}, which satisfies $\traces(\tsys_\allsets) = T_\allsets$. 
Let $\tsys_\allsets = (V_a, E_a, I_a,\lambda_a)$.
We define the transition system~$\extend(\tsys) = (V', E', I', \lambda')$ where 
\begin{itemize}
    \item $V' = V \times V_a \times [k] \times\set{0,1} $,
    \item $E' = \set{((v,v_a,i,b),(v',v_a',i,b)) \mid (v,v') \in E, (v_a, v_a') \in E_a, i \in [k] \text{ and } b \in \set{0,1} }$,
    \item $I' = I \times I_a \times [k] \times\set{0,1} $, and
    \item $\lambda'(v,v_a,i,b) = 
    \begin{cases}
        \lambda(v) \cup \lambda_a(v_a) \cup \set{\marker_i} &\text{ if } b=1,\\
        \lambda(v) \cup \lambda_a(v_a)  &\text{ if } b=0.
    \end{cases}$
\end{itemize}
Note that we indeed have $
\traces(\extend(\tsys)) = \extend(\traces(\tsys))$.
Hence, if $\Pi(\unidisvar) = \traces(\tsys)$ and $\Pi'(\unidisvar) = \traces(\extend(\tsys))$, then $\Pi$ and $\Pi'$ satisfy the requirement spelled out in Corollary~\ref{coro_mmcorrectness}.

\begin{lemma}
\label{lem_so2sofp}
Let $\phi$ be a $\sohyltl$ sentence and $\phi^{\smalar}$ as defined above (for some $\smalar \in \set{\largest,\smallest}$), and let $\tsys$ be a transition system.
\begin{enumerate}
    \item\label{lem_so2sofp_sat} $\phi$ is satisfiable under closed-world semantics if and only if $\phi^{\smalar} \wedge \phi_\extend^{\smalar}$ is satisfiable under closed-world semantics.
    \item\label{lem_so2sofp_fssat} $\phi$ is finite-state satisfiable under closed-world semantics if and only if $\phi^{\smalar} \wedge \phi_\extend^{\smalar}$ is finite-state satisfiable under closed-world semantics. 
    \item\label{lem_so2sofp_mc} $\tsys \models_\cw \phi$ if and only if $\extend(\tsys) \models_\cw \phi^{\smalar}$.  
\end{enumerate}
\end{lemma}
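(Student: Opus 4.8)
The plan is to derive all three parts from three facts already established in the excerpt: Corollary~\ref{coro_mmcorrectness} (which says $T \models_\cw \phi$ iff $\extend(T) \models_\cw \phi^{\smalar}$), the defining property of $\phi_\extend^{\smalar}$ (namely $\Pi' \models_\cw \phi_\extend^{\smalar}$ iff $\Pi'(\unidisvar) = \extend(T)$ for some $T \subseteq (\pow{\ap})^\omega$), and the identity $\traces(\extend(\tsys)) = \extend(\traces(\tsys))$. Two further small observations will be used: the $\ap$-projection of $\extend(T)$ equals $T$, and such a $T$ is uniquely determined by $\extend(T)$. I would prove part~\ref{lem_so2sofp_mc} first since it is immediate and is reused, then parts~\ref{lem_so2sofp_sat} and~\ref{lem_so2sofp_fssat}. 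The only genuinely nontrivial step is the backward direction of finite-state satisfiability.

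For part~\ref{lem_so2sofp_mc}, I would simply chain the three facts: $\tsys \models_\cw \phi$ unfolds to $\traces(\tsys) \models_\cw \phi$, which by Corollary~\ref{coro_mmcorrectness} is equivalent to $\extend(\traces(\tsys)) \models_\cw \phi^{\smalar}$, which by $\traces(\extend(\tsys)) = \extend(\traces(\tsys))$ is precisely $\traces(\extend(\tsys)) \models_\cw \phi^{\smalar}$, i.e.\ $\extend(\tsys) \models_\cw \phi^{\smalar}$. No extra argument is required.

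For part~\ref{lem_so2sofp_sat}, the forward direction starts from a model $T \subseteq (\pow{\ap})^\omega$ of $\phi$; Corollary~\ref{coro_mmcorrectness} gives $\extend(T) \models_\cw \phi^{\smalar}$, and since $\extend(T)$ has exactly the shape required by $\phi_\extend^{\smalar}$, also $\extend(T) \models_\cw \phi_\extend^{\smalar}$, so $\extend(T)$ witnesses satisfiability of the conjunction. Conversely, any model $T'$ of $\phi^{\smalar} \wedge \phi_\extend^{\smalar}$ satisfies $\phi_\extend^{\smalar}$, so $T' = \extend(T)$ for some (unique) $T \subseteq (\pow{\ap})^\omega$; then $\extend(T) \models_\cw \phi^{\smalar}$, and Corollary~\ref{coro_mmcorrectness} yields $T \models_\cw \phi$, so $\phi$ is satisfiable.

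Part~\ref{lem_so2sofp_fssat} follows the same pattern. For the forward direction I would take a transition system $\tsys$ with $\traces(\tsys) \models_\cw \phi$ and use $\extend(\tsys)$, which is again finite by construction: part~\ref{lem_so2sofp_mc} gives $\extend(\tsys) \models_\cw \phi^{\smalar}$, and since $\traces(\extend(\tsys)) = \extend(\traces(\tsys))$ has the form $\extend(T)$, the defining property of $\phi_\extend^{\smalar}$ gives $\extend(\tsys) \models_\cw \phi_\extend^{\smalar}$ as well. The obstacle is the converse direction, where I must recover a \emph{finite} witness for $\phi$. Given a transition system $\tsys'$ with $\traces(\tsys') \models_\cw \phi^{\smalar} \wedge \phi_\extend^{\smalar}$, the $\phi_\extend^{\smalar}$ conjunct forces $\traces(\tsys') = \extend(T)$ for some $T$, and Corollary~\ref{coro_mmcorrectness} then gives $T \models_\cw \phi$; but $T$ is a priori just a set of traces, not manifestly finite-state. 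To close the gap I would project $\tsys'$ to $\ap$: keep its vertices, edges, and initial vertices, and replace each vertex label by its intersection with $\ap$. The trace set of this finite transition system is exactly the $\ap$-projection of $\traces(\tsys')$, which equals the $\ap$-projection of $\extend(T)$, and that projection is $T$. Hence $T$ is finite-state and satisfies $\phi$, so $\phi$ is finite-state satisfiable, completing the argument.
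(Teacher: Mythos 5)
Your proposal is correct and follows essentially the same route as the paper: all three parts are reduced to Corollary~\ref{coro_mmcorrectness}, the defining property of $\phi_\extend^{\smalar}$, and the identity $\traces(\extend(\tsys)) = \extend(\traces(\tsys))$, and your recovery of a finite witness in the backward direction of part~\ref{lem_so2sofp_fssat} (dropping the propositions in $\ap' \setminus \ap$ from the vertex labels of $\tsys'$) is exactly the construction the paper uses. No gaps.
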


\begin{proof}
\ref{lem_so2sofp_sat}.) 
We have
\begin{align*}
{}&{} \text{$\phi$ is satisfiable under closed-world semantics}\\
\Leftrightarrow {}&{} \text{there exists a $T \subseteq (\pow{\ap})^\omega$ such that $ T \models_\cw \phi$}\\
\xLeftrightarrow{\ast} {}&{} \text{there exists a $T' \subseteq (\pow{\ap'})^\omega$ such that $ T' \models_\cw \phi^{\smalar}\wedge \phi_\extend^{\smalar}$}\\
\Leftrightarrow {}&{} \text{$\phi^{\smalar} \wedge \phi_\extend^{\smalar}$ is satisfiable under closed-world semantics.}
\end{align*}
Here, the equivalence marked with $\ast$ follows from Corollary~\ref{coro_mmcorrectness}:
\begin{itemize}
    \item For the left-to-right direction, we pick $T' = \extend(T)$. 
    \item For the right-to left direction, we pick $T$ to be the unique subset of $(\pow{\ap})^\omega$ such that $T' = \extend(T)$, which is well-defined as $T'$ satisfies $\phi_\extend^{\smalar}$. 
\end{itemize}

\ref{lem_so2sofp_fssat}.)
We have
\begin{align*}
{}&{} \text{$\phi$ is finite-state satisfiable under closed-world semantics}\\
\Leftrightarrow {}&{} \text{there exists a transition system~$\tsys$ over $\pow{\ap}$ such that $\traces(\tsys) \models_\cw \phi$}\\
\xLeftrightarrow{\ast} {}&{} \text{there exists a transition system~$\tsys'$ over $\pow{\ap'}$ such that $\traces(\tsys') \models_\cw \phi^{\smalar}\wedge \phi_\extend^{\smalar}$}\\
\Leftrightarrow {}&{} \text{$\phi^{\smalar} \wedge \phi_\extend^{\smalar}$ is finite-state satisfiable under closed-world semantics.}
\end{align*}
Here, the equivalence marked with $\ast$ follows from Corollary~\ref{coro_mmcorrectness}:
\begin{itemize}
    \item For the left-to-right direction, we pick $\tsys' = \extend(\tsys)$, which satisfies $\traces(\extend(\tsys)) = \extend(\traces(\tsys))$.
    \item For the right-to left direction, we pick $\tsys$ to be the transition system obtained from $\tsys'$ by removing all propositions in $\ap' \setminus \ap$ from the vertex labels.
    This implies that $\traces(\tsys')$ is equal to $\extend(\traces(\tsys))$, as $\traces(\tsys')$ satisfies $\phi_\extend^{\smalar}$.
\end{itemize}

\ref{lem_so2sofp_mc}.) 
Due to Corollary~\ref{coro_mmcorrectness} and $\traces(\extend(\tsys)) = \extend(\traces(\tsys))$, we have $\tsys \models_\cw \phi$ if and only if $\extend(\tsys)\models_\cw \phi^\smalar$.
\end{proof}

Finally, Theorem~\ref{thm_hyltlmm_complexity} is now a direct consequence of Lemma~\ref{lem_so2sofp}, the fact that closed-world semantics can be reduced to standard semantics, and the fact that all three problems for $\sohyltl$ (under closed-world semantics) are equivalent to truth in third-order arithmetic~\cite{sohypercomplexity}.

%################################################
%################################################
%################################################
\section{\texorpdfstring{The Least Fixed Point Fragment of \boldmath$\sohyltlfp$}{Second-order HyperLTL with Least Fixed Points}}
\label{sec_lfp}

When Beutner et al.\ introduced $\sohyltl$ to express, e.g., common knowledge and asynchronous hyperproperties, which are not expressible in $\hyltl$, it turned out that all these examples could be expressed using least fixed points of $\hyltl$-definable operators. 
Thus, they studied the fragment of $\sohyltl$ where second-order quantifiers range only over such least fixed points.
As such fixed points are unique, set quantification therefore degenerates to fixed point computation. 
It is known~\cite{sohypercomplexity} that satisfiability for $\univar$-free sentences in this fragment is much simpler, i.e., $\Sigma_1^1$-complete, than for full $\sohyltl$, which is equivalent to truth in third-order arithmetic.
Furthermore, finite-state satisfiability and model-checking are in $\Sigma_2^2$ and $\Sigma_1^1$-hard, where the lower bounds already hold for $\univar$-free sentences while the upper bounds hold for arbitrary sentences. 

Here, we close these gaps and also settle the complexity of satisfiability for sentences that may contain $\univar$.

%################################################
%################################################
%################################################
\subsection{Syntax and Semantics}
\label{subsec_lfpsands}

A $\sohyltlfp$ sentence using only minimality constraints has the form 
\[
\phi = 
\gamma_1.\ \quant_1(Y_1, \smallest, \phi_1^\con).\ 
\gamma_2.\ \quant_2(Y_2, \smallest, \phi_2^\con).\ 
\ldots 
\gamma_k.\ \quant_k(Y_k, \smallest, \phi_k^\con).\
\gamma_{k+1}.\
\psi
\]
satisfying the following properties: 
\begin{itemize}
    \item Each $\gamma_j$ is a block~$
    \gamma_j = \quant_{\ell_{j-1}+1} \pi_{\ell_{j-1}+1} \in X_{\ell_{j-1}+1} \cdots \quant_{\ell_{j}} \pi_{\ell_{j}} \in X_{\ell_{j}} 
    $
    of trace quantifiers (with $\ell_0 = 0$). As $\phi$ is a sentence, this implies that we have $\set{X_{\ell_j+1}, \ldots, X_{\ell_{j}}} \subseteq \set{\univar,\unidisvar, Y_1, \ldots, Y_{j-1}}$.

    \item The free variables of $\psi_j^\con$ are among the trace variables quantified in the $\gamma_{j'}$ and $\univar,\unidisvar, Y_1, \ldots, Y_j$.
    
    \item $\psi$ is a quantifier-free formula. Again, as $\phi$ is a sentence, the free variables of $\psi$ are among the trace variables quantified in the $\gamma_j$.
\end{itemize}

Now, $\phi$ is an $\lfpsohyltlfp$ sentence\footnote{Our definition here differs slightly from the one of  \cite{DBLP:conf/cav/BeutnerFFM23} in that we allow to express the existence of some traces in the fixed point (via the subformulas~$\overdot{\pi}_i\tracein Y_j$). All examples and applications of~\cite{DBLP:conf/cav/BeutnerFFM23} are also of this form.}, if additionally each $\phi_j^\con$ has the form 
    \[ \phi_j^\con = \overdot{\pi}_1\tracein Y_j \wedge \cdots \wedge \overdot{\pi}_n \tracein Y_j \wedge   \forall \overdotdot{\pi}_1 \in Z_1.\ \ldots \forall \overdotdot{\pi}_{n'} \in Z_{n'}.\ \psi^\step_j \rightarrow \overdotdot{\pi}_m \tracein Y_j \]
 for some $n\ge 0$, $n' \ge 1$, where $1 \le m \le {n'}$, and where we have
\begin{itemize}
    \item $\set{\overdot{\pi}_1,\ldots, \overdot{\pi}_n} \subseteq \set{\pi_1, \ldots, \pi_{\ell_j} }$,
    \item $\set{Z_1, \ldots, Z_{n'}} \subseteq \set{\univar, \unidisvar, Y_1, \ldots, Y_j}$, and
    \item $\psi^\step_j$ is quantifier-free with free variables among $\overdotdot{\pi}_1, \ldots, \overdotdot{\pi}_{n'}, \pi_1, \ldots, \pi_{\ell_j}$.
    %\item $1 \le m \le {n'}$.
\end{itemize} 
As always, $\phi_j^\con$ can be brought into the required prenex normal form.

Let us give some intuition for the definition. To this end, fix some~$j \in \set{1,2,\ldots, k}$ and a variable assignment~$\Pi$ whose domain contains at least all variables quantified before $Y_j$, i.e., all $Y_{j'}$ and all variables in the $\gamma_{j'}$ for $j' < j$, as well as $\univar$ and $\unidisvar$. 
Then, 
\[ \phi_j^\con= \overdot{\pi}_1\in Y_j \wedge \cdots \wedge \overdot{\pi}_n \in Y_j \wedge \left(  \forall \overdotdot{\pi}_1 \in Z_1.\ \ldots \forall \overdotdot{\pi}_{n'} \in Z_{n'}.\ \psi^\step_j \rightarrow \overdotdot{\pi}_m \tracein Y_j\right) \]
induces the monotonic function~$f_{\Pi,j} \colon \pow{(\pow{\ap})^\omega} \rightarrow \pow{(\pow{\ap})^\omega}$ defined as
\begin{multline*}
f_{\Pi,j}(S) = S \cup \set{\Pi(\overdot{\pi}_1), \ldots, \Pi(\overdot{\pi}_n)} \cup  \set{\Pi'(\overdotdot{\pi}_m) \mid \Pi' = \Pi[\overdotdot{\pi}_1 \mapsto t_1, \ldots, \overdotdot{\pi}_{n'} \mapsto t_{n'}]\\
 \text{ for } t_i \in \Pi(Z_i) \text{ if } Z_i \neq Y_j \text{ and }t_i \in S \text{ if } Z_i = Y_j \text{ s.t. }\Pi' \models \psi^\step_j}.    
\end{multline*}
We define $S_0 = \emptyset$, $S_{\ell+1} =  f_{\Pi,j}(S_\ell)$, and 
\[\lfp(\Pi,j) = \bigcup\nolimits_{\ell\in\nats} S_\ell,\] 
which is the least fixed point of $f_{\Pi,j}$. Due to the minimality constraint on $Y_j$ in $\phi$, $\lfp(\Pi,j)$ is the unique set in $\solutions(\Pi, (Y_j,\smallest,\phi^\con_j))$. Hence, an induction shows that $\lfp(\Pi,j)$ only depends on the values~$\Pi(\pi)$ for trace variables~$\pi$ quantified before $Y_j$ as well as the values~$\Pi(\unidisvar)$ and $\Pi(\univar)$, but not on the values~$\Pi(Y_{j'})$ for $j' < j$ (as they are unique).

Thus, as $\solutions(\Pi, (Y_j,\smallest,\phi^\con_j))$ is a singleton, it is irrelevant whether $\quant_j$ is an existential or a universal quantifier. Instead of interpreting second-order quantification as existential or universal, here one should understand it as a deterministic least fixed point computation: choices for the trace variables and the two distinguished second-order variables uniquely determine the set of traces that a second-order quantifier assigns to a second-order variable.

\begin{remark}
\label{remark_lfpcw}
Note that the traces that are added to a fixed point assigned to $Y_j$ either come from another $Y_{j'}$ with $j' < j$, from the model (via $\unidisvar$), or from the set of all traces (via $\univar$). 
Thus, for $\univar$-free formulas, all second-order quantifiers range over (unique) subsets of the model, i.e., there is no need for an explicit definition of closed-world semantics. The analogue of closed-world semantics for $\lfpsohyltlfp$ is to restrict oneself to $\univar$-free sentences.
\end{remark}

%################################################
%################################################
%################################################
\subsection{Satisfiability under Standard Semantics}

Recall that $\lfpsohyltlfp$ satisfiability for $\univar$-free sentences is $\Sigma_1^1$-hard. 
Here, the lower bound already holds for the fragment $\hyltl$~\cite{hyperltlsatconf}, so the more interesting result is the upper bound.
It is obtained by showing that each such sentence has a countable model, and then showing that the existence of a countable model can be captured in $\Sigma_1^1$.

The upper bound on the size of models is obtained by showing that every model of a satisfiable sentence contains a countable subset that is also a model of the sentence.
Intuitively, one takes a minimal set that is closed under application of Skolem functions and shows that it has the desired properties.
This is correct, as in $\univar$-free sentences, only traces from the model are quantified over.
On the other hand, the sentence~$\forall \pi \in \univar.\ \exists \pi'\in\unidisvar.\ \equals{\pi}{\pi'}{\ap}$ has only uncountable models, if $\size{\ap}>1$.

Thus, in $\lfpsohyltlfp$ formulas with $\univar$, one can refer to all traces and thus mimic quantification over sets of natural numbers. 
Furthermore, the satisfiability problem asks for the existence of a model. This \emph{implicit} existential quantifier can be used to mimic existential quantification over sets of sets of natural numbers.
Together with the fact that one can implement addition and multiplication in $\hyltl$, one can show that $\lfpsohyltlfp$ satisfiability for sentences with $\univar$ is $\Sigma_1^2$-hard.

To prove a matching upper bound, we capture the existence of a model and Skolem functions witnessing that it is indeed a model in $\Sigma_1^2$.
Here, the challenge is to capture the least fixed points when mimicking the second-order quantification of $\lfpsohyltlfp$. Naively, this requires an existential quantifier (\myquot{there exists a set that satisfies the guard}) followed by a universal one (\myquot{all strict subsets do not satisfy the guard}).
However, as traces are encoded as sets, this would require universal quantification of type~$3$ objects.
Thus, this approach is not sufficient to prove a $\Sigma_1^2$ upper bound.
Instead, we do not explicitly quantify the fixed points, but instead use witnesses for the membership of traces in the fixed points.
This is sufficient, as the sets of traces quantified in $\lfpsohyltlfp$ are only used as ranges for trace quantification.

\begin{theorem}
\label{thm_satcomplexity_lfp_ss}
$\lfpsohyltlfp$ satisfiability is $\Sigma^2_1$-complete.
\end{theorem}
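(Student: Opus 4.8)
The plan is to prove the two bounds separately, with the lower bound following the roadmap already sketched in the prose preceding the theorem and the upper bound requiring the genuinely new idea.

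\textbf{Lower bound ($\Sigma_1^2$-hardness).} I would reduce from the canonical $\Sigma_1^2$-complete problem: given a formula $\exists \mathcal{X}_1 \subseteq \pow{\nats}.\ \cdots \exists \mathcal{X}_r \subseteq \pow{\nats}.\ \psi$ with $\psi$ using only type-$0$ and type-$1$ quantifiers, decide its truth in $\natsstruct$. The first step is to fix an encoding of natural numbers and of sets of natural numbers as traces. A natural number $n$ can be encoded by a trace that marks position $n$ with a distinguished proposition; a set $A \subseteq \nats$ (a type-$1$ object) is encoded by a single trace whose characteristic function it carries. The outer type-$2$ existential quantifiers $\exists \mathcal{X}_s$ are then captured by the \emph{implicit} existential quantifier of the satisfiability problem itself: the model $T$ we are asked to find will contain, in designated \myquot{tracks} separated by fresh propositions, the traces encoding the members of each $\mathcal{X}_s$. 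Inside the formula, universal quantification over type-$1$ objects $A$ is simulated by $\forall \pi \in \univar$ (this is exactly where $\univar$ is essential and where the $\Sigma_1^2$ rather than $\Sigma_1^1$ complexity comes from, since $\univar$ ranges over all of $(\pow{\ap})^\omega$), and membership $A \in \mathcal{X}_s$ is checked by a subformula asserting that some trace in the $\mathcal{X}_s$-track of $\unidisvar$ agrees with $\pi$ on the relevant propositions. The arithmetic structure $(+,\cdot,<,\in)$ must be made available: I would reuse the known fact, cited in the prose, that addition and multiplication on these trace-encodings of naturals are $\hyltl$-definable, and that $<$ and $\in$ reduce to trace-comparison formulas. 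Least fixed points are not even needed for hardness here, so the reduction lands inside $\lfpsohyltlfp$ trivially. The correctness argument is a routine back-and-forth between models of the constructed sentence and satisfying assignments of the arithmetic formula.

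\textbf{Upper bound ($\Sigma_1^2$-membership).} This is the crux, and the proof plan follows the two hints given in the prose. A model of a $\lfpsohyltlfp$ sentence need not be countable (because of $\univar$), so the countable-model argument used for the $\univar$-free case is unavailable; instead I expect the membership proof to quantify the model directly as a type-$2$ object. Concretely, I would write a $\Sigma_1^2$ arithmetic sentence that begins with an existential type-$2$ quantifier \myquot{there exists a set of traces $T$} (each trace being a subset of $\nats$, so a set of traces is a type-$2$ object) together with existential type-$2$ quantifiers for the Skolem functions witnessing the existential trace quantifiers of the outermost blocks. The body must then assert, using only type-$0$ and type-$1$ quantifiers, that $T$ together with the Skolem data satisfies $\phi$.

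\textbf{Main obstacle and its resolution.} The hard part, flagged explicitly in the prose, is encoding the \emph{least fixed points} $\lfp(\Pi,j)$ assigned to the second-order variables $Y_j$ without incurring a type-$3$ quantifier. A naive encoding would existentially guess each fixed point (a set of traces, hence type-$2$) and then universally check, over all strictly smaller type-$2$ sets, that none satisfies the guard --- but that universal quantifier ranges over type-$2$ objects, i.e.\ is a type-$3$ quantifier, pushing the formula out of $\Sigma_1^2$. The resolution is to \emph{never materialize} the fixed points as quantified objects. Since each $\phi_j^\con$ is a least fixed point of a monotone operator built by the restricted inflationary rule in the syntax, membership of a trace in $\lfp(\Pi,j)$ has a finite \emph{derivation}: a trace enters the fixed point either as one of the seed traces $\Pi(\overdot\pi_1),\ldots,\Pi(\overdot\pi_n)$, or because the step formula $\psi^\step_j$ fires on traces already derived at earlier stages. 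I would therefore encode, for each $Y_j$, a \myquot{derivation witness} --- a well-founded assignment of stage-ordinals (or, since the fixed point is reached in $\le\omega$ steps, stage-naturals) to traces, itself a type-$2$ object that can be guessed existentially at the top level --- and then express \myquot{$t \in \lfp(\Pi,j)$} as a type-$1$-quantified statement over this witness, and \myquot{$t \notin \lfp(\Pi,j)$} as its negation. Because the fixed points are used only as ranges for \emph{trace} quantifiers $\forall \pi \in Y_j$ / $\exists \pi \in Y_j$ in the remainder of $\phi$, it suffices to decide membership of individual traces, and membership in a least fixed point of a $\hyltl$-definable monotone operator is itself expressible with only type-$0$/type-$1$ quantification once the stage-witness is fixed. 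With all genuine second-order ($=$ type-$2$) guesses pulled to the front as the existential prefix and everything else expressed arithmetically, the whole sentence fits the $\Sigma_1^2$ shape, completing the upper bound and hence, with the lower bound, the theorem.
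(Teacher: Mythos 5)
Your overall architecture matches the paper's on both bounds: for hardness, the implicit existential quantifier of satisfiability absorbs the type-$2$ quantifiers via marked traces, $\univar$ forces the model to contain a trace for every subset of $\nats$, and the known $\hyltl$-implementation of $(+,\cdot)$ supplies the arithmetic; for membership, you existentially quantify the model and Skolem functions as type-$2$ objects and, crucially, you identify the same obstacle the paper does (naively characterizing a least fixed point costs a universal type-$3$ quantifier) and resolve it the same way in spirit, by exploiting that membership in these finitary fixed points has a \emph{finite derivation}. Where you diverge is the implementation of that idea. The paper never characterizes $\lfp(\Pi,j)$ globally at all: for each trace quantifier $\exists\pi\in Y_j$ or $\forall\pi\in Y_j$ it quantifies, \emph{locally in the body}, a finite witness tree (a type-$1$ object) certifying membership of that single trace, and non-membership is just the type-$1$-universal negation; no correctness conditions on a global object are needed. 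You instead guess a global stage/rank function per $Y_j$ as a type-$2$ object up front and verify soundness and completeness of that guess in the body. This works, but it forces two extra obligations you gloss over: (i) since $\lfp(\Pi,j)$ depends on the values of all trace variables quantified before $Y_j$ (and these include universally quantified ones appearing \emph{after} your type-$2$ prefix), each stage function must be parameterized by the tuple of those traces, i.e., it is really a function from (assignment, trace) pairs to stages, and the verification conditions must be asserted for all parameter tuples; and (ii) you must actually write down both the soundness direction (every staged trace is derivable from lower-staged ones or is a seed) and the closure direction (the domain contains the seeds and is closed under $\psi^\step_j$), since only together do they pin the domain to the least fixed point. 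Both are doable with type-$0$/type-$1$ quantification given the type-$2$ witness, so your route goes through, but the paper's local witness trees buy a cleaner argument precisely because they sidestep the parameterization and the verification of a global object.
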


\begin{proof}
We begin with the lower bound. Let $S \in \Sigma_1^2$, i.e., there exists a formula of arithmetic of the form
\[
\phi(x) = \exists \mathcal{Y}_1\subseteq \pow{\nats}.\ \cdots \exists \mathcal{Y}_k\subseteq \pow{\nats}. \ \psi(x, \mathcal{Y}_1, \ldots,\mathcal{Y}_k)
\]
with a single free (first-order) variable~$x$ such that $S = \set{n \in\nats \mid \natsstruct\models\phi(n) }$, where $\psi$ is a formula with arbitrary quantification over type~$0$ and type~$1$ objects (but no third-order quantifiers) and free third-order variables~$\mathcal{Y}_i$, in addition to the free first-order variable~$x$.
We present a polynomial-time translation from natural numbers~$n$ to $\lfpsohyltlfp$ sentences~$\phi_n$ such that $n \in S$ (i.e., $\natsstruct \models \phi(n)$) if and only if $\phi_n$ is satisfiable.
This implies that $\lfpsohyltlfp$ satisfiability is $\Sigma_1^2$-hard.

Intuitively, we ensure that each model of $\phi_n$ contains enough traces to encode each subset of $\nats$ by a trace (this requires the use of $\univar$). 
Furthermore, we have additional propositions~$\marker_i$, one for each third-order variable~$\mathcal{Y}_i$ existentially quantified in $\phi$, to label the traces encodings sets.
Thus, the set of traces marked by $\marker_i$ encodes a set of sets, i.e., we have mimicked existential third-order quantification by quantifying over potential models. 
Hence, it remains to mimic quantification over natural numbers and sets of natural numbers as well as addition and multiplication, which can all be done in $\hyltl$: quantification over traces mimics quantification over sets and singleton sets (i.e., numbers) and addition and multiplication can be implemented in $\hyltl$.

Let $\ap = \set{\inprop} \cup \ap_\marker \cup \ap_\arith$ with  $\ap_\marker = \set{\marker_1,\ldots, \marker_k}$ and $\ap_\arith = \set{ \argone, \argtwo, \res, \add, \mult}$ and consider the following two formulas (both with a free variable~$\pi'$):
\begin{itemize}
    \item $\psi_0 = \left(\bigwedge_{\prop \in \ap_\arith} \G\neg\prop_{\pi'}\right) \wedge \left(\bigwedge_{i=1}^k (\G (\marker_i)_{\pi'}) \vee (\G\neg(\marker_i)_{\pi'})\right)$ expressing that the interpretation of $\pi'$ may not contain any propositions from $\ap_\arith$ and, for each $i \in [k]$, is either marked by $\marker_i$ (if $\marker_i$ holds at every position) or is not marked by $\marker_i$ (if $\marker_i$ holds at no position). Note that there is no restriction on the proposition~$\inprop$, which therefore encodes a set of natural numbers on each trace. 
    Thus, we can use trace quantification to mimic quantification over sets of natural numbers and quantification of natural numbers (via singleton sets).
    In our encoding, a trace bound to some variable~$\pi$ encodes a singleton set if and only if the formula~$(\neg \inprop_{\pi})\U(\inprop_{\pi} \wedge \X\G\neg \inprop_{\pi}) $ is satisfied.
    
    As explained above, we use the markings to encode the membership of such sets in the $\mathcal{Y}_i$, thereby mimicking the existential quantification of the $\mathcal{Y}_i$.

    However, we need to ensure that this marking is done consistently, i.e., there is no trace~$t$ over $\set{\inprop}$ in the model that is both marked by $m_i$ and not marked by $m_i$. 
    This could happen, as these are just two different traces over $\ap$. 
    However, the formula
    \[
    \psi_\consistent = \forall \pi \in \unidisvar.\ \forall \pi' \in \unidisvar.\ \equals{\pi}{\pi'}{\set{x}} \rightarrow \equals{\pi}{\pi'}{\set{\inprop} \cup \ap_\marker}
    \]
    disallows this.
    
    \item $\psi_1 = \bigwedge_{\prop \in \set{\inprop} \cup \ap_\marker}\G\neg\prop_{\pi'}$ expresses that the interpretation of $\pi'$ may only contain propositions in $\ap_\arith$. We use such traces to implement addition and multiplication in $\hyltl$.
\end{itemize}

To this end, let $T_\plustimes$ be the set of all traces $t \in (\pow{\ap})^\omega$ such that
\begin{itemize}

    \item there are unique $n_1, n_2, n_3 \in \nats$  with $\argone \in t(n_1)$, $\argtwo \in t(n_2)$, and $\res \in t(n_3)$, and

    \item either 
    \begin{itemize}
        \item $\add \in t(n)$ and $\mult \notin t(n)$ for all $n$, and $n_1+n_2 = n_3$, or
        \item $\mult \in t(n)$ and $\add \notin t(n)$ for all $n$, and $n_1 \cdot n_2 = n_3$.
    \end{itemize}

\end{itemize}
there exists a satisfiable $\hyltl$ sentence $\phi_\plustimes$ such that the $\set{\argone, \argtwo, \res, \add, \mult}$-projection of every model of $\phi_\plustimes$ is $T_\plustimes$~\cite[Theorem 5.5]{hyperltlsat}.
As $\hyltl$ is a fragment of $\lfpsohyltlfp$, we can use $\phi_\plustimes$ to construct our desired formula. Furthermore, by closely inspecting the formula~$\phi_\plustimes$, we can see that it can be brought into the form required for guards in $\lfpsohyltlfp$. Call the resulting formula~$\phi_\plustimes'$. It uses two free variables~$\pi_\add$ and $\pi_\mult$ as \myquot{seeds} for the fixed point computation and comes with another $\ltl$ formula~$\psi_s$ with free variables~$\pi_\add$ and $\pi_\mult$ that ensures that the seeds have the right format.

Now, given $n\in\nats$ we define the $\lfpsohyltlfp$ sentence
\begin{align*}
\phi_n = {}&{} \psi_\consistent \wedge \forall\pi \in\univar.\ \left(
(\exists \pi' \in \unidisvar.\ \equals{\pi}{\pi'}{\set{\inprop}} \wedge \psi_0) \wedge
(\exists \pi' \in \unidisvar.\ \equals{\pi}{\pi'}{\ap_\arith} \wedge \psi_1)\right)
\wedge \\
{}&{}\quad\exists \pi_\add \in \unidisvar.\ \exists \pi_\mult \in \unidisvar.\ \psi_s \wedge \exists(X_\arith, \smallest, \phi_\arith').\ \hyperize(\exists x.\ ( x = n \wedge \psi)).    
\end{align*}
Intuitively, $\phi_n$ requires that the model contains, for each subset~$S$ of $\nats$, a unique trace encoding $S$ (additionally marked with the $\marker_i$ to encode membership of $S$ in the set bound to $\mathcal{Y}_i$), contains each trace over $\ap_\arith$, the set~$X_\arith$ is interpreted with $T_\plustimes$, and the formula~$\hyperize(\exists x.\ ( x = n \wedge \psi))$ is satisfied, where the translation~$\hyperize$ is defined below.
Note that we use the constant~$n$, which is definable in first-order arithmetic (with a formula that is polynomial in $\log(n)$ using the fact that the constant~$2$ is definable in first-order arithmetic and then using powers of $2$ to define~$n$) and where $\hyperize$ is defined inductively as follows: 
\begin{itemize}
    \item For second-order variables~$Y$, $\hyperize(\exists Y.\ \psi) = \exists \pi_Y \in \unidisvar.\ \neg\add_{\pi_Y} \wedge \neg\mult_{\pi_Y} \wedge \hyperize(\psi)$, as only traces not being labeled by $\add$ or $\mult$ encode sets.

    \item For second-order variables~$Y$, $\hyperize(\forall Y.\ \psi) = \forall \pi_Y \in \unidisvar.\ (\neg\add_{\pi_Y} \wedge \neg\mult_{\pi_Y}) \rightarrow \hyperize(\psi)$.

    \item For first-order variables~$y$, $\hyperize(\exists y.\ \psi) = \exists \pi_y \in \unidisvar.\ \neg\add_{\pi_y} \wedge \neg\mult_{\pi_y} \wedge ((\neg \inprop_{\pi_y})\U(\inprop_{\pi_y} \wedge \X\G\neg \inprop_{\pi_y})) \wedge \hyperize(\psi)$.

    \item For first-order variables~$y$, $\hyperize(\forall y.\ \psi) = \forall \pi_y \in \unidisvar.\ (\neg\add_{\pi_y} \wedge \neg\mult_{\pi_y} \wedge (\neg \inprop_{\pi_y})\U(\inprop_{\pi_y} \wedge \X\G\neg \inprop_{\pi_y})) \rightarrow \hyperize(\psi)$.

    \item $\hyperize(\psi_1 \lor \phi_2) = \hyperize(\psi_1) \lor \hyperize(\psi_2)$.
    
    \item $\hyperize(\lnot \psi) = \lnot \hyperize(\psi) $.

    \item For third-order variables~$\mathcal{Y}_i$ and second-order variables~$Y$, $\hyperize(Y\in \mathcal{Y}_i) = (\marker_i)_{\pi_Y}$.
    
    \item For second-order variables~$Y$ and first-order variables~$y$, $\hyperize(y\in Y) = \F(\inprop_y \land \inprop_{\pi_Y})$.
    
    \item For first-order variables~$y,y'$, $\hyperize(y<y') = \F(\inprop_y \land \X\F\inprop_{y'})$.
    
    \item For first-order variables~$y_1,y_2,y$, $\hyperize(y_1+y_2=y) = \exists \pi \in X_\arith.\ \add_\pi \land \F(\inprop_{\pi_{y_1}}\land\argone_\pi) \land \F(\inprop_{\pi_{y_2}}\land\argtwo_\pi) \land \F(\inprop_{\pi_y}\land\res_\pi)$.
    
    \item For first-order variables~$y_1,y_2,y$, $\hyperize(y_1 \cdot y_2=y) = \exists \pi \in X_\arith.\ \mult_\pi \land \F(\inprop_{\pi_{y_1}}\land\argone_\pi) \land \F(\inprop_{\pi_{y_2}}\land\argtwo_\pi) \land \F(\inprop_{\pi_y}\land\res_\pi)$.
    
\end{itemize}
While $\phi_n$ is not in prenex normal form, it can easily be brought into prenex normal form, as there are no quantifiers under the scope of a temporal operator.
An induction shows that we indeed have that $\natsstruct\models\phi(n)$ if and only if $\phi_n$ is satisfiable.
    
For the upper bound, we show that $\lfpsohyltlfp$ satisfiability is in $\Sigma_1^2$. More formally, we show how to construct a formula of the form
\[
\theta(x) = \exists \mathcal{Y}_1\subseteq \pow{\nats}.\ \cdots \exists \mathcal{Y}_k\subseteq \pow{\nats}. \ \psi(x, \mathcal{Y}_1, \ldots,\mathcal{Y}_k)
\]
with a single free (first-order) variable~$x$ such that an $\lfpsohyltlfp$ sentence~$\phi$ is satisfiable if and only if $\natsstruct \models \theta(\encode{\phi})$.
Here, $\psi$ is a formula of arithmetic with arbitrary quantification over type~$0$ and type~$1$ objects (but no third-order quantifiers) and free third-order variables~$\mathcal{Y}_i$, in addition to the free first-order variable~$x$, and $\encode{\cdot}$ is a polynomial-time computable injective function mapping $\lfpsohyltlfp$ sentences to natural numbers.

In the following, we assume, without loss of generality, that $\ap$ is fixed, so that we can use $\size{\ap}$ as a constant in our formulas (which is definable in arithmetic). 
Then, we can encode traces as sets of natural numbers. To do to, we need to introduce some notation following Frenkel and Zimmermann~\cite{sohypercomplexity}:
Let $\pair \colon \nats\times\nats \rightarrow\nats$ denote Cantor's pairing function defined as $\pair(i,j) = \frac{1}{2}(i+j)(i+j+1) +j$, which is a bijection and can be implemented in arithmetic.
Furthermore, we fix a bijection~$e \colon \ap \rightarrow\set{0,1,\ldots,\size{\ap}-1}$.
Then, we encode a trace~$t \in (\pow{\ap})^\omega$ by the set~$S_t =\set{\pair(j,e(\prop)) \mid j \in \nats \text{ and } \prop \in t(j)} \subseteq \nats$.
Now, the formula
\[\phi_\istrace(Y) = \forall x.\ \forall y.\ y \ge \size{\ap} \rightarrow \pair(x,y) \notin Y \]
is satisfied in $\natsstruct$ if and only if the interpretation of $Y$ encodes a trace over $\ap$~\cite{sohypercomplexity}.
Furthermore, a finite collection of sets~$S_1, \ldots, S_k$ is uniquely encoded by the set~$\set{\pair(n,j) \mid n \in S_j}$, i.e., we can encode finite sets of sets by type~$1$ objects.
In particular, we can encode a variable assignment whose domain is finite and contains only trace variables by a set of natural numbers and we can write a formula that checks whether a trace (encoded by a set) is assigned to a certain variable.

Now, the overall proof idea is to let the formula~$\theta$ of arithmetic express the existence of Skolem functions for the existentially quantified variables in the $\lfpsohyltlfp$ sentence~$\phi$ such that each variable assignment that is consistent with the Skolem functions satisfies the maximal quantifier-free subformula of $\varphi$.
For trace variables~$\pi$, a Skolem function is a type~$2$ object, i.e., a function mapping a tuple of sets of natural numbers (encoding a tuple of traces, one for each variable quantified universally before $\pi$) to a set of natural numbers (encoding a trace for $\pi$).
However, to express that the interpretation of a second-order variable~$X$ is indeed a least fixed point we need both existential quantification (\myquot{there exists a set} that satisfies the guard) and universal quantification (\myquot{every other set that satisfies the guard is larger}).
Thus, handling second-order quantification this way does not yield the desired~$\Sigma_1^2$ upper bound. 

Instead we use that fact that membership of a trace in an $\hyltl$-definable least fixed point can be witnessed by a finite tree labeled by traces, i.e., by a type~$1$ object.
Thus, instead of capturing the full least fixed point in arithmetic, we verify on-the-fly for each trace quantification of the form~$\exists \pi \in Y_j$ or $\forall \pi \in Y_j$ whether the interpretation of $\pi$ is in the interpretation of $Y_j$, which only requires the quantification of a witness tree.

Before we can introduce these witness trees, we need to introduce some additional notation to express satisfaction of quantifier-free $\lfpsohyltlfp$ formulas in arithmetic.
To this end, fix some \[
\phi = 
\gamma_1 \quant_1(Y_1, \smallest, \phi_1^\con).\ 
\gamma_2 \quant_2(Y_2, \smallest, \phi_2^\con).\ 
\ldots 
\gamma_k \quant_2(Y_k, \smallest, \phi_k^\con).\
\gamma_{k+1}.\
\psi,
\]
where $\psi$ is quantifier-free.
We assume w.l.o.g.\ that each trace variable is quantified at most once in $\varphi$, which can always be achieved by renaming variables. This implies that for each trace variable~$\pi$ quantified in some $\gamma_j$, there exists a unique second-order variable~$X_\pi$ such that $\pi$ ranges over $X_\pi$. 
Furthermore, we assume that each $Y_j$ is different from $\unidisvar$ and $\univar$, which can again be achieved by renaming variables, if necessary.
Recall that the values of the least fixed points are uniquely determined by the interpretations of $\unidisvar$, $\univar$, and the trace variables in the $\gamma_j$. 
We say that a variable assignment $\Pi$ is $\phi$-sufficient if $\Pi$'s domain contains exactly these variables.

Let $\xi$ be a quantifier-free formula and let $\Pi$ be a variable assignment whose domain contains at least all free variables of $\xi$, and let $\Xi$ be the set of subformulas of $\xi$.
The expansion of $\xi$ with respect to $\Pi$ is the function~$e_{\Pi,\xi}\colon \Xi\times\nats \rightarrow \set{0,1}$ defined as $e_{\Pi,\xi}(\xi',j) = 1$ if and only if $\suffix{\Pi}{j} \models\xi'$. 
From the semantics of $\lfpsohyltlfp$, it follows that $e_{\Pi,\xi}$ is the unique function~$e\colon \Xi\times\nats \rightarrow \set{0,1}$ satisfying the following properties:
\begin{itemize}
    \item $e(\prop_\pi,j) = 1$  if and only if $\prop \in \Pi(\pi)(j)$.
    \item $e(\neg \xi',j) = 1$ if and only if $e(\xi',j) = 0$.
    \item $e(\xi_1 \vee \xi_2,j) = 1$ if and only if $e(\xi_1,j) =1$ or $e(\xi_2,j) =1$.
    \item $e(\X\xi',j) = e(\xi',j+1)$.
    \item $e(\xi_1\U\xi_2,j) = 1$ if and only if there exists a $j' \ge j$ such that $e(\xi_2,j') = 1$ and $e(\xi_1,j'') = 1$ for all $j \le j'' < j'$.
\end{itemize}
Let us remark here for further use later that a function from $\Xi\times\nats $ to $\set{0,1}$ is a type~$1$ object (as functions from $\nats$ to $\nats$ can be encoded as subsets of $\nats$ via their graph and the pairing function introduced below) and that all five requirements above can then be expressed in first-order logic.

Now, we introduce witness trees following Frenkel and Zimmermann~\cite{arxiv}.
Consider the formula
\begin{equation}
\label{eq_phicon}
\phi_j^\con = \overdot{\pi}_1\tracein Y_j \wedge \cdots \wedge \overdot{\pi}_n \tracein Y_j \wedge   \forall \overdotdot{\pi}_1 \in Z_1.\ \ldots \forall \overdotdot{\pi}_{n'} \in Z_{n'}.\ \psi^\step_j \rightarrow \overdotdot{\pi}_m \tracein Y_j.    
\end{equation}
inducing the unique least fixed point that $Y_j$ ranges over. 
It expresses that a trace~$t$ is in the fixed point either because it is of the form~$\Pi(\overdot{\pi}_i)$ for some $i \in \set{1,\ldots, n}$ where $\overdot{\pi}_i$ is a trace variable quantified before the quantification of $Y_j$, or $t$ is in the fixed point because there are traces~$t_1,\ldots, t_{n'}$ such that assigning them to the $\overdotdot{\pi}_i$ satisfies $\psi^\step_j$ and $t = t_m$. 
Thus, the traces~$t_1,\ldots, t_{n'}$ witness that $t$ is in the fixed point. 
However, each $t_i$ must be selected from $\Pi(Z_i)$, which, if $Z_i = Y_{j'}$ for some $j'$, again needs a witness.
Thus, a witness is in general a tree labeled by traces and indices from $\set{1, \ldots, k}$ denoting the fixed point variable the tree proves membership in.
Note that $\psi^\step_j$ may contain free variables that are quantified in the $\gamma_{j'}$ for $1 \le j' \le j$. 
The membership of such variables~$\pi$ in $\Pi(X_\pi)$ will not be witnessed by this witness tree, as their values are already determined by $\Pi$.

Formally, let us fix a $j^* \in \set{1,\ldots,k}$, a $\phi$-sufficient assignment~$\Pi$, and a trace~$t^*$. 
A $\Pi$-witness tree for $(t^*,j^*)$ (which witnesses~$t^* \in\lfp(\Pi,j^*)$) is an ordered finite tree whose vertices are labeled by pairs~$(t,j)$ where $t$ is a trace and where $j $ is in $\set{1,\ldots, k} \cup \set{a,d}$ (where $a$ and $d$ are fresh symbols) such that the following are satisfied:
\begin{itemize}
    \item The root is labeled by $(t^*, j^*)$.
    
    \item If a vertex is labeled with $(t,a)$ for some trace~$t$, then it must be a leaf. Note that $t$ is in $\Pi(\univar)$, as that set contains all traces.
    
    \item If a vertex is labeled with $(t,d)$ for some trace~$t$, then it must be a leaf and $t \in \Pi(\unidisvar)$.
    
    \item Let $(t,j)$ be the label of some vertex~$v$ with $j \in \set{1,\ldots, k}$ and let $\phi_j^\con$ as in (\ref{eq_phicon}).
If $v$ is a leaf, then we must have $t^* = \Pi(\overdot{\pi}_i)$ for some $i \in \set{1, \ldots, n}$.
If $v$ is an internal vertex, then it must have $n'$ successors labeled by $(t_1,j_1),\ldots,(t_{n'},j_{n'})$ (in that order) such that $\Pi[\overdotdot{\pi}_1 \mapsto t_1, \ldots, \overdotdot{\pi}_{n'} \mapsto t_{n'}] \models \psi^\step_j $, $t = t_m$, and
such that the following holds for all $i \in \set{1, \ldots, n'}$:
\begin{itemize}
    \item If $Z_i = \univar$, then the $i$-th successor of $v$ is a leaf and $j_i = a$.    
    
    \item If $Z_i = \unidisvar$, then the $i$-th successor of $v$ is a leaf, $t_i \in \Pi(\unidisvar)$, and $j_i = d$.
    
    \item If $Z_i = Y_{j'}$ for some $j'$ (which satisfies $1 \le j' \le j$ by definition of $\lfpsohyltlfp$), then we must have that $ j_i = j'$ and that the subtree rooted at the $i$-th successor of $v$ is a $\Pi$-witness tree for $(t_i,j_i)$.
\end{itemize}
\end{itemize}

The following proposition states that membership in the fixed points is witnessed by witness trees. It is obtained by generalizing a similar argument for $\univar$-free sentences to sentences (potentially) using $\univar$.

\begin{proposition}[cf.~\cite{arxiv}]
Let $\Pi$ be a variable assignment whose domain contains all variables in the $\gamma_j$ and satisfies $\Pi(Y_j) = \lfp(\Pi, j)$ for all $j$. Then, we have $t \in \Pi(Y_j)$ if and only if there exists a $\Pi$-witness tree for $(t,j)$.    
\end{proposition}

Note that a witness tree is a type~$1$ object: The (finite) tree structure can be encoded by 
\begin{itemize}
    \item a natural number~$s>0$ (encoding the number of vertices), 
    \item a function from $\set{1,\ldots, s} \times \set{1,\ldots, n'} \rightarrow \set{0,1,\ldots, s}$ encoding the child relation, i.e., $(v,j) \mapsto v'$ if and only if the $j$-th child of $v$ is $v'$ (where we use $0$ for undefined children), 
    \item $s$ traces over $\ap$ and $s$ values in $\set{1,\ldots, k, k+1, k+2}$ to encode the labeling (where we use $k+1$ for $a$ and $k+2$ for $d$).
\end{itemize}
Note that the function encoding the child relation can be encoded by a finite set by encoding its graph using the pairing function while all other objects can directly be encoded by sets of natural numbers, and thus be encoded by a single set as explained above.

Furthermore, one can write a second-order formula~$\psi_{\hastree}(X_D,A, X_{t^*}, j^*)$ with free third-order variable~$X_D$ (encoding a set of traces~$T$), free second-order variables~$A$ (encoding a variable assignment~$\Pi$ whose domain contains exactly the trace variables in the $\gamma_j$) and $X_{t^*}$ (encoding a trace~$t^*$), and free first-order variable~$j^*$ that holds in $\natsstruct$ if and only if there exists a $\Pi[\unidisvar\mapsto T, \univar\mapsto (\pow{\ap})^\omega]$-witness tree for $(t^*,j^*)$.
To evaluate the formulas~$\psi_{j'}^\step$ as required by the definition of witness trees, we rely on the expansion as introduced above, which here is a mapping from vertices in the tree and subformulas of the $\psi_{j'}^\step$ to $\set{0,1}$, and depends on the set of traces encoded by $X_D$ and the variable assignment encoded by $A$.
Such a function is a type~$1$ object and can therefore be quantified in $\psi_{\hastree}(X_D,A, X_{t^*}, j^*)$.

Recall that we construct a formula~$\theta(x)$ with a free first-order variable~$x$ such that an $\lfpsohyltlfp$ sentence~$\phi$ is satisfiable if and only if $\natsstruct \models \theta(\encode{\phi})$, where $\encode{\cdot}$ is a polynomial-time computable injective function mapping $\lfpsohyltlfp$ sentences to natural numbers.

With this preparation, we can define $\theta(x)$ such that it is not satisfied in $\natsstruct$ if the interpretation of $x$ does not encode an $\lfpsohyltlfp$ sentence.
If it does encode such a sentence~$\phi$, let $\psi$ be its maximal quantifier-free subformula (which are \myquot{computable} in first-order arithmetic using a suitable encoding~$\encode{\cdot}$).
Then, $\theta$ expresses the existence of
\begin{itemize}
    \item a set~$T$ of traces (bound to the third-order variable~$X_D$ and encoded as a set of sets of natural numbers, i.e., a type~$2$ object),
    \item Skolem functions for the existentially quantified trace variables in the $\gamma_j$ (which can be encoded by functions from $(\pow{\nats})^\ell$ to $\pow{\nats}$ for some $\ell$, i.e., by a type~$2$ object), and
    \item a function~$e \colon \pow{\nats} \times \nats \times \nats \rightarrow \set{0,1}$
\end{itemize}
such that the following is true for all variable assignments~$\Pi$ (restricted to the trace variables in the $\gamma_j$ and bound to the second-order variable~$A$):
If 
\begin{itemize}
    \item $\Pi$ is consistent with the Skolem functions for all existentially quantified variables,
    \item for all universally quantified $\pi$ in some $\gamma_j$ with $X_\pi = \unidisvar$ we have $\Pi(\pi) \in T$, and
    \item for all universally quantified $\pi$ in some $\gamma_j$ with $X_\pi = Y_{j^*}$ the formula~$\psi_{\hastree}(X_D,A, \Pi(\pi), j^*)$ holds, 
\end{itemize}
then
\begin{itemize}
    \item for all existentially quantified $\pi$ in some $\gamma_j$ with $X_\pi = \unidisvar$ we have $\Pi(\pi) \in T$, and
    \item for all existentially quantified $\pi$ in some $\gamma_j$ with $X_\pi = Y_{j^*}$ the formula~$\psi_{\hastree}(X_D,A, \Pi(\pi), j^*)$ holds, and
    \item the function~$(x,y)\mapsto e(A,x,y)$ is the expansion of $\psi$ with respect to $\Pi$ and we have $e(A,\psi,0) =1$ (here we identify subformulas of $\psi$ by natural numbers).
\end{itemize}
We leave the tedious, but routine, details to the reader. 
\end{proof}

Note that in the lower bound proof, a single second-order quantifier (for the set of traces implementing addition and multiplication) suffices. 

%################################################
%################################################
%################################################
\subsection{Finite-State Satisfiability and Model-Checking}

Recall that $\lfpsohyltlfp$ finite-state satisfiability and model-checking are known to be $\Sigma_1^1$-hard and in $\Sigma_2^2$, where the lower bounds hold for $\univar$-free sentences and the upper bounds for sentences with $\univar$.
Here, we close these gaps by showing that both problems are equivalent to truth in second-order arithmetic.

\begin{theorem}
\label{thm_fssatmccomplexity_lfp}
$\lfpsohyltlfp$ finite-state satisfiability and model-checking are polynomial-time equivalent to truth in second-order arithmetic. The lower bounds already hold for $\univar$-free formulas.
\end{theorem}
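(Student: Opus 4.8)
The plan is to prove the upper bound (both problems reduce to truth in second-order arithmetic) and the matching lower bound (truth in second-order arithmetic reduces to both problems, even for $\univar$-free sentences).

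For the \emph{upper bound} I would reuse the witness-tree and expansion machinery developed for Theorem~\ref{thm_satcomplexity_lfp_ss}, observing that the drop from $\Sigma_1^2$ down to second-order arithmetic is caused precisely by the model now being \emph{finitely represented}. In the model-checking case the transition system~$\tsys$ is fixed, so a trace is a type~$1$ object and the condition $t\in\traces(\tsys)$ is the type~$1$ statement ``$t$ is the labeling of some infinite path through $\tsys$''. As the least-fixed-point variables are uniquely determined, I do not quantify them: each quantifier $\exists\pi\in Y_j$ / $\forall\pi\in Y_j$ is translated via the predicate $\psi_{\hastree}$ (membership certified by a finite, hence type~$1$, witness tree), the ranges $\unidisvar$ and $\univar$ become the path-labeling condition resp.\ no condition, and the quantifier-free matrix is evaluated through its expansion (again type~$1$). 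Because traces are already type~$1$, I translate the quantifier \emph{alternation} of $\phi$ directly into alternating type~$1$ trace quantifiers and therefore need neither a type~$2$ model nor type~$2$ Skolem functions; the result is a sentence of second-order arithmetic. For finite-state satisfiability I prepend a single existential quantifier over the finite (hence type~$0$/type~$1$) transition system and then express $\tsys\models\phi$ as above.

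For the \emph{lower bound} I would reduce truth in second-order arithmetic, mapping a second-order sentence~$\psi$ to a $\univar$-free $\lfpsohyltlfp$ sentence~$\phi_\psi$ over propositions including $\inprop$ and $\ap_\arith$, using the encoding already set up in Theorem~\ref{thm_satcomplexity_lfp_ss}: type~$0$ quantification becomes trace quantification over singleton-$\inprop$ traces, type~$1$ quantification over sets becomes trace quantification over arbitrary $\inprop$-traces, and $+,\cdot$ are realized by the $\hyltl$-definable set~$T_\plustimes$ obtained as a least fixed point through $\phi_\plustimes'$. Since second-order arithmetic has no third-order quantifiers, the markers~$\marker_i$ disappear, the whole translation~$\hyperize(\psi)$ lives in $\hyltl\subseteq\lfpsohyltlfp$, and it refers only to $\unidisvar$, so $\phi_\psi$ is $\univar$-free. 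For the model-checking lower bound I instead fix a rich finite transition system~$\tsys^\ast$ (in the spirit of $\tsys_\allsets$) whose trace set contains all $\inprop$-traces and all arithmetic traces, and check the translation of $\psi$ against it.

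The main obstacle is \emph{soundness} of the finite-state reduction: for universal set quantifiers in $\psi$ to be faithful, every satisfying model must contain \emph{all} $\inprop$-traces, i.e.\ all subsets of $\nats$. This is exactly what a $\univar$-free formula cannot force in the general satisfiability setting (which is why that problem is only $\Sigma_1^2$), because there every quantified set is a subset of the model and least fixed points of the finitary operators $f_{\Pi,j}$ close at $\omega$, so they cannot manufacture uncountably many traces from countable seeds. The key new idea I would exploit is that finite-state satisfiability restricts the model to an \emph{$\omega$-regular} trace set, and $\omega$-regular languages are determined by their ultimately periodic members: an $\omega$-regular set containing every ultimately periodic $\inprop$-trace must equal the set of \emph{all} $\inprop$-traces, since its $\omega$-regular complement, if nonempty, would contain an ultimately periodic trace. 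The ultimately periodic $\inprop$-traces form a \emph{countable} skeleton, so I can force them into the model by a least fixed point that enumerates all pairs of finite prefix and period (an increment construction analogous to those in Lemma~\ref{lemma_uncountmodels} and in $\phi_\plustimes$, rewritten into the guard shape required by $\lfpsohyltlfp$). Finiteness of the transition system then upgrades this countable skeleton to all subsets. Discharging the $\omega$-regularity argument within the syntactic constraints on $\lfpsohyltlfp$ guards, and verifying that the implicit transition-system existential together with this forcing yields exactly $\natsstruct\models\psi$, is where the real work lies; the remaining equivalence follows by induction over $\psi$ as in Theorem~\ref{thm_satcomplexity_lfp_ss}.
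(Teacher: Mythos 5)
Your upper-bound plan and your model-checking lower bound are sound and essentially the paper's: traces and witness trees are type~$1$ objects once the system is finite, so trace quantifiers become second-order arithmetic quantifiers and fixed-point membership is checked via $\psi_\hastree$ (Lemma~\ref{lemma_mc_soaeasy}); quantifying the finite system existentially for finite-state satisfiability is a harmless variant of the paper's route, which instead reduces finite-state satisfiability to $\univar$-free model-checking (Lemma~\ref{lemma_fssat2mc}). Your direct model-checking lower bound against a \myquot{rich} system is also fine and is in effect what the composition of Lemmas~\ref{lemma_fssat_soahard} and~\ref{lemma_fssat2mc} produces. The genuine gap is in your finite-state satisfiability lower bound, precisely at the step you flagged as the key new idea. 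You propose to \myquot{force the ultimately periodic $\inprop$-traces into the model by a least fixed point}. But in $\univar$-free $\lfpsohyltlfp$, least fixed points \emph{cannot force anything into the model}: by the definition of $f_{\Pi,j}$ (cf.\ Remark~\ref{remark_lfpcw}), every trace entering $\lfp(\Pi,j)$ is a seed~$\Pi(\overdot{\pi}_i)$ or is drawn from $\Pi(\unidisvar)$ or an earlier fixed point, so fixed points are always subsets of the model. A guard has the shape $\overdot{\pi}_1\tracein Y \wedge \cdots \wedge \forall \overdotdot{\pi}_1 \in Z_1 \ldots (\psi^\step \rightarrow \overdotdot{\pi}_m \tracein Y)$; it constrains which existing traces land in $Y$, never asserts that the model contains new traces. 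The increment constructions you cite (Lemma~\ref{lemma_uncountmodels}, $\phi_\plustimes$) do their forcing with ordinary $\forall\pi.\exists\pi'$ trace quantification, and that part cannot be \myquot{rewritten into the guard shape} — this is a category error, not a technicality.

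Worse, even if you do the forcing with $\forall\exists$ HyperLTL quantification, the ultimately periodic skeleton is the wrong skeleton: given a trace encoding a pair~$(u,v)$, asserting that some trace equals $uv^\omega$ requires stating $v$-periodicity, i.e., comparing positions~$i$ and $i+|v|$ for all $i$ with $|v|$ unbounded, which is not expressible in LTL over a tuple of traces (all comparisons available to you are between synchronized positions or positions at bounded offset). Your language-theoretic fact is true — a nonempty $\omega$-regular set contains an ultimately periodic word, so an $\omega$-regular set containing all of them is everything — but you cannot implement its hypothesis. The paper's Lemma~\ref{lemma_fssat_soahard} uses the skeleton you \emph{can} force: all finite prefixes. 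One-letter extension is LTL-expressible (e.g., $\forall \pi.\ \exists \pi'.\ (\inprop_\pi \leftrightarrow \inprop_{\pi'}) \U (\#_{\pi} \wedge \neg\inprop_{\pi'} \wedge \X\#_{\pi'})$ in $\phi_\prefs$), and then one invokes not $\omega$-regularity but topological closedness of finite-state trace sets: a closed set whose prefix set is all of $(\pow{\set{\inprop}})^*$ must be all of $(\pow{\set{\inprop}})^\omega$. Replacing your ultimately-periodic/lfp mechanism with this prefix-plus-closedness argument repairs the proof; the rest of your reduction (singleton traces for type~$0$, $\inprop$-traces for type~$1$, $T_\plustimes$ for arithmetic) then goes through as in Theorem~\ref{thm_satcomplexity_lfp_ss}. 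One small slip besides: $\univar$-free satisfiability is $\Sigma_1^1$-complete, not $\Sigma_1^2$; $\Sigma_1^2$ is the complexity with $\univar$ allowed.
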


The result follows from the following building blocks, which are visualized in Figure~\ref{fig_reductions}:

\begin{itemize}
    \item All lower bounds proven for $\univar$-free $\lfpsohyltlfp$ sentences also hold for $\lfpsohyltlfp$ while upper bounds for $\lfpsohyltlfp$ also hold for the fragment of $\univar$-free $\lfpsohyltlfp$ sentences.
    \item $\lfpsohyltlfp$ finite-state satisfiability can in polynomial time be reduced to $\lfpsohyltlfp$ model-checking for $\univar$-free sentences (see Lemma~\ref{lemma_fssat2mc} below).
    \item $\lfpsohyltlfp$ finite-state satisfiability for $\univar$-free sentences is at least as hard as truth in second-order arithmetic (see Lemma~\ref{lemma_fssat_soahard} below).
    \item $\lfpsohyltlfp$ model-checking can in polynomial time be reduced to truth in second-order arithmetic (see Lemma~\ref{lemma_mc_soaeasy} below).
\end{itemize}

\begin{figure}[h]
    \centering
    \begin{tikzpicture}

\fill[lightgray!40,rounded corners] (-5.25,-.4) rectangle (5.25,.4);    
\fill[lightgray!40,rounded corners] (-5.25,2.6) rectangle (5.25,3.4);    

\fill[lightgray!40,rounded corners] (-.75,-1) rectangle (.75,5.75);    
\fill[lightgray!40,rounded corners] (3.25,-1) rectangle (4.75,5.75);    

\fill[lightgray] (-.75,-.4) rectangle (.75,.4);    
\fill[lightgray] (-.75,2.6) rectangle (.75,3.4);    

\fill[lightgray] (3.25,-.4) rectangle (4.75,.4);    
\fill[lightgray] (3.25,2.6) rectangle (4.75,3.4);

\node[anchor = west] at (-5.1,3) {$\lfpsohyltlfp$};
\node[anchor = west] at (-5.1,0) {$\univar$-free $\lfpsohyltlfp$};

\node[align=center,anchor = north] at (0,5.5) {finite-\\state\\satisfi-\\ability};
\node[align=center,anchor = north] at (4,5.5) {model-\\checking};

\fill (0,0) circle (.1cm);
\fill (0,3) circle (.1cm);
\fill (4,0) circle (.1cm);
\fill (4,3) circle (.1cm);

\path[->,>=stealth,ultra thick, shorten > = .2cm, shorten < = .2cm]
(0,0) edge[bend left=15] node[left] {trivial} (0,3)
(4,0) edge[bend right=15] node[right] {trivial} (4,3)
(0,0) edge[bend right=15] node[below] {Lemma~\ref{lemma_fssat2mc}} (4,0)
(0,3) edge[bend left=15] node[above,rotate=-37] {Lemma~\ref{lemma_fssat2mc}} (4,0)
;

\node[align = center, anchor = west] (ub) at (5.75,3) {upper bound:\\Lemma~\ref{lemma_mc_soaeasy}};
\node[align = center, anchor = north] (lb) at (0,-1.5) {lower bound:\\Lemma~\ref{lemma_fssat_soahard}};

\path[->,>=stealth,ultra thick, shorten > = .2cm, shorten < = .2cm, dotted]
(ub) edge (4,3)
(lb) edge (0,0)
;

    \end{tikzpicture}    
    \caption{The reductions (drawn as solid arrows) and lemmata proving Theorem~\ref{thm_fssatmccomplexity_lfp}.}
    \label{fig_reductions}
\end{figure}

We begin with reducing finite-state satisfiability to model-checking.

\begin{lemma}
\label{lemma_fssat2mc}
$\lfpsohyltlfp$ finite-state satisfiability is polynomial-time reducible to $\lfpsohyltlfp$ model-checking for $\univar$-free sentences.
\end{lemma}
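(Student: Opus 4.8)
The plan is to construct, from a given $\lfpsohyltlfp$ sentence~$\phi$ (which may use $\univar$), a fixed transition system~$\tsys^*$ and a $\univar$-free $\lfpsohyltlfp$ sentence~$\phi'$ such that $\tsys^* \models \phi'$ if and only if $\phi$ is finite-state satisfiable. The guiding idea is to make the model~$\traces(\tsys^*)$ rich enough to serve two purposes simultaneously: first, it contains every trace over~$\ap$, so that quantification over~$\univar$ in~$\phi$ can be simulated by relativized quantification over the model (this is precisely what lets the target sentence be $\univar$-free); second, it contains, for every finite transition system~$\tsys$ over~$\ap$, a \emph{description trace} encoding~$\tsys$ together with enough \emph{annotated path traces} to reconstruct $\traces(\tsys)$ internally. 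Since all of this data is spelled out positionwise over a finite per-position alphabet, a single finite~$\tsys^*$, whose size depends only on~$\ap$, suffices to generate it, so the reduction is clearly polynomial.

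Concretely, I would fix an extended alphabet~$\ap^* \supseteq \ap$ with dedicated propositions for (i) marking a trace as a description that spells out the vertices, edges, initial vertices, and labelling of a transition system; (ii) marking a trace as an annotated path carrying at each position a vertex identity together with its label; and (iii) tagging the \myquot{plain} traces whose $\ap$-projection ranges over all of~$(\pow{\ap})^\omega$. I would then take~$\tsys^*$ to be the (disjoint) union of three gadgets generating exactly these three kinds of traces. The next step is to write an $\hyltl$ predicate~$\consistent(\pi_d,\pi_r)$ expressing that the annotated path~$\pi_r$ is a genuine path of the transition system described by~$\pi_d$ (its first vertex is initial, every consecutive pair of vertices is an edge, and the labels agree), together with a predicate~$\istraceof(\pi_d,\pi)$ stating that the $\ap$-projection of~$\pi$ equals~$\lambda(\rho)$ for some path~$\rho$ witnessed by an annotated path that is consistent with~$\pi_d$.

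Given these predicates, $\phi'$ first existentially guesses a description trace~$\pi_d$ from the model and asserts that it is well-formed; it then relativizes~$\phi$ inductively. Each trace quantifier over~$\unidisvar$ is guarded by~$\istraceof(\pi_d,\cdot)$, so that it ranges exactly over~$\traces(\tsys)$, and each trace quantifier over~$\univar$ is replaced by a quantifier over the model guarded by the plain-trace tag, so that it effectively ranges over all of~$(\pow{\ap})^\omega$. In the least-fixed-point guards~$\phi_j^\con$, these guard predicates are pushed into the step formulas~$\psi^\step_j$ (as additional hypotheses of the implication) so that the rewritten guards retain the syntactic shape demanded by the definition of~$\lfpsohyltlfp$ and remain $\univar$-free; quantifying~$\pi_d$ outermost is what makes it legitimate for~$\pi_d$ to occur free in all of these guards. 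Correctness then follows by an induction in the spirit of Lemma~\ref{lemma_mmcorrectness}: under the relativization, the least fixed points computed over~$\traces(\tsys^*)$ coincide with those that would be computed over~$\traces(\tsys)$ and over all traces, so $\tsys^* \models \phi'$ holds exactly when some well-formed~$\pi_d$ describes a finite~$\tsys$ with $\tsys \models \phi$.

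The main obstacle is the comparison buried inside~$\consistent$: one must verify that a vertex identity appearing at some position of the annotated path~$\pi_r$ occurs as an endpoint of an edge in the (unbounded) table stored on~$\pi_d$. Encoding vertex identities in unary and exploiting the synchronous, positionwise comparison available in~$\hyltl$ makes this possible, but correctly aligning the two traces—and, where plain alignment is insufficient, introducing a counting gadget, which itself can be expressed as a least fixed point over the model—is the delicate part of the construction. The second point requiring care is purely syntactic: checking that the relativized guards still match the required form of~$\phi_j^\con$ (a conjunction of $\overdot{\pi}_i\tracein Y_j$ literals with a single universally quantified implication whose conclusion is $\overdotdot{\pi}_m\tracein Y_j$), so that the output genuinely lies in the $\univar$-free fragment of~$\lfpsohyltlfp$.
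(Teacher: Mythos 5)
Your overall strategy matches the paper's: both reductions build a single \myquot{universal} transition system over an extended alphabet whose trace set contains all traces (so that $\univar$-quantifiers can be replaced by $\unidisvar$-quantifiers, making the target sentence $\univar$-free), guess an encoding of a finite transition system~$\tsys$ inside that model via existentially quantified traces, and relativize the $\unidisvar$-quantifiers of~$\phi$ to a predicate characterizing $\traces(\tsys)$, pushing the guards into the step formulas of the fixed-point constraints to preserve the required syntactic shape. Where you genuinely diverge is in the mechanism for that characterizing predicate. The paper never annotates infinite paths: it builds, as a least fixed point~$Y_\prefixes$, the set of encodings of \emph{finite trace prefixes} realizable in~$\tsys$, each carrying only the single current endpoint vertex as a unary position marker, extended one letter at a time with a single adjacency-matrix lookup (one multiplication and one addition against the edge trace), and then recovers membership of an infinite trace in $\traces(\tsys)$ from the fact that trace sets of transition systems are topologically closed, i.e., a trace all of whose prefixes are realizable is itself a trace of~$\tsys$. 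Your route instead witnesses whole infinite paths by \myquot{annotated path traces} checked for consistency against the description trace, which avoids the closure argument but buys a harder problem elsewhere.

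That harder problem is the gap you should not underestimate: since $\tsys^*$ is a fixed finite system but the guessed~$\tsys$ has unboundedly many vertices, an annotated path cannot carry a vertex identity \emph{per position} over a finite alphabet. Any unary encoding forces each path step to occupy a block of positions of unbounded length, and then $\istraceof(\pi_d,\pi)$ must match block~$j$ of the annotated path against position~$j$ of the plain trace~$\pi$ --- a de-stretching comparison that amounts to expressing \myquot{the $j$-th occurrence of the block marker is at position~$p$}, i.e., iterated summation of block lengths, on top of the already-needed lookups into the edge table. This is fillable with the arithmetic traces~$T_\plustimes$ and further helper fixed points, but it is substantially more machinery than \myquot{a counting gadget}, and it is precisely the difficulty the paper's prefix-based construction is engineered to avoid (prefix traces are compared to~$\pi$ positionwise with no stretching, and each fixed-point step needs only one arithmetic lookup). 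If you keep your annotated-path route, you must either work out this alignment in full or switch the encoding so that each path step occupies exactly one position --- which effectively forces you back to carrying only the current endpoint vertex per prefix, i.e., to the paper's construction. Also note one small normalization you omitted: the paper first reduces to transition systems with a single initial vertex (via a fresh $\$$-labelled initial state), which simplifies the seed of the prefix fixed point; you would need an analogous provision for the initial-vertex check in $\consistent$.
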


\begin{proof}
Intuitively, we reduce finite-state satisfiability to model-checking by existentially quantifying a finite transition system~$\tsys$ and then model-checking against it. 
This requires us to \emph{work} with the set of traces of $\tsys$.
We do so by constructing, using a least fixed point, the set of prefixes of traces of $\tsys$.
This uniquely determines the traces of $\tsys$ as those traces~$t$ for which all of their prefixes are in the fixed point.%%

Before we begin, we show that we can restrict ourselves, without loss of generality, to finite-state satisfiability by transition systems with a single initial vertex, which simplifies our construction.
Let $\tsys = (V, E, I, \lambda)$ be a transition system and $\phi$ a $\sohyltl$ sentence.
Consider the transition system~$\tsys_{\X} = (V\cup\set{v_\initmark}, E', \set{v_\initmark}, \lambda')$ with a fresh initial vertex~$v_\initmark$,
\[
E' = E \cup \set{(v_\initmark, v_\initmark)} \cup \set{(v_\initmark, v) \mid v \in I},
\]
and $\lambda'(v_\initmark) = \set{\$}$ (for a fresh proposition~$\$$) and $\lambda'(v) = \lambda(v)$ for all $v \in V$.
Here, we add the self-loop on the fresh initial vertex~$v_\initmark$ to deal with the special case of $I$ being empty, which would make $v_\initmark$ terminal without the self-loop.

Now, we have
\[
\traces(\tsys_{\X}) = \set{\$}^\omega \cup \set{\set{\$}^+\cdot t \mid t \in \traces(\tsys) }.
\]
Furthermore, let $\phi_{\X}$ be the formula obtained from $\phi$ by adding an $\X$ to the maximal quantifier-free subformula of $\phi$ and by inductively replacing
\begin{itemize}
\item $\exists \pi \in X.\ \psi$ by $\exists \pi \in X.\ \X(\neg\$_\pi \wedge \psi)$ and
\item $\forall \pi \in X.\ \psi$ by $\forall \pi \in X.\ \X(\neg\$_\pi \rightarrow \psi)$,
\end{itemize}
where $X$ is an arbitrary second-order variable. Then, $\tsys \models \phi$ if and only if $\tsys_{\X} \models \phi_{\X}$.
Thus, $\phi$ is finite-state satisfiable if and only if there exists a finite transition system with a single initial vertex that satisfies $\phi_{\X}$.

Given an $\lfpsohyltlfp$ sentence~$\phi$ (over $\ap$, which we assume to be fixed), we construct a transition system~$\tsys'$ and an $\univar$-free $\lfpsohyltlfp$ sentence~$\phi'$ (both over some $\ap' \supseteq \ap$) such that $\phi$ is satisfied by a finite transition system with a single initial vertex if and only if $\tsys' \models\phi'$.
To this end, we define $\ap' =  \ap \cup \set{\inprop,\#} \cup \set{\argone, \argtwo, \res, \add, \mult}$. The transition system~$\tsys'$ is constructed such that we have $\traces(\tsys') = (\pow{\ap'})^\omega$, which can be achieved with $2^{\size{\ap'}}$ many vertices (which is constant as $\ap$ is fixed!).

Throughout the construction, we use a second-order variable~$Y_\arith$ which will be interpreted by $T_\plustimes$.
In the following, we rely on traces over $\ap'$ of certain forms:
\begin{itemize}
	\item Consider the formula~$\phi_\vertices = \neg \#_{\pi_\vertices} \wedge (\neg \#_{\pi_\vertices}) \U (\#_{\pi_\vertices} \wedge \X\G \neg \#_{\pi_\vertices})$ expressing that the interpretation of ${\pi_\vertices}$ contains a unique position where $\#$ holds. This position may not be the first one. Hence, the $\ap$-projection of such a trace up to the $\#$ is a nonempty word~$w(0)w(1)\cdots w(n-1)$ for some $n > 0$. 
	It induces $V = \set{0,1,\ldots, n-1}$ and $\lambda \colon V \rightarrow \pow{\ap}$ with $\lambda(v) = w(v) \cap \ap$. Furthermore, we fix $I = \set{0}$.
	
	\item Let $t$ be a trace over $\ap'$. It induces the edge relation~$E = \set{(i,j) \in V\times V \mid \inprop \in t(i\cdot n + j)}$, i.e., we consider the first $n^2$ truth values of $\inprop$ in $t$ as an adjacency matrix. Furthermore, if $t$ is the interpretation of $\pi_\edges$ satisfying $\phi_\edges$ defined below, then every vertex has a successor in $E$.
	\begin{align*}
	\phi_\edges ={}&{} \forall \pi \in Y_\arith.\ [\mult_\pi \wedge ((\neg\X \#_{\pi_\vertices}) \U \argone_{\pi} )\wedge \F(\argtwo_\pi \wedge \X\#_{\pi_\vertices})]\rightarrow \\
	{}&{}\quad\exists \pi' \in Y_\arith.\ 
	\add_{\pi'} \wedge 
	\F(\argone_{\pi'} \wedge \res_{\pi}) \wedge 
	(\neg\X\#_{\pi_\vertices})\U (\argtwo_{\pi'}) \wedge 
	\F(\res_{\pi'} \wedge 
	\inprop_{\pi_\edges}).
	\end{align*}
	
	\item Consider the formula
	\[\phi_\prefix = (\neg\#_\pi) \U (\#_\pi \wedge \X\G\neg \#_\pi) \wedge (\X\neg\#_\pi)\U \inprop_\pi
	,\]
	 which is satisfied if the interpretation~$t$ of $\pi$ has a unique position~$\ell$ at which $\#$ holds and if it has a unique position~$v$ where $\inprop$ holds. Furthermore, $v$ must be strictly smaller than $n$, i.e., it is in $V$.
	In this situation, we interpret the $\ap$-projection of $t(0)\cdots t(\ell-1)$ as a trace prefix over $\ap$ and $v$ as a vertex (intuitively, this will be the vertex where the path prefix inducing the trace prefix ends).
	
	Using this encoding, the formula
	 \[
\phi_\init = \phi_\prefix[\pi/\pi_\init] \wedge \inprop_{\pi_\init} \wedge \X\#_{\pi_\init} \wedge \bigwedge_{\prop \in \ap} \prop_{\pi_\vertices} \leftrightarrow \prop_{\pi_\init}
\]
ensures that $\pi_\init$ is interpreted with a trace that encodes the trace prefix of the unique path of length one starting at the initial vertex. Here, $\phi_\prefix[\pi/\pi_\init]$ denotes the formula obtained from $\phi_\prefix$ by replacing each occurrence of $\pi$ by $\pi_\init$.
\end{itemize}

Now, given $\phi$, consider the sentence
\begin{align*}
\phi'' = \exists \pi_\vertices \in \unidisvar.\ \exists \pi_\edges \in \unidisvar.\ \exists \pi_\init \in \unidisvar.\ \phi_\vertices \wedge \phi_\edges \wedge \phi_\init \wedge \exists (Y_\prefixes,\smallest, \phi_\con).\ \rel(\phi),
\end{align*}
where $\phi_\con$ and $\rel(\phi)$ are introduced below.
Before, let us note that $\pi_\vertices$ and $\pi_\edges$ encode a finite transition system~$\tsys$ with a single initial vertex as described above.

Now, $\phi_\con$ is used to ensure that the interpretation of $Y_\prefixes$  must contain exactly the encodings of prefixes of traces of $\tsys$ and defined as 
\begin{align*}
	\phi_\con = \pi_\init \tracein Y_\prefixes \wedge \forall \pi_0 \in Y_\prefixes.\ \forall \pi_1 \in \unidisvar.\ 
	\psi_\step \rightarrow \pi_1 \tracein Y_\prefixes
\end{align*}
with
\[
\psi_\step = \phi_\prefix[\pi/\pi_1] \wedge 
	(\bigwedge_{\prop \in \ap} \prop_{\pi_0} \leftrightarrow \prop_{\pi_1}) \U(\#_{\pi_0} \wedge \X \#_{\pi_1}) \wedge
	\bigwedge_{\prop \in \ap} \F(\prop_{\pi_\vertices} \wedge \inprop_{\pi_1}) \leftrightarrow \F(\prop_{\pi_1}  \wedge \X\#_{\pi_1}) \wedge 
\phi_{\edge},
\]
where $\phi_\edge$ checks whether there is an edge in the encoded transition system~$\tsys$ between the vertex induced by $\pi_0$ and the vertex induced by $\pi_1$:
\begin{align*}
\phi_\edge = \exists \pi_m \in Y_\arith.\ \exists \pi_a\in Y_\arith.\
{}&{}
\mult_{\pi_m} \wedge
\F(\argone_{\pi_m} \wedge \inprop_{\pi_0}) \wedge
\F(\argtwo_{\pi_m} \wedge \#_{\pi_\vertices}) \wedge\\
{}&{}
\add_a \wedge
\F(\argone_{\pi_a} \wedge \res_{\pi_m} ) \wedge
\F(\argtwo_{\pi_a} \wedge \inprop_{\pi_1}) \wedge
\F(\res_{\pi_a} \wedge \inprop_{\pi_\edges})
\end{align*}
Intuitively, the until subformula of $\psi_\step$ ensures that the finite trace encoded by the interpretation of $\pi_0$ is a prefix of the finite trace encoded by the interpretation of $\pi_1$, which has one additional letter.
The equivalence~$\F(\prop_{\pi_\vertices} \wedge \inprop_{\pi_1}) \leftrightarrow \F(\prop_{\pi_1}  \wedge \X\#_{\pi_1})$ then ensures that the additional letter is the label of the vertex~$v$ induced by the interpretation of $\pi_1$: the left-hand side of the equivalence \myquot{looks up} the truth values of the label of $v$ in $\pi_\vertices$ and the right-hand side the truth values at the additional letter in $\pi_1$, which comes right before the $\#$.

Thus, the least fixed point induced by $\psi_\con$ contains exactly the encodings of the prefixes of traces of $\tsys$ as the prefix of length one is included and $\psi_\step$ extends the prefixes by one more letter.

The formula~$\rel(\phi)$ is defined by inductively replacing
\begin{itemize}
    \item each $\exists \pi \in \univar.\ \psi$ by $\exists \pi \in \unidisvar.\ \psi$ (as $\unidisvar$ contains, by construction, all traces over $\ap$), 
    
    \item each $\forall \pi \in \univar.\ \psi$ by $\forall \pi \in \unidisvar.\ \psi$, 
    
    \item each $\exists \pi \in \unidisvar.\ \psi$ by $\exists \pi \in \unidisvar.\ \psi_r \wedge \psi$, and 
    
    \item each $\forall \pi \in \unidisvar.\ \psi$ by $\forall \pi \in \unidisvar.\ \psi_r \rightarrow \psi$,
\end{itemize}
where $\psi_r$ expresses that the trace assigned to $\pi$ must be one of the transition system~$\tsys$:
\begin{align*}
\forall \pi' \in \unidisvar.\ \neg\#_{\pi'} \wedge ((\neg\#_{\pi'})\U(\#_{\pi'}\wedge \X\G\neg \#_{\pi'})) \rightarrow  \exists \pi_p\in Y_\prefixes.\ \F(\#_{\pi'} \wedge \#_{\pi_p}) \wedge (\bigwedge_{\prop\in\ap} \prop_{\pi} \leftrightarrow \prop_{\pi_p})\U\#_{\pi_p}
\end{align*}
Here, we use the fact that if all prefixes of a trace are in the prefixes of the traces of $\tsys$, then the trace itself is also one of $\tsys$. Thus, we can require for every $n > 0$, 
there is a trace in $Y_\prefixes$ encoding a prefix of length~$n$ of a trace of $\tsys$ such that $\pi$ has that prefix. 

To finish the proof, we need to ensure that $Y_\arith$ does indeed contain the traces implementing addition and multiplication, as described in the proof of Theorem~\ref{thm_satcomplexity_lfp_ss}: $\tsys'$ is a model of 
\[
\phi' = \exists \pi_\add \in \unidisvar.\ \exists \pi_\mult \in \unidisvar.\ \psi_s \wedge \exists(X_\arith, \smallest, \phi_\arith').\ \phi''
\]
if and only if $\phi$ is satisfied by a finite transition system with a single initial vertex.
\end{proof}

Now, we prove the lower bound for $\lfpsohyltlfp$ finite-state satisfiability.

\begin{lemma}
\label{lemma_fssat_soahard}
Truth in second-order arithmetic is polynomial-time reducible to $\lfpsohyltlfp$ finite-state satisfiability for $\univar$-free sentences.
\end{lemma}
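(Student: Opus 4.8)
The plan is to adapt the $\Sigma_1^2$-hardness construction from the proof of Theorem~\ref{thm_satcomplexity_lfp_ss}, but to work entirely $\univar$-free and to replace the role of $\univar$ there (which supplied \emph{all} traces, hence all subsets of $\nats$) by a model that a \emph{finite} transition system is forced to make rich. Given a sentence of second-order arithmetic $\phi = Q_1\mathcal{Z}_1 \cdots Q_m\mathcal{Z}_m.\ \psi$ with $\psi$ quantifier-free over $(+,\cdot,<,\in)$ and each $\mathcal{Z}_i$ of type~$0$ or type~$1$, I would construct in polynomial time a $\univar$-free $\lfpsohyltlfp$ sentence~$\phi'$ over $\ap' = \set{\inprop,\#}\cup\ap_\arith$ such that $\natsstruct\models\phi$ iff $\phi'$ is finite-state satisfiable. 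As in Theorem~\ref{thm_satcomplexity_lfp_ss}, a $\#$-free and $\ap_\arith$-free trace encodes the subset of $\nats$ marked by $\inprop$, a singleton-encoding trace (recognised by $(\neg\inprop_\pi)\U(\inprop_\pi\wedge\X\G\neg\inprop_\pi)$) encodes a natural number, and I would reuse the least fixed point interpreted by $T_\plustimes$ (with guard $\phi_\arith'$ and seeds $\pi_\add,\pi_\mult$, exactly as in Theorem~\ref{thm_satcomplexity_lfp_ss} and Lemma~\ref{lemma_fssat2mc}) to implement addition and multiplication. The matrix~$\psi$ would then be translated by a map~$\hyperize$ as in the proof of Theorem~\ref{thm_satcomplexity_lfp_ss}, except that every quantifier ranges over $\unidisvar$ rather than $\univar$ (restricted, via guards, to the appropriate set- or number-encoding traces) and there is no third-order/marker case to handle.

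The heart of the argument---and the only place that genuinely uses finite-state-ness rather than mere satisfiability---is forcing the model to contain a trace for \emph{every} subset of $\nats$, including the uncountably many infinite ones, while staying $\univar$-free. In plain $\univar$-free satisfiability one can only force countable models this way, which is precisely why that problem is only $\Sigma_1^1$ (recall Remark~\ref{remark_lfpcw}); the extra leverage here is that the trace set of a finite transition system is topologically closed. Concretely, I would build a least fixed point~$Y_\prefixes$ over $\#$-marked prefix-encoding traces (as in Lemma~\ref{lemma_fssat2mc}) whose seed is the empty prefix and whose step rule pulls from $\unidisvar$ both one-bit extensions of every prefix already collected, so that $Y_\prefixes$ collects exactly the finite binary prefixes realisable in the model. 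A completeness requirement in the body, asserting for every $\pi_0\in Y_\prefixes$ that both of its one-bit extensions lie in $\unidisvar$ (and hence in $Y_\prefixes$), together with the demand that each such prefix also be an $\inprop$-prefix of some $\#$-free model trace, forces that every finite binary string is an $\inprop$-prefix of some $\#$-free trace of the witnessing system. The key step is then a meta-level König's-Lemma argument: since the witness is a finite transition system, its $\#$-free traces are the traces of a finite sub-system, so realisability of every finite $\inprop$-prefix lifts, by König's Lemma, to realisability of every infinite $\inprop$-sequence. Thus the $\#$-free traces of any model of~$\phi'$ encode exactly all subsets of $\nats$.

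Correctness then splits as usual. For the forward direction, if $\natsstruct\models\phi$, I would exhibit the full transition system over $\ap'$ (of constant size $2^{\size{\ap'}}$, since $\ap$ is fixed) as a witness: it is rich, contains $T_\plustimes$ and all prefix-encodings, so all fixed points evaluate to their intended values and $\hyperize(\phi)$ holds because trace quantification over $\unidisvar$ now coincides with quantification over all subsets and all numbers. For the converse, any finite transition system satisfying~$\phi'$ is forced by the richness gadget to realise all subsets of $\nats$ and by the $T_\plustimes$-gadget to compute addition and multiplication correctly, whence $\hyperize(\phi)$ evaluated over its traces is equivalent to $\natsstruct\models\phi$. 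Since all gadgets are of polynomial size and $\hyperize$ is a polynomial-time syntactic translation, the reduction is polynomial. I expect the main obstacle to be the precise formulation of the richness gadget and the verification that the König's-Lemma step applies to the $\#$-free sub-system---in particular, arranging the $\#$-marked prefixes and the $\#$-free witnesses so that realisability of all finite prefixes provably lifts to realisability of all infinite traces, which is exactly the feature that raises the complexity from $\Sigma_1^1$ to truth in second-order arithmetic.
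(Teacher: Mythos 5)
Your proposal is correct and follows essentially the same route as the paper: force every finite word over $\pow{\set{\inprop}}$ to appear as a prefix of a model trace, invoke the topological closedness of the trace set of a finite transition system to conclude that every infinite $\inprop$-sequence (hence every subset of $\nats$) is realised, and then reuse the $T_\plustimes$ gadget and $\hyperize$ from Theorem~\ref{thm_satcomplexity_lfp_ss}. The only (harmless) difference is that you route the prefix-forcing through an auxiliary least fixed point~$Y_\prefixes$, whereas the paper's $\phi_\prefs$ achieves the same effect with plain $\forall\pi\exists\pi'$ trace quantification over $\unidisvar$, since the fixed point by itself cannot force traces into the model anyway.
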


\begin{proof}
Let $\ap = \set{\inprop,\#}$.
We begin by presenting a (satisfiable) $\lfpsohyltlfp$ sentence~$\phi_\prefs$ that ensures that the set of prefixes of the $\set{\inprop}$-projection of each of its models is equal to $(\pow{\set{\inprop}})^*$.
To this end, consider the conjunction~$\phi_\prefs$ of the following formulas:
\begin{itemize}
    \item $\forall \pi \in \unidisvar.\ (\neg\#_\pi\U(\#_\pi \wedge \X\G\neg \#_\pi))$, which expresses that each trace in the model has a unique position where the proposition~$\#$ holds.
    \item $\exists \pi \in \unidisvar.\  \#_{\pi}$, which expresses that each model contains a trace where $\#$ holds at the first position.
    \item $\forall \pi \in \unidisvar.\ \exists \pi' \in \unidisvar.\ (\inprop_\pi \leftrightarrow \inprop_{\pi'}) \U (\#_{\pi} \wedge \neg\inprop_{\pi'} \wedge \X\#_{\pi'} )$ expressing that for every trace~$t$ in the model there is another trace~$t'$ in the model such that the $\set{\inprop}$-projection~$w$ of $t$ up to the $\#$ and the $\set{\inprop}$-projection~$w'$ of $t'$ up to the $\#$ satisfy $w' = w\emptyset$.
    \item $\forall \pi \in \unidisvar.\ \exists \pi' \in \unidisvar.\ (\inprop_\pi \leftrightarrow \inprop_{\pi'}) \U (\#_{\pi} \wedge \inprop_{\pi'} \wedge \X\#_{\pi'} )$ expressing that for every trace~$t$ in the model there is another trace~$t'$ in the model such that the $\set{\inprop}$-projection~$w$ of $t$ up to the $\#$ and the $\set{\inprop}$-projection~$w'$ of $t'$ up to the $\#$ satisfy $w' = w\set{\inprop}$.   
\end{itemize}

A straightforward induction shows that the set of prefixes of the $\set{\inprop}$-projection of each model of $\phi_\prefs$ is equal to $(\pow{\set{\inprop}})^*$.
Furthermore, let $\tsys$ be a finite transition system that is a model of $\phi_\prefs$, i.e., with $\traces(\tsys) \models \phi_\prefs$. 
Then, the $\set{\inprop}$-projection of $\traces(\tsys)$ must be equal to $(\pow{\set{\inprop}})^\omega$, which follows from the fact that the set of traces of a transition system is closed (see, e.g.,~\cite{BK08} for the necessary definitions).
Thus, as usual, we can mimic set quantification over $\nats$ by trace quantification. 

Furthermore, recall that, as in the proof of Theorem~\ref{thm_satcomplexity_lfp_ss}, we can implement addition and multiplication in $\lfpsohyltlfp$: there is an $\lfpsohyltlfp$ guard for a second-order quantifier such that the $\set{\argone, \argtwo, \res, \add, \mult}$-projection of the unique least fixed point that satisfies the guard is equal to $T_\plustimes$.

Thus, we can mimic set quantification over $\nats$ and implement addition and multiplication, which allow us to reduce truth in second-order arithmetic to finite-state satisfiability for $\univar$-free sentences using the function~$\hyperize$ presented in the proof of Theorem~\ref{thm_satcomplexity_lfp_ss}:
\[
\phi' = \phi_\prefs \wedge \exists \pi_\add \in \unidisvar.\ \exists \pi_\mult \in \unidisvar.\ \psi_s \wedge \exists(X_\arith, \smallest, \phi_\arith').\ \hyperize(\phi)
\]
is finite-state satisfiable if and only if $\natsstruct\models\phi$.
\end{proof}

This result, i.e., that finite-state satisfiability for $\univar$-free $\lfpsohyltlfp$  sentences is at least as hard as truth in second-order arithmetic, should be contrasted with the general satisfiability problem for $\univar$-free $\lfpsohyltlfp$ sentences, which is \myquot{only} $\Sigma_1^1$-complete~\cite{sohypercomplexity}, i.e., much simpler. 
The reason is that every satisfiable $\lfpsohyltlfp$ sentence has a countable model (i.e., a countable set of traces). 
This is even true for the formula~$\phi_\prefs$ we have constructed. However, every \emph{finite-state} transition system that satisfies the formula must have uncountably many traces. 
This fact allows us to mimic quantification over arbitrary subsets of $\nats$, which is not possible in a countable model. 
Thus, the general satisfiability problem is simpler than the finite-state satisfiability problem.

Finally, we prove the upper bound for $\lfpsohyltlfp$ model-checking.

\begin{lemma}
\label{lemma_mc_soaeasy}
$\lfpsohyltlfp$ model-checking is polynomial-time reducible to truth in second-order arithmetic. 
\end{lemma}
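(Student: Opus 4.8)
The plan is to adapt the $\Sigma_1^2$ construction from the proof of Theorem~\ref{thm_satcomplexity_lfp_ss}, exploiting the crucial difference that in model-checking the model is fixed: the universe of discourse $\unidisvar$ is interpreted by $\traces(\tsys)$, which for a finite $\tsys$ is a \emph{definable} set of traces rather than an arbitrary, possibly uncountable, one. Concretely, given an $\lfpsohyltlfp$ sentence $\phi$ and a finite transition system $\tsys$, I would produce in polynomial time a sentence $\Theta$ of second-order arithmetic such that $\tsys \models \phi$ if and only if $\natsstruct \models \Theta$, reusing the trace encoding $S_t$, the pairing function $\pair$, the predicate $\phi_\istrace$, the expansion functions, and the witness-tree machinery of the previous proof. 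The gain over the satisfiability construction is that the model no longer needs to be existentially quantified as a type~$2$ object: since $\tsys=(V,E,I,\lambda)$ is given, I can define a second-order arithmetic predicate $\istraceof_\tsys(Y)$ expressing that $Y$ encodes some $t \in \traces(\tsys)$, namely that there exists a type~$1$ object $\rho$ encoding a function $\nats \to V$ (via its graph and $\pair$; this is well-defined as $V$ is finite) such that $\rho$ is a legal path through $\tsys$ and $Y$ encodes $\lambda(\rho)$. Here $\tsys$ enters only as finitely many constants, so this costs only polynomial size, and $\istraceof_\tsys$ may be used under either polarity.

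First, I would specialize the witness-tree formula to this fixed model. Recall that $\psi_{\hastree}(X_D,A,X_{t^*},j^*)$ has a free third-order variable $X_D$ encoding the universe of discourse. Replacing every membership test of the form ``$t \in X_D$'' occurring in its definition, and in the definition of witness trees (i.e.\ the $Z_i = \unidisvar$ and the leaf-$d$ cases), by the definable predicate $\istraceof_\tsys$ yields a purely second-order formula $\psi_{\hastree}^{\tsys}(A,X_{t^*},j^*)$, with no third-order variable, which holds iff there is a witness tree for $(t^*,j^*)$ relative to the assignment encoded by $A$ and the model $\traces(\tsys)$. All objects it quantifies---the tree itself (type~$1$) and the expansions of the step formulas $\psi_{j'}^\step$ (type~$1$)---remain of type at most~$1$.

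Next I would translate $\phi$ by structural recursion over its quantifier prefix, threading through a second-order variable $A$ that encodes the current assignment of the (finitely many) trace variables bound so far, exactly as before. Each trace quantifier becomes a type~$1$ quantifier over a set $Y$ guarded by $\phi_\istrace(Y)$: for $\pi \in \univar$ no further restriction is added; for $\pi \in \unidisvar$ I add the conjunct $\istraceof_\tsys(Y)$ (as a guard implication in the universal case); and for $\pi \in Y_j$ I add $\psi_{\hastree}^{\tsys}(A,Y,j)$, which certifies membership in $\lfp(\Pi,j)$ on the fly by the witness-tree characterization (the proposition above), so that the fixed point is never materialized. The quantifier-free matrix $\psi$ is handled as in the previous proof: I existentially quantify its expansion $e$ (a type~$1$ object), impose the five recursive defining equations of the expansion, and require $e(\psi,0)=1$. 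Since every quantified object is of type $0$ or $1$, the resulting $\Theta$ is a sentence of second-order arithmetic, and the construction is plainly polynomial in $\size{\phi}$ and the size of $\tsys$.

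The main obstacle is the same as in Theorem~\ref{thm_satcomplexity_lfp_ss}: handling the least-fixed-point second-order quantifiers without leaving second-order arithmetic. Materializing each $Y_j$ as ``the least set satisfying its guard'' would require a universal quantifier over sets of traces, i.e.\ over type~$2$ objects, which would break the bound. The resolution is again to replace this by the on-the-fly, type~$1$ witness-tree test $\psi_{\hastree}^{\tsys}$. The remaining work---checking that $\istraceof_\tsys$ correctly captures $\traces(\tsys)$ and that the recursive translation preserves satisfaction, by induction over the quantifier prefix---is routine and left to the reader.
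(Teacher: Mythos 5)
Your proposal is correct and follows essentially the same route as the paper's proof: encode paths and traces of the fixed $\tsys$ as type~$1$ objects, replace the third-order model variable in the witness-tree formula by a definable predicate for $\traces(\tsys)$, drop the fixed-point quantifiers in favour of on-the-fly witness-tree tests, and thread an encoded variable assignment through the recursive translation. The only (immaterial) difference is that you arithmetize the quantifier-free matrix by quantifying its expansion function, whereas the paper translates the Boolean and temporal operators directly into first-order formulas over positions.
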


\begin{proof}
As done before, we will mimic trace quantification via quantification of sets of natural numbers and capture the temporal operators using addition. 
To capture the \myquot{quantification} of fixed points, we use again witness trees. 
Recall that these depend on a variable assignment of the trace variables. 
Thus, in our construction, we need to explicitly handle such an assignment as well (encoded by a set of natural numbers) in order to be able to correctly apply witness trees.

Let $\tsys = (V,E,I,\lambda)$ be a finite transition system. 
We assume without loss of generality that $V = \set{0,1,\ldots, n}$ for some $n \ge 0$.
Recall that $\pair \colon \nats\times\nats \rightarrow\nats$ denotes Cantor's pairing function defined as $\pair(i,j) = \frac{1}{2}(i+j)(i+j+1) +j$, which is a bijection and implementable in arithmetic.
We encode a path~$\rho(0)\rho(1)\rho(2)\cdots$ through $\tsys$ by the set~$\set{\pair(j,\rho(j)) \mid j \in\nats} \subseteq \nats$.
Not every set encodes a path, but the first-order formula
\begin{align*}
\phi_\ispath(Y) = {}&{} \forall x.\ \forall y.\ y > n \rightarrow \pair(x,y) \notin Y \wedge \\
{}&{} \forall x.\ \forall y_0.\ \forall y_1.\ (\pair(x,y_0) \in Y \wedge
 \pair(x,y_1) \in Y) \rightarrow y_0=y_1  \wedge  \\
{}&{} \bigvee_{v \in I} \pair(0,v) \in Y \wedge \\
{}&{} \forall j.\ \bigvee_{(v,v') \in E} \pair(j,v) \in Y \wedge \pair(j+1,v') \in Y
\end{align*}
checks if a set does encode a path of $\tsys$.

Furthermore, fix some bijection~$e \colon \ap \rightarrow\set{0,1,\ldots,\size{\ap}-1}$.
As before, we encode a trace~$t \in (\pow{\ap})^\omega$ by the set~$S_t =\set{\pair(j,e(\prop)) \mid j \in \nats \text{ and } \prop \in t(j)} \subseteq \nats$.
While not every subset of $\nats$ encodes some trace~$t$, the first-order formula 
\[\phi_\istrace(Y) = \forall x.\ \forall y.\ y \ge \size{\ap} \rightarrow \pair(x,y) \notin Y \] checks if a set does encode a trace.

Finally, the first-order formula
\[
\phi_{\tsys}(Y) = \exists Y_p.\ \ispath(Y_p) \wedge \forall j.\ \bigwedge_{\prop\in\ap} \left( \pair(j,e(\prop))\in Y \leftrightarrow \bigvee_{v\colon \prop \in \lambda(v)} \pair(j,v) \in Y_p\right)
\]
checks whether the set~$Y$ encodes the trace of some path through $\tsys$.

As in the proof of Theorem~\ref{thm_satcomplexity_lfp_ss}, we need to encode variable assignments (whose domain is restricted to trace variables) via sets of natural numbers. 
Using this encoding, one can \myquot{update} encoded assignments, i.e., there exists a formula~$\phi_{\update}^\pi(A,A',Y)$ that is satisfied if and only if
\begin{itemize}
    \item the set~$A$ encodes a variable assignment~$\Pi$,
    \item the set~$A'$ encodes a variable assignment~$\Pi'$,
    \item the set~$Y$ encodes a trace~$t$, and
    \item $\Pi[\pi\mapsto t] = \Pi'$.
\end{itemize}

Now, we inductively translate an $\lfpsohyltlfp$ sentence~$\phi$ into a formula~$\arithmetize_\tsys(\phi)$ of second-order arithmetic.
This formula has two free variables, one first-order one and one second-order one.
The former encodes the position at which the formula is evaluated while the latter one encodes a variable assignment (which, as explained above, is necessary to give context for the witness trees).  
We construct $\arithmetize_\tsys(\phi)$ such that 
$\tsys \models \phi$ if and only if $\natsstruct\models (\arithmetize_\tsys(\phi))(0,\emptyset)$, where the empty set encodes the empty variable assignment.

\begin{itemize}
    
    \item $\arithmetize_\tsys(\exists (Y,\smallest,\psi_j^\con).\ \psi) = \arithmetize_\tsys(\psi)$. Here, the free variables of $\arithmetize_\tsys(\exists (Y,\smallest,\psi_j^\con).\ \psi)$ are the free variables of $\arithmetize_\tsys(\psi)$. Thus, we ignore quantification over least fixed points, as we instead use witness trees to check membership in these fixed points.
    
    \item $\arithmetize_\tsys(\forall (Y,\smallest,\psi_j^\con).\ \psi) = \arithmetize_\tsys(\psi)$. Here, the free variables of $\arithmetize_\tsys(\forall (Y,\smallest,\psi_j^\con).\ \psi)$ are the free variables of $\arithmetize_\tsys(\psi)$. 
    
    \item $\arithmetize_\tsys(\exists\pi\in \univar.\ \psi) = \exists Z_\pi.\  \exists A'.\ \phi_{\istrace}(Z_\pi) \wedge \phi_\update^\pi(A,A',Z_\pi) \wedge \arithmetize_\tsys(\psi)$. 
    Here, the free second-order variable of $\arithmetize_\tsys(\exists\pi\in \univar.\ \psi)$ is $A$ while $A'$ is the free second-order variable of $\arithmetize_\tsys(\psi)$. The free first-order variable of $\arithmetize_\tsys(\exists\pi\in \univar.\ \psi)$ is the free first-order variable of $\arithmetize_\tsys(\psi)$.
    
    \item $\arithmetize_\tsys(\forall\pi\in \univar.\ \psi) = \forall Z_\pi.\ \forall A'.\ (\phi_{\istrace}(Z_\pi) \wedge \phi_\update^\pi(A,A',Z_\pi) )\rightarrow \arithmetize_\tsys(\psi)$. 
    Here, the free second-order variable of $\arithmetize_\tsys(\forall\pi\in \univar.\ \psi)$ is $A$ while $A'$ is the free second-order variable of $\arithmetize_\tsys(\psi)$. The free first-order variable of $\arithmetize_\tsys(\forall\pi\in \univar.\ \psi)$ is the free first-order variable of $\arithmetize_\tsys(\psi)$.

    \item $\arithmetize_\tsys(\exists\pi\in \unidisvar.\ \psi) = \exists Z_\pi.\  \exists A'.\ \phi_\tsys(Z_\pi) \wedge \phi_\update^\pi(A,A',Z_\pi) \wedge \arithmetize_\tsys(\psi)$. 
    Here, the free second-order variable of $\arithmetize_\tsys(\exists\pi\in \unidisvar.\ \psi)$ is $A$ while $A'$ is the free second-order variable of $\arithmetize_\tsys(\psi)$. The free first-order variable of $\arithmetize_\tsys(\exists\pi\in \unidisvar.\ \psi)$ is the free first-order variable of $\arithmetize_\tsys(\psi)$.
    
    \item $\arithmetize_\tsys(\forall\pi\in \unidisvar.\ \psi) = \forall Z_\pi.\ \forall A'.\ (\phi_\tsys(Z_\pi) \wedge \phi_\update^\pi(A,A',Z_\pi) )\rightarrow \arithmetize_\tsys(\psi)$. 
    Here, the free second-order variable of $\arithmetize_\tsys(\forall\pi\in \unidisvar.\ \psi)$ is $A$ while $A'$ is the free second-order variable of $\arithmetize_\tsys(\psi)$. The free first-order variable of $\arithmetize_\tsys(\forall\pi\in \unidisvar.\ \psi)$ is the free first-order variable of $\arithmetize_\tsys(\psi)$.

    \item $\arithmetize_\tsys(\exists\pi\in Y_j.\ \psi) = \exists Z_\pi.\ \exists A'.\ \phi_\hastree(A,Z_\pi,j) \wedge \phi_\update^\pi(A,A',Z_\pi) \wedge \arithmetize_\tsys(\psi)$, where $\phi_\hastree(A,Z_\pi,j)$ is a formula of second-order arithmetic that captures the existence of a witness tree for the trace being encoded by $Z_\pi$ being in the fixed point assigned to $Y_j$ w.r.t.\ the variable assignment encoded by $A$. Its construction is similar to the corresponding formula in the proof of Theorem~\ref{thm_satcomplexity_lfp_ss}, but we replace the free third-order variable~$X_D$ encoding the model there by hardcoding the set of traces of $\tsys$ using the formula~$\phi_\tsys$ from above.
   
    Here, the free second-order variable of $\arithmetize_\tsys(\exists\pi\in Y_j.\ \psi)$ is $A$ while $A'$ is the free second-order variable of $\arithmetize_\tsys(\psi)$. The free first-order variable of $\arithmetize_\tsys(\exists\pi\in Y_j.\ \psi)$ is the free first-order variable of $\arithmetize_\tsys(\psi)$.

    \item $\arithmetize_\tsys(\forall\pi\in Y_j.\ \psi) = \forall Z_\pi.\ \forall A'.\ (\phi_\hastree(A,Z_\pi,j) \wedge \phi_\update^\pi(A,A',Z_\pi) )\rightarrow \arithmetize_\tsys(\psi)$.
    Here, the free second-order variable of $\arithmetize_\tsys(\forall\pi\in Y_j.\ \psi)$ is $A$ while $A'$ is the free second-order variable of $\arithmetize_\tsys(\psi)$. The free first-order variable of $\arithmetize_\tsys(\forall\pi\in Y_j.\ \psi)$ is the free first-order variable of $\arithmetize_\tsys(\psi)$.

    \item $\arithmetize_\tsys(\psi_1 \vee \psi_2) = \arithmetize_\tsys(\psi_1) \vee \arithmetize_\tsys(\psi_2)$. Here, we require that the free variables of $\arithmetize_\tsys(\psi_1)$ and $\arithmetize_\tsys(\psi_2)$ are the same (which can always be achieved by variable renaming), which are then also the free variables of $\arithmetize_\tsys(\psi_1 \vee \psi_2)$. 
    
    \item $\arithmetize_\tsys(\neg\psi) = \neg\arithmetize_\tsys(\psi)$. Here, the free variables of $\arithmetize_\tsys(\neg\psi) $ are the free variables of $ \arithmetize_\tsys(\psi)$.
    
     \item $\arithmetize_\tsys(\X\psi) =  \exists i' (i' = i+1) \wedge \arithmetize_\tsys(\psi)$, where $i'$ is the free first-order variable of $\arithmetize_\tsys(\psi)$ and $i$ is the free first-order variable of $\arithmetize_\tsys(\X\psi)$.
     The free second-order variable of $\arithmetize_\tsys(\X\psi)$ is equal to the free second-order variable of $\arithmetize_\tsys(\psi)$.
    
    \item $\arithmetize_\tsys(\psi_1\U\psi_2) =  \exists i_2.\ i_2 \ge i \wedge \arithmetize_\tsys(\psi_2) \wedge \forall i_1.\ (i \le i_1 \wedge i_1 < i_2) \rightarrow \arithmetize_\tsys(\psi_1)$, where $i_j$ is the free first-order variable of $\arithmetize_\tsys(\psi_j)$, %$i_2$ is the free variable of $\arithmetize_\tsys(\psi_2)$, 
    and $i$ is the free first-order variable of $\arithmetize_\tsys(\psi_1\U\psi_2)$.
    Furthermore, we require that the free second-order variables of the $\arithmetize_\tsys(\psi_j)$ are the same, which is then also the free second-order variable of $\arithmetize_\tsys(\psi_1\U\psi_2)$. Again, this can be achieved by renaming variables.

    \item $\arithmetize_\tsys(\prop_\pi) = \pair(i,e(\prop)) \in Z_\pi$, i.e., $i$ is the free first-order variable of $\arithmetize_\tsys(\prop_\pi)$. Note that this formula does not have a free second-order variable. For completeness, we can select an arbitrary one to serve that purpose.
    
\end{itemize}
Now, an induction shows that $\tsys\models \phi$ if and only if $\natsstruct$ satisfies $(\arithmetize(\phi))(0,\emptyset)$, where $\emptyset$ again encodes the empty variable assignment.
\end{proof}

\bibliographystyle{plain}
\bibliography{bib}

\end{document}